 \numberwithin{equation}{section}
\newtheorem{thm}{Theorem}[section]
\newtheorem{lem}{Lemma}[section]
\newtheorem{cor}{Corollary}[section]
\newtheorem{pro}{Proposition}[section]
\theoremstyle{definition}
\theoremstyle{remark}
\begin{document}\numberwithin{equation}{section}
\title[The resolution of  Euclidean massless field operators  and the $L^2$ method]
{The resolution of  Euclidean massless field operators of higher spins on $\Bbb R^6$ and the $L^2$ method}
\author{Qianqian Kang}

\address[Qianqian Kang]{Department of Mathematics, Zhejiang International Studies University,  Hangzhou 310012, PR China}\email{qqkang@zisu.edu.cn}

\author{Wei Wang}

\address[Wei Wang]{Department of Mathematics, Zhejiang University, Zhejiang 310027, PR China}\email{wwang@zju.edu.cn}

\author{Yuchen Zhang}

\address[Yuchen Zhang]{Department  of Mathematics, University of Science and Technology of China,  Hefei 230026, PR China}\email{yuchen95@mail.ustc.edu.cn}

 \begin{abstract}
   The resolution of $4$-dimensional  massless field operators of higher spins  was constructed by Eastwood-Penrose-Wells by using the twistor method. Recently physicists are interested in $6$-dimensional physics including the massless field operators of higher spins on Lorentzian space $\Bbb R^{5,1}$. Its Euclidean version $\mathscr{D}_0$ and their function theory are discussed in   \cite{wangkang3}. In this paper, we construct an exact sequence of Hilbert spaces as weighted $L^2$ spaces    resolving $\mathscr{D}_0$:
    $$L^2_\varphi(\Bbb R^6, \mathscr{V}_0)\overset{\mathscr{D}_0}\longrightarrow L^2_\varphi(\Bbb R^6,
\mathscr{V}_1)\overset{\mathscr{D}_1}\longrightarrow L^2_\varphi(\Bbb R^6, \mathscr{V}_2)\overset{\mathscr{D}_2}\longrightarrow
L^2_\varphi(\Bbb R^6, \mathscr{V}_3)\longrightarrow 0,$$
     with suitable operators $\mathscr{D}_l$ and vector spaces $\mathscr{V}_l$. Namely, we can solve $\mathscr{D}_{l}u=f$ in $L^2_\varphi(\Bbb R^6, \mathscr{V}_{l})$  when $\mathscr{D}_{l+1} f=0$ for $f\in L^2_{\varphi}(\Bbb R^6, \mathscr{V}_{l+1})$. This is proved by using the $L^2$ method in the theory of several complex variables, which is a general framework to solve  overdetermined PDEs  under the compatibility condition. To apply this method here, it is necessary to  consider   weighted $L^2$ spaces, an advantage of which  is that any polynomial   is $L^2_{\varphi}$ integrable.  As a corollary, we prove that
$$
        P(\Bbb R^6, \mathscr{V}_0)\overset{\mathscr{D}_0}\longrightarrow P(\Bbb R^6,\mathscr{V}_1)\overset{\mathscr{D}_1}\longrightarrow P(\Bbb R^6, \mathscr{V}_2)\overset{\mathscr{D}_2}\longrightarrow
P(\Bbb R^6, \mathscr{V}_3)\longrightarrow 0$$
       is a resolution, where $P(\Bbb R^6, \mathscr{V}_l)$  is the space of  all $\mathscr{V}_l$-valued polynomials. This provides an analytic way to construct a resolution of a differential operator acting on vector valued polynomials.
 \end{abstract}
\keywords{resolution; Euclidean massless field operator of high spins; the $L^2$ method; overdetermined PDEs; the compatibility condition; differential  complexes.}
   \thanks{
The first author is partially supported by National Nature Science Foundation in China (No. 11801523) and the foundation of Zhejiang International Studies University (No. BD2019B9). The second and third  authors are partially supported by National Nature Science Foundation in China (No. 11971425)}
\maketitle
\section{Introduction}

The resolution of massless field operators  of higher spins  over the complexified Minkowski space $\Bbb C^4$ was constructed by Eastwood-Penrose-Wells  \cite{Eastwood} by using twistor method. The Euclidean version of massless field operator of spin $k/2$ is also called $k$-Cauchy-Fueter operator    (cf. \cite{CMW} \cite{Wang2018} and references therein). Recently physicists are interested in $6$-dimensional physics including the massless field operators of higher spins on Lorentzian space $\Bbb R^{5,1}$ (cf. \cite{Mason,Samann} and references there in). The Euclidean version of these operators are
 \begin{equation*}
 \begin{aligned}
 \mathcal{D}_0 : C^{\infty}(\mathbb R^{6},\odot^{k} \mathbb C^{4}) &\longrightarrow
 C^{\infty}(\mathbb R^{6}, \mathbb C^{4}\otimes \odot^{k-1} \mathbb C^{4}),
 \end{aligned}
 \end{equation*}
 $k=1,2,\ldots$, where $\odot^{p} \mathbb C^{4}$ is  $p$-th symmetric power  of $  \Bbb C^{4}$.
 A $\odot^{k}\Bbb C^{4}$-valued distribution $f$ is called \emph{$k$-monogenic} if it satisfies
$ \mathcal{D}_0 f=0$. In \cite{wangkang3}, we   proved various properties for $k$-monogenic functions, e.g. the existence of  infinite number of $k$-monogenic polynomials.
In order to study $k$-monogenic functions, we need to solve the nonhomogeneous equation
$\mathcal{D}_0 u=f,$
which is overdetermined for $k>1$.  So we need to find the compatibility condition for solvability, and more generally a resolution of $\mathscr{D}_0$. Motivated by   $4$-dimensional and the quaternionic cases (c.f. \cite{Baston,Bures,Colombo,Wang2010} and references therein), a natural candidate of the resolution is
\begin{equation}\label{eq:elliptic complex0}C^\infty (\mathbb{R}^6,V_0)\overset{\mathcal{D}_0 }\longrightarrow C^\infty
(\mathbb{R}^6,V_1)\overset{\mathcal{D}_1 }\longrightarrow C^\infty
(\mathbb{R}^6,V_2)\overset{\mathcal{D}_2 }\longrightarrow C^\infty (\mathbb{R}^6,V_3)\overset{\mathcal{D}_3 }\longrightarrow C^\infty(\mathbb{R}^6,V_4)\longrightarrow 0 \end{equation}
with
 \begin{equation}\label{def:Vl}
V_l:=\odot^{k-l}{\mathbb{C}^4}\otimes\wedge^l\mathbb{C}^4,
\end{equation}
  when  $k\geq 4$. But it is already known \cite[Section 1]{wangkang3} that the image of $\mathcal{D}_0 $ consists of functions only valued in a subspace of $V_1$, the kernel of the {\it contraction}   given by
\begin{equation}\label{def:contraction0}
 \mathscr{C}: \odot^{p}\mathbb{C}^4\otimes\wedge^{q}\mathbb{C}^4\rightarrow\odot^{p-1}\mathbb{C}^4\otimes\wedge^{q-1}\mathbb{C}^4
  \end{equation}
with $p=k-1, q=1$. Denote
 \begin{equation}\label{def:vl}
 \mathscr{V}_l:=\ker \mathscr{C} \mid_{V_l}.
 \end{equation} Then
 $\mathscr{V}_4=\{0\}$ automatically. Denote by $\mathscr D_{l}$ the restriction of $\mathcal{D}_l$ in (\ref{eq:elliptic complex0}) to $C^{\infty}(\Bbb R^6, \mathscr{V}_l)$. We construct the following differential complex:
\begin{equation}\label{elliptic complex}
0\longrightarrow C^\infty (\mathbb{R}^6,\mathscr{V}_0)\overset{\mathscr D_{0}}\longrightarrow C^\infty
(\mathbb{R}^6,\mathscr{V}_1)\overset{\mathscr D_1}\longrightarrow C^\infty
(\mathbb{R}^6,\mathscr{V}_2)\overset{\mathscr D_2}\longrightarrow C^\infty (\mathbb{R}^6,\mathscr{V}_3)\longrightarrow
0,
\end{equation}
and call it \emph{$k$-monogenic complex}, $k=4,5,\ldots$. Note that $\Bbb C^4$ is the spin representation of $\mathfrak{so}(6,\mathbb{C})$ and  $\mathscr{V}_l$ as the contraction of $V_l$ is an irreducible representation of $\mathfrak{so}(6,\mathbb{C})$ (cf. \cite{rep}).

The $L^2$ method is a powerful method to solve $\overline{\partial}$-equation in the theory of several complex variables (cf. e.g. \cite{ChenShaw,Hormander1,Hormander3}). In fact, it is a general framework to solve overdetermined PDEs under the compatibility condition, which is also given by a system of PDEs. The main difficulty to use this method is to prove the corresponding  $L^2$ estimate. It was applied to the $k$-Cauchy-Fueter complex over $\Bbb R^{4n}$ in \cite{Wang2017} and also the Neumann problem associated to the $k$-Cauchy-Fueter complex on $k$-pseudoconvex domains in $\Bbb R^4$ \cite{Wang2018}. The latter case is restricted to dimension $4$ because only over $\Bbb R^4$ the corresponding $L^2$ estimates was proved.

In this paper we consider the weighted $L^2$ estimate of the $k$-monogenic complex as in \cite{Wang2017}. We define an inner product $\langle  \cdot, \cdot\rangle$ on $V_l$ and $\mathscr{V}_l$  induced from $\otimes^{k} \Bbb C^4$. Let $L^2_\varphi(\Bbb R^6, \mathscr{V}_l)$ be the Hilbert space of $\mathscr{V}_l$-valued $L^2_\varphi$-integrable functions on $\Bbb R^6$ with weighted inner product
\begin{equation}
\langle f,h\rangle_\varphi :=\int_{\Bbb R^6}\left\langle f(x),h(x)\right\rangle e^{-\varphi}dx,
\end{equation}with  weight $\varphi=\left|x \right|^2$.
Denote weighted norm
$ \left\|f\right\|_\varphi:=\langle f,f\rangle_\varphi^{\frac{1}{2}} $.

  $Dom (\mathscr{D}_l)$ consists of $f\in L^2_\varphi(\Bbb R^6, \mathscr{V}_l)$ such that   $\mathscr{D}_lf=u$ in the weak sense  for some $u\in L^2_\varphi(\Bbb R^6, \mathscr{V}_{l+1})$, i.e.
 \begin{equation}\label{eq:Dom}
    \langle u, g\rangle_\varphi=\langle f,\Theta_l g\rangle_\varphi
 \end{equation}
 for any $C_0^\infty(\Bbb R^6, \mathscr{V}_{l+1})$, where $\Theta_l $ is the formal adjoint of  $\mathscr{D}_l$. The differential operator
  $\mathscr{D}_l$  defines a linear, closed, densely defined operator from $L^2_\varphi(\Bbb R^6, \mathscr{V}_l)$ to
 $L^2_\varphi(\Bbb R^6, \mathscr{V}_{l+1})$,  which we also denote by $\mathscr{D}_l$.
 Denote by $\mathscr{D}_l^\ast$ the adjoint operator of $\mathscr{D}_l$ between Hilbert spaces $L^2_\varphi(\Bbb R^6, \mathscr{V}_{l+1})$ and $L^2_\varphi(\Bbb R^6, \mathscr{V}_{l})$.
 The sequence
\begin{equation}\label{L2-complex}
\begin{aligned}
 L^2_\varphi(\Bbb R^6, \mathscr{V}_0)\overset{\mathscr{D}_0}\longrightarrow L^2_\varphi(\Bbb R^6,
\mathscr{V}_1)\overset{\mathscr{D}_1}\longrightarrow L^2_\varphi(\Bbb R^6, \mathscr{V}_2)\overset{\mathscr{D}_2}\longrightarrow
L^2_\varphi(\Bbb R^6, \mathscr{V}_3)\longrightarrow 0,
\end{aligned}
\end{equation}
is a complex of Hilbert spaces, i.e., for any $u\in Dom(\mathscr{D}_{l})$ and $\mathscr{D}_{l}u \in Dom(\mathscr{D}_{l+1})$, we have $\mathscr{D}_{l+1}\mathscr{D}_{l}u=0$. To find   solution  to the equation
\begin{equation}\label{equ1:Dg=f}
\mathscr{D}_{l}u=f,
\end{equation}
for $f\in L^2(\Bbb R^6,\mathscr{V}_{l+1})$ satisfying
\begin{equation}\label{equ2:Dg=f}
\mathscr{D}_{l+1}f=0,
\end{equation}
  we consider the \emph{associated Hodge Laplacian operator} $\Box_l: L^2_\varphi(\Bbb R^6, \mathscr{V}_l)\longrightarrow L^2_\varphi(\Bbb R^6, \mathscr{V}_l)$ given by
  \begin{equation}
\Box_l:=\mathscr{D}_{l-1}\mathscr{D}_{l-1}^\ast+\mathscr{D}_{l}^\ast\mathscr{D}_{l},\qquad l=1,2,\quad {\rm and }\quad\Box_3:=\mathscr{D}_{2}\mathscr{D}_{2}^\ast.
\end{equation}

\begin{thm}\label{Thm-solution}
Suppose that $\varphi(x)=\left|x\right|^2$ and $k=6,7,\ldots$.  There exists a constant $C>0$ only depending on $k$ such that\\
$(1)$\,\,\,\,$\Box_l$ has a bounded self-adjoint inverse $N_l$ such that
\begin{equation}
\left\|N_lf\right\|_\varphi\leq C \left\|f\right\|_\varphi \,\,\,\,\,\,\,\,\text{for}\,\,\text{any}\,\,f\in L^2_\varphi(\Bbb R^6,\mathscr{V}_{l}).
\end{equation}
$(2)$ \,\,\,\,$\mathscr{D}_{l}^\ast N_{l+1}f$ is the canonical solution to the nonhomogeneous equation \eqref{equ1:Dg=f}-\eqref{equ2:Dg=f}, i.e.
$$\mathscr{D}_{l}\big(\mathscr{D}_{l}^{\ast} N_{l+1}f\big)=f,$$
if $\mathscr{D}_{l+1}f=0$,  and $\mathscr{D}_{l}^{\ast} N_{l+1}f$ orthogonal to $\ker\mathscr{D}_{l}$. Moreover,
\begin{equation}\label{solution-estimate}
\left\|\mathscr{D}_{l}^\ast N_{l+1}f\right\|_\varphi\leq C\left\|f\right\|_\varphi,\,\,\,\,\,\,\,\,\,\left\|\mathscr{D}_{l+1}N_{l+1}f\right\|_\varphi\leq C\left\|f\right\|_\varphi.
\end{equation}
\end{thm}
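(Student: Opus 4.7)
I would follow the classical H\"ormander $L^2$ method. The heart of the matter is a single fundamental weighted estimate
\begin{equation*}
\|f\|_\varphi^2 \;\leq\; C\bigl(\|\mathscr{D}_l f\|_\varphi^2 + \|\mathscr{D}_{l-1}^\ast f\|_\varphi^2\bigr), \qquad l = 1,2,
\end{equation*}
valid on $Dom(\mathscr{D}_l)\cap Dom(\mathscr{D}_{l-1}^\ast)$, together with the endpoint variant $\|f\|_\varphi^2 \leq C\|\mathscr{D}_2^\ast f\|_\varphi^2$ on $Dom(\mathscr{D}_2^\ast)\subset L^2_\varphi(\mathbb{R}^6,\mathscr{V}_3)$. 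Once this is in hand, part $(1)$ is a routine Hilbert-space argument: $\Box_l$ is densely defined, non-negative and self-adjoint, and the estimate implies $\|f\|_\varphi \leq C\|\Box_l f\|_\varphi$ on $Dom(\Box_l)$, which forces $\ker\Box_l = 0$ and closed range, hence a bounded self-adjoint inverse $N_l$. For part $(2)$, set $u := \mathscr{D}_l^\ast N_{l+1} f$. The compatibility condition $\mathscr{D}_{l+1}f=0$ combined with $\mathscr{D}_{l+1}\mathscr{D}_l = 0$ yields $\mathscr{D}_{l+1}\mathscr{D}_{l+1}^\ast(\mathscr{D}_{l+1}N_{l+1}f) = \mathscr{D}_{l+1}f = 0$; pairing with $\mathscr{D}_{l+1}N_{l+1}f$ shows $\mathscr{D}_{l+1}N_{l+1}f = 0$, so $\Box_{l+1}N_{l+1}f = \mathscr{D}_l u = f$. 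Orthogonality of $u$ to $\ker\mathscr{D}_l$ is automatic from $u\in\overline{\mathrm{Range}\,\mathscr{D}_l^\ast}$, and the bounds in \eqref{solution-estimate} follow from the polarization $\|\mathscr{D}_l^\ast N_{l+1}f\|_\varphi^2 + \|\mathscr{D}_{l+1}N_{l+1}f\|_\varphi^2 = \langle f, N_{l+1}f\rangle_\varphi \leq C\|f\|_\varphi^2$.

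The real content is the fundamental estimate. I would first compute the formal adjoint $\Theta_l$ in the weighted inner product: writing $\mathscr{D}_l = \sum_\alpha A_l^\alpha \partial_{x_\alpha}$ with constant matrices $A_l^\alpha : \mathscr{V}_l \to \mathscr{V}_{l+1}$, integration by parts against $e^{-|x|^2}$ together with $\partial_\alpha \varphi = 2 x_\alpha$ gives
\begin{equation*}
\Theta_l = \sum_\alpha (A_l^\alpha)^\ast\bigl(-\partial_{x_\alpha} + 2 x_\alpha\bigr).
\end{equation*}
For $f \in C_0^\infty(\mathbb{R}^6, \mathscr{V}_l)$ I then expand $\|\mathscr{D}_l f\|_\varphi^2 + \|\Theta_{l-1} f\|_\varphi^2$, reorganize the mixed terms via $[\partial_\alpha, x_\beta] = \delta_{\alpha\beta}$, and isolate a pointwise zeroth-order endomorphism
\begin{equation*}
T_l := \sum_\alpha A_l^\alpha (A_l^\alpha)^\ast + \sum_\alpha (A_{l-1}^\alpha)^\ast A_{l-1}^\alpha
\end{equation*}
of $\mathscr{V}_l$, together with a positive contribution coming from the Hessian of $\varphi$ and first-order cross terms that can be absorbed into $\|\mathscr{D}_l f\|_\varphi^2 + \|\Theta_{l-1}f\|_\varphi^2$ by Cauchy--Schwarz. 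The fundamental estimate will follow once $T_l$ is bounded below by a positive constant multiple of the identity on $\mathscr{V}_l$. The passage from $C_0^\infty$ to the full domain is standard, because with the Gaussian weight polynomial sections are dense in each $L^2_\varphi(\mathbb{R}^6,\mathscr{V}_l)$ and the constant-coefficient operators commute with translation-invariant mollifiers up to controllable weight errors.

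The main obstacle, and the place where the hypothesis $k\geq 6$ must enter, is the pointwise positivity of $T_l$. The symbols $A_l^\alpha$ are the equivariant projections appearing in the decomposition of $\mathbb{C}^6 \otimes \mathscr{V}_l$ under $\mathfrak{so}(6,\mathbb{C})$, so $T_l$ is a Casimir-type endomorphism whose eigenvalues are determined by the highest weights of the irreducible summands. Small values of $k$ produce representation-theoretic degeneracies under which some isotypic summand of $\mathscr{V}_l$ could acquire a vanishing or negative contribution, and the constraint $k\geq 6$ is exactly what should uniformly exclude these bad components. Pinning down an explicit positive lower bound for $T_l$---either by a direct basis computation using the explicit description $\mathscr{V}_l = \ker\mathscr{C}|_{V_l}$, or by a Casimir comparison across the complex---is the delicate step on which the whole proof rests.
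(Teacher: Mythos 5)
Your derivation of the theorem from the weighted estimate is essentially the paper's own argument: the paper also combines the Gaffney-type facts about $\Box_l$ (Proposition \ref{box-dense}) with Theorem \ref{Thm:k-monogenic l2e} to get $\|f\|_\varphi\leq C\|\Box_l f\|_\varphi$, produces $N_l$ by a Riesz/closed-range argument, and in part $(2)$ applies $\mathscr{D}_{l+1}$ to $\Box_{l+1}N_{l+1}f=f$ and pairs to kill the bad term, exactly as you do. One small slip: pairing $\mathscr{D}_{l+1}\mathscr{D}_{l+1}^\ast\mathscr{D}_{l+1}N_{l+1}f=0$ against $\mathscr{D}_{l+1}N_{l+1}f$ gives $\mathscr{D}_{l+1}^\ast\mathscr{D}_{l+1}N_{l+1}f=0$ (which already yields $\mathscr{D}_l\mathscr{D}_l^\ast N_{l+1}f=f$), not $\mathscr{D}_{l+1}N_{l+1}f=0$; the latter needs a second pairing against $N_{l+1}f$. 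That is easily repaired, and the orthogonality and the bound via $\langle\Box_{l+1}N_{l+1}f,N_{l+1}f\rangle_\varphi$ are as in the paper.

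The genuine gap is in the part you yourself single out as the crux: the weighted estimate. If you are allowed to quote Theorem \ref{Thm:k-monogenic l2e} (as the paper's Section 4 does), your proof is complete; but your sketch of how to prove it would not go through as described. The constant-coefficient endomorphism $T_l=\sum_\alpha A_l^\alpha(A_l^\alpha)^\ast+\sum_\alpha(A_{l-1}^\alpha)^\ast A_{l-1}^\alpha$ is independent of the weight and of $x$; its positivity is the statement that the symbol sequence is exact at $\mathscr{V}_l$ (ellipticity, Proposition \ref{pro:elliptic complex}), which holds for all $k\geq4$ and cannot be where $k\geq6$ enters, nor does it produce the zeroth-order term in the weighted identity. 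In the paper the positive term $4(k-l)\|f\|_\varphi^2$ comes from the commutator with the weight, $[\nabla^{AB},\Theta_{CD}]=8\delta^A_{[C}\delta^B_{D]}$ (Lemma \ref{lem:jiaohuanzi}, i.e.\ the Hessian of $\varphi$), combined with the trace condition $\mathscr{C}f=0$. Moreover, your plan ignores the main new algebraic difficulty: on the subspaces $\mathscr{V}_l$ the adjoint is not the restriction of $\mathcal{D}_{l-1}^\ast$ but $\Theta_{l-1}f=\mathcal{D}_{l-1}^\ast f-\mathscr{P}(\mathcal{D}_{l-1}^\ast f)$ as in \eqref{expre:ad-D1}, so one must control $\|\mathscr{P}_{l-1}(\mathcal{D}_{l-1}^\ast f)\|_\varphi^2$ using the projection norms of Proposition \ref{prop:pf=some-cf}; the restriction $k\geq6$ arises precisely there, in the coefficient $1-\frac{6}{k}\geq0$ of \eqref{sigma2-l=2} when absorbing \eqref{CDf-2-l=2}, not from representation-theoretic degeneracies of a Casimir. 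Since you explicitly leave the positivity step open, the proposal as written is incomplete at exactly the point where the paper's Sections 2--3 do the real work.
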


\begin{cor}
When  $k=6,7,\ldots$, the sequence \eqref{L2-complex} is exact.
\end{cor}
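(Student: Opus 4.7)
The plan is to obtain exactness of \eqref{L2-complex} as a direct formal consequence of Theorem \ref{Thm-solution}. Exactness amounts to three statements: $\ker \mathscr{D}_1 = \mathrm{Im}\,\mathscr{D}_0$, $\ker \mathscr{D}_2 = \mathrm{Im}\,\mathscr{D}_1$, and surjectivity of $\mathscr{D}_2$ onto $L^2_\varphi(\Bbb R^6, \mathscr{V}_3)$. Each of these splits into an inclusion already built into the setup and a reverse inclusion furnished by the inverse $N_l$.

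For the inclusions $\mathrm{Im}\,\mathscr{D}_l \subseteq \ker \mathscr{D}_{l+1}$ at $\mathscr{V}_1$ and $\mathscr{V}_2$ I would cite the complex property $\mathscr{D}_{l+1}\mathscr{D}_l = 0$ already noted before the theorem. For the reverse inclusions, given $f \in \ker \mathscr{D}_{l+1}$ with $l \in \{0,1\}$, I would simply invoke Theorem \ref{Thm-solution}(2) with this $f$: it produces $u := \mathscr{D}_l^\ast N_{l+1} f \in L^2_\varphi(\Bbb R^6, \mathscr{V}_l)$ satisfying $\mathscr{D}_l u = f$, so $f \in \mathrm{Im}\,\mathscr{D}_l$.

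For the rightmost position I would exploit the special form $\Box_3 = \mathscr{D}_2 \mathscr{D}_2^\ast$. Given any $f \in L^2_\varphi(\Bbb R^6, \mathscr{V}_3)$, part (1) of Theorem \ref{Thm-solution} yields $N_3 f$ with $\Box_3 N_3 f = f$, and substituting the definition of $\Box_3$ this reads $\mathscr{D}_2\bigl(\mathscr{D}_2^\ast N_3 f\bigr) = f$, placing $f$ in $\mathrm{Im}\,\mathscr{D}_2$. Equivalently, this is Theorem \ref{Thm-solution}(2) with $l=2$, since the compatibility condition $\mathscr{D}_3 f = 0$ is vacuous at the end of the complex.

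There is no real obstacle in the corollary itself: it is a one-line consequence of the existence of the bounded inverses $N_l$ together with $\mathscr{D}_{l+1}\mathscr{D}_l = 0$. All the analytic difficulty is absorbed into Theorem \ref{Thm-solution}, specifically into the weighted $L^2$ estimate that establishes bounded self-adjoint invertibility of the Hodge Laplacians $\Box_l$.
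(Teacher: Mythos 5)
Your proposal is correct and is essentially the paper's own (implicit) argument: the corollary is treated there as an immediate consequence of Theorem \ref{Thm-solution}, with the inclusions $\mathrm{Im}\,\mathscr{D}_l\subseteq\ker\mathscr{D}_{l+1}$ coming from the complex property and the reverse inclusions, together with surjectivity of $\mathscr{D}_2$ via $\Box_3=\mathscr{D}_2\mathscr{D}_2^\ast$, coming from the canonical solution $\mathscr{D}_l^\ast N_{l+1}f$. Nothing is missing.
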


The key step to prove Theorem \ref{Thm-solution} is to establish the following weighted  $L^2$ estimate.
\begin{thm}\label{Thm:k-monogenic l2e}
Suppose that $\varphi(x)=\left|x\right|^2$ and $k=6,7,\ldots$. There exists a constant $C>0$ only depending on $k$ such that
\begin{equation}\label{k-monogenic l2estimate}
\left\|f\right\|^2_{\varphi}\leq C\bigg(\left\|\mathscr{D}_{l}f\right\|_{\varphi}^2+\left\|\mathscr{D}_{l-1}^*f\right\|^2_{\varphi}\bigg),
\end{equation}
for any $f\in Dom(\mathscr{D}_{l})\cap Dom(\mathscr{D}_{l-1}^\ast)$,  $l=1,2,3$.
\end{thm}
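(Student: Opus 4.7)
The plan is to follow the H\"ormander--Bochner--Kodaira--Nakano blueprint of the $L^2$ method, adapted to the $k$-monogenic complex exactly as was done for the $k$-Cauchy--Fueter complex in \cite{Wang2017}. The inequality \eqref{k-monogenic l2estimate} will be extracted from a Bochner-type identity on $C_0^\infty$, plus an algebraic positivity statement for the resulting ``curvature'' endomorphism of $\mathscr{V}_l$; the lower bound $k\geq 6$ should emerge as the precise threshold where this positivity first holds.

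As a preliminary I would compute the weighted adjoint. Since each $\mathscr{D}_l$ is first-order with constant coefficients, its formal unweighted adjoint $\Theta_l$ is again first-order with constant coefficients, and integration by parts against $e^{-\varphi}$ yields
$$\mathscr{D}_{l-1}^{\ast}f \;=\; \Theta_{l-1}f + 2\,\sigma_{l-1}^{\ast}(x)\,f,$$
where $\sigma_{l-1}^{\ast}$ is the formal adjoint of the principal symbol of $\mathscr{D}_{l-1}$; the linear-in-$x$ zeroth-order term reflects $\nabla\varphi=2x$. Then for $f\in C_0^\infty(\mathbb{R}^6,\mathscr{V}_l)$ I would expand $\|\mathscr{D}_l f\|_\varphi^2+\|\mathscr{D}_{l-1}^{\ast}f\|_\varphi^2$ and integrate by parts. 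The complex identity $\mathscr{D}_l\mathscr{D}_{l-1}=0$ (already used to define \eqref{elliptic complex}) forces the second-order cross terms to cancel, and after collecting the remainder one arrives at an identity of Bochner-Kodaira type
$$\|\mathscr{D}_l f\|_\varphi^2+\|\mathscr{D}_{l-1}^{\ast}f\|_\varphi^2 \;=\; \sum_{j}\|X_j f\|_\varphi^2 \;+\; \langle R f,\,f\rangle_\varphi,$$
where the $X_j$ are first-order operators (which will simply be discarded) and $R$ is a fiberwise endomorphism of $\mathscr{V}_l$ assembled from commutators of the symbols of $\mathscr{D}_l$ and $\mathscr{D}_{l-1}$ contracted against $\mathrm{Hess}(\varphi)=2I$. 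Because the Hessian is constant, $R$ is a \emph{constant} (i.e. $x$-independent) endomorphism of $\mathscr{V}_l$, and the whole theorem reduces to an algebraic estimate $R\geq c(k)\,I_{\mathscr{V}_l}$ with $c(k)>0$.

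Verifying this fiberwise positivity is the main obstacle, and where I expect essentially all of the work to lie. Using the description $V_l=\odot^{k-l}\mathbb{C}^4\otimes\wedge^l\mathbb{C}^4$ together with $\mathscr{V}_l=\ker\mathscr{C}|_{V_l}$ as an irreducible $\mathfrak{so}(6,\mathbb{C})$-module, I would diagonalize $R$ on a weight basis of $\mathscr{V}_l$ and show that every eigenvalue is an affine function of $k$; the assertion is that on each of $\mathscr{V}_1,\mathscr{V}_2,\mathscr{V}_3$ the smallest eigenvalue becomes positive exactly from $k=6$ onward. The careful bookkeeping required by the contraction-kernel constraint \eqref{def:vl} is the delicate part, and is the natural place where the threshold $k\geq 6$ appears.

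Finally, the estimate on $C_0^\infty(\mathbb{R}^6,\mathscr{V}_l)$ is upgraded to arbitrary $f\in Dom(\mathscr{D}_l)\cap Dom(\mathscr{D}_{l-1}^{\ast})$ by a standard Friedrichs density argument: convolve $f$ with a mollifier and truncate by a cutoff $\chi(x/R)$. Since $e^{-|x|^2}$ decays rapidly, the commutator errors $[\mathscr{D}_l,\chi_R]f$ and $[\mathscr{D}_{l-1}^{\ast},\chi_R]f$ are easily controlled, yielding the required density in the graph norm and completing the proof of \eqref{k-monogenic l2estimate}.
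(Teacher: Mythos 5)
Your outline follows the right blueprint (it is the same H\"ormander-type strategy the paper uses, modelled on the $k$-Cauchy--Fueter case), and your closing density argument matches the paper's cutoff-plus-mollification step. But the proposal has a genuine gap at exactly the point you defer: the ``fiberwise positivity'' is never established, and the structure you reduce to is not the one that actually occurs. The heart of the matter is that $\mathscr{D}_{l-1}^{\ast}$ is the adjoint of the operator \emph{restricted} to $\mathscr{V}$-valued functions, so $\Theta_{l-1} f=\mathcal{D}_{l-1}^{\ast}f-\mathscr{P}_{l-1}(\mathcal{D}_{l-1}^{\ast}f)$ and hence $\|\mathscr{D}_{l-1}^{\ast}f\|_{\varphi}^2=\|\mathcal{D}_{l-1}^{\ast}f\|_{\varphi}^2-\|\mathscr{P}_{l-1}\mathcal{D}_{l-1}^{\ast}f\|_{\varphi}^2$. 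The subtracted projection term is a \emph{negative} contribution involving first derivatives of $f$; it cannot be absorbed by a constant endomorphism $R$, and it is incompatible with your plan of ``simply discarding'' all first-order square terms. In the paper this is precisely the main difficulty: one must keep part of the diagonal sum of squares ($\Sigma_1$) and dominate $\|\mathscr{P}_{l}\mathcal{D}_{l}^{\ast}f\|_{\varphi}^2$ using Proposition \ref{prop:pf=some-cf} together with $\mathscr{C}f=0$, and it is exactly this absorption (for $l=3$, the inequality $\Sigma_1\geq(1-\tfrac{6}{k})(\cdots)$) that produces the hypothesis $k\geq 6$.

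Consequently your attribution of the threshold is also off: the zeroth-order ``curvature'' term coming from $[\nabla^{AB},\Theta_{CD}]=8\delta^{A}_{[C}\delta^{B}_{D]}$ (i.e.\ from the Hessian of $\varphi$) is $4(k-l)\|f\|_{\varphi}^2$, which is positive for every relevant $k$; there is no constant endomorphism of $\mathscr{V}_l$ whose smallest eigenvalue ``becomes positive exactly at $k=6$,'' and the paper explicitly states that $k\geq 6$ is only a technical requirement of its estimates, not a sharp positivity threshold. Moreover the remaining mixed term ($\Sigma_3$ in the paper) is not killed by $\mathscr{D}_l\mathscr{D}_{l-1}=0$: it must be decomposed by hand, with one piece identified as $-c\,\|\mathscr{D}_l f\|_{\varphi}^2$ (this is how $\|\mathscr{D}_l f\|_{\varphi}^2$ enters the estimate at all) and the rest recognized as nonnegative squares, using the index symmetries and $\mathscr{C}f=0$. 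So while the general scheme is sound, the proposal as written omits the decisive computations — the control of the projection term and the isolation of the $\mathscr{D}_l f$ contribution — and replaces them with a conjectural eigenvalue analysis that does not reflect how the estimate, or the restriction $k\geq 6$, actually arises.
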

  The restriction $k\geq6$ is a technique requirement since we only prove the estimate \eqref{k-monogenic l2estimate} in this case.
 The massless field operators of higher spins on any dimensional Euclidean space was  introduced by Sou$\breve{c}$ek earlier \cite{soucek,soucek2}.
 We only consider $6$-dimensional case here because we can use spin indices based on $\mathfrak{so}(6, \mathbb{C})\cong \mathfrak{sl}(4, \mathbb{C})$, as two-component notation in dimension $4$ based on $  \mathfrak{so}(4, \mathbb{C})\cong \mathfrak{sl}(2, \mathbb{C})\oplus\mathfrak{sl}(2, \mathbb{C}) $.

An advantage to consider weighted $L^2$ space is that any polynomial on $\Bbb R^6$ is $L^2_{\varphi}$ integrable. This allow us to  deduce a resolution of the operator $\mathscr{D}_0$ on $P(\Bbb R^{6},\mathscr{V}_0)$, the module of $\mathscr{V}_0$-valued polynomials over $\Bbb R^{6}$.
\begin{thm}\label{poly-exact}
When  $k=6,7,\ldots$, the sequence
\begin{equation}\label{poly-sequence}
P(\Bbb R^6, \mathscr{V}_0)\overset{\mathscr{D}_0}\longrightarrow P(\Bbb R^6,
\mathscr{V}_1)\overset{\mathscr{D}_1}\longrightarrow P(\Bbb R^6, \mathscr{V}_2)\overset{\mathscr{D}_2}\longrightarrow
P(\Bbb R^6, \mathscr{V}_3)\longrightarrow 0,
\end{equation}
is exact.
\end{thm}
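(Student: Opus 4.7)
The plan is to transfer the $L^2$ resolution from the Corollary of Theorem \ref{Thm-solution} to the polynomial setting by showing that the canonical solution operator $\mathscr{D}_l^\ast N_{l+1}$ sends polynomial-valued inputs to polynomial-valued outputs. The key observation that makes this work is that the weight $\varphi=|x|^2$ is itself a polynomial, so weighted integration by parts against a polynomial introduces only further polynomial factors.

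First I would write out the weighted formal adjoint explicitly. Since each $\mathscr{D}_l$ is a first-order linear differential operator with constant coefficients, say $\mathscr{D}_l=\sum_i A_i\partial_{x_i}$ with $A_i\in\mathrm{Hom}(\mathscr{V}_l,\mathscr{V}_{l+1})$, integration by parts against $e^{-|x|^2}dx$ gives
\begin{equation*}
\Theta_l g = -\sum_i A_i^\ast\partial_{x_i}g + 2\sum_i x_i A_i^\ast g.
\end{equation*}
Polynomials decay against the Gaussian at infinity, so on $P(\Bbb R^6,\mathscr{V}_{l+1})$ the operator $\Theta_l$ coincides with the Hilbert-space adjoint $\mathscr{D}_l^\ast$. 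In particular $\Theta_l$ maps $P(\Bbb R^6,\mathscr{V}_{l+1})$ into $P(\Bbb R^6,\mathscr{V}_l)$, raising the total degree by at most one, while $\mathscr{D}_l$ lowers degree by exactly one.

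Next, let $P_d(\Bbb R^6,\mathscr{V}_l)$ denote the finite-dimensional space of $\mathscr{V}_l$-valued polynomials of degree $\leq d$. Writing $\Box_l=\mathscr{D}_{l-1}\Theta_{l-1}+\Theta_l\mathscr{D}_l$, a term-by-term degree count shows that each summand preserves the filtration $\{P_d\}$, hence so does $\Box_l$. By Theorem \ref{Thm-solution}(1) the operator $\Box_l$ is injective on $L^2_\varphi(\Bbb R^6,\mathscr{V}_l)$, and therefore a fortiori on the finite-dimensional subspace $P_d(\Bbb R^6,\mathscr{V}_l)$; rank-nullity then makes $\Box_l$ a bijection of $P_d(\Bbb R^6,\mathscr{V}_l)$. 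Since $N_l$ is the $L^2_\varphi$-inverse of $\Box_l$, its restriction to $P_d$ must coincide with this algebraic inverse, so $N_l$ preserves polynomials of degree $\leq d$ for every $d$.

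Given this, the theorem is immediate. For $l=0,1,2$, take $f\in P(\Bbb R^6,\mathscr{V}_{l+1})$ with $\mathscr{D}_{l+1}f=0$ (vacuous when $l=2$ since $\mathscr{V}_4=0$). Polynomial integrability of the Gaussian gives $f\in L^2_\varphi(\Bbb R^6,\mathscr{V}_{l+1})$, and Theorem \ref{Thm-solution}(2) then yields the canonical solution $u=\mathscr{D}_l^\ast N_{l+1}f=\Theta_l N_{l+1}f$ with $\mathscr{D}_l u=f$. By the preceding paragraph $N_{l+1}f\in P(\Bbb R^6,\mathscr{V}_{l+1})$, hence $u=\Theta_l N_{l+1}f\in P(\Bbb R^6,\mathscr{V}_l)$ is the desired polynomial primitive. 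The main step that requires care is verifying the degree-preservation of $\Box_l$ using the explicit spin-indexed formulas for $\mathscr{D}_l$ and $\Theta_l$ from \cite{wangkang3}, together with checking that the contraction constraint cutting $\mathscr{V}_l$ out of $V_l$ is respected at each step; everything after that reduces to the elementary linear-algebraic fact that an injective endomorphism of a finite-dimensional vector space is invertible.
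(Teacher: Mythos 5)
Your argument is correct, but it takes a genuinely different route from the paper. The paper does not show that the Neumann operator preserves polynomials; instead it takes the canonical $L^2_\varphi$ solution $u$ with $\mathscr{D}_l u=f$, $\mathscr{D}_{l-1}^\ast u=0$, observes that $u$ then solves the second-order equation $\Box_l u=\Theta_{l-1}f$ with polynomial right-hand side, invokes the ellipticity of the complex (Proposition \ref{pro:elliptic complex}, i.e.\ the exactness of the symbol sequence proved in Section \ref{section5}) together with Morrey's analytic hypoellipticity to conclude that $u$ is real analytic, and finally truncates the Taylor series: since $\mathscr{D}_l$ is first order with constant coefficients and lowers degree by one, $\mathscr{D}_l\big(\sum_{m\le L+1}u_m\big)=f$ already gives a polynomial solution. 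Your proof replaces all of this analytic machinery by the observation that $\Theta_l$ raises polynomial degree by at most one (because $\nabla_{AB}\varphi$ is linear in $x$, cf.\ \eqref{Theta-AB}, \eqref{ad-Dl}, \eqref{expre:ad-D1}), so $\Box_l$ preserves each finite-dimensional space $P_d$, is injective there by Theorem \ref{Thm:k-monogenic l2e}/Theorem \ref{Thm-solution}(1), hence bijective by rank--nullity, and uniqueness of solutions of $\Box_l v=f$ forces $N_lf$ to be that polynomial; the canonical solution $\mathscr{D}_l^\ast N_{l+1}f=\Theta_l N_{l+1}f$ is then automatically polynomial. What your route buys is independence from Section \ref{section5}: no ellipticity of the symbol sequence and no analytic regularity theorem are needed, only Theorem \ref{Thm-solution} plus linear algebra. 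What it costs is the one point you pass over quickly: "polynomials decay against the Gaussian" is not by itself a proof that a polynomial $p$ lies in $Dom(\mathscr{D}_{l-1}^\ast)$ and in $Dom(\Box_l)$ with the Hilbert-space adjoint acting as the formal adjoint $\Theta_{l-1}$ there --- membership in the adjoint's domain requires the integration-by-parts identity against all $g\in Dom(\mathscr{D}_{l-1})$, not merely $g\in C_0^\infty$; this is closed by the graph-norm density of compactly supported functions provided by Lemma \ref{lemm:cutoff} together with the mollification step at the start of the proof of Theorem \ref{Thm:k-monogenic l2e}, after which your identification of $N_l$ with the algebraic inverse on $P_d$, and hence the whole argument, is sound. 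The paper's approach, conversely, never needs $N_l$ to preserve polynomials, but cannot dispense with the ellipticity of the complex.
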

To prove this theorem, we need  the following proposition.
\begin{pro}\label{pro:elliptic complex}
When   $k=4,5,\ldots$, the complex \eqref{elliptic complex} is an elliptic complex.
\end{pro}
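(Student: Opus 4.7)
The plan is to verify the defining property of an elliptic complex for \eqref{elliptic complex}: that at every nonzero $\xi \in (\mathbb{R}^6)^*$ the principal symbol sequence
\[
0\longrightarrow \mathscr{V}_0\xrightarrow{\sigma_0(\xi)}\mathscr{V}_1\xrightarrow{\sigma_1(\xi)}\mathscr{V}_2\xrightarrow{\sigma_2(\xi)}\mathscr{V}_3\longrightarrow 0
\]
is exact. That it is a complex of finite-dimensional vector spaces is automatic from $\mathscr{D}_{l+1}\mathscr{D}_l=0$, so only the vanishing of the cohomology of the symbol complex remains to be shown.

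First, I would translate $\mathscr{D}_l$ into the $\mathfrak{sl}(4,\mathbb{C})$ spin-index notation already introduced in \cite{wangkang3}. Using the isomorphism $(\mathbb{R}^6)_{\mathbb{C}}\cong\wedge^2\mathbb{C}^4$, a cotangent vector becomes a bivector $\xi^{AB}=-\xi^{BA}$; an element of $V_l=\odot^{k-l}\mathbb{C}^4\otimes\wedge^l\mathbb{C}^4$ is written with $k-l$ totally symmetric indices and $l$ totally antisymmetric indices; and $\mathscr{C}$ is the trace between the two index groups. In these indices $\mathscr{D}_l$ becomes a Koszul-type first-order operator, and its principal symbol $\sigma_l(\xi)$ is the $\mathfrak{sl}(4,\mathbb{C})$-equivariant algebraic map obtained by contracting one symmetric index of $f$ with one leg of $\xi$ and inserting the other leg of $\xi$ into the antisymmetric block. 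One checks directly, using antisymmetry of $\xi$ against the symmetry of the $\odot$-block, that the output automatically satisfies $\mathscr{C}=0$, so that $\sigma_l(\xi)$ indeed maps $\mathscr{V}_l$ into $\mathscr{V}_{l+1}$.

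Second, I would exploit $SL(4,\mathbb{C})$-equivariance to reduce to a single standard $\xi$. Every nonzero real $\xi\in\mathbb{R}^6\hookrightarrow\wedge^2\mathbb{C}^4$ satisfies $\xi\wedge\xi\ne 0$ — this is just $|\xi|^2\neq 0$ under the $SO(6)$-invariant pairing $\wedge^2\mathbb{C}^4\otimes\wedge^2\mathbb{C}^4\to\wedge^4\mathbb{C}^4\cong\mathbb{C}$ — hence is a non-degenerate bivector and can be brought by an $SL(4,\mathbb{C})$-transformation to the standard symplectic form $J^{AB}$. It therefore suffices to verify exactness of the symbol sequence at this one $\xi=J$.

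Third, using the non-degenerate $J$ and its inverse, I would construct an explicit contracting homotopy $h_l:\mathscr{V}_{l+1}\to\mathscr{V}_l$ defined in indices by contracting one antisymmetric index of the argument with $J^{-1}$, moving it into the symmetric block, and projecting back onto the trace-free subspace. Verifying the homotopy identity
\[
\sigma_{l-1}(J)\,h_{l-1}+h_l\,\sigma_l(J)=c_l(k)\cdot\mathrm{id}_{\mathscr{V}_l},
\]
with $c_l(k)$ an explicit rational function of $k$ and $l$, together with showing $c_l(k)\ne 0$ for $k\ge 4$, yields the desired cohomology vanishing (and also injectivity at $\mathscr{V}_0$ and surjectivity at $\mathscr{V}_3$ at the boundary of the complex). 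The main obstacle is the combinatorial bookkeeping needed to keep both sides of this identity inside $\ker\mathscr{C}$ while tracking the numerical constants that appear when indices are moved between the symmetric and antisymmetric blocks; the hypothesis $k\ge 4$ is precisely what prevents these constants from vanishing, and once the homotopy identity is established, the ellipticity of \eqref{elliptic complex} follows immediately.
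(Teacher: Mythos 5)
Your target (exactness of the symbol sequence at every nonzero covector) is the right one, and your reduction step is legitimate and genuinely different from the paper: since $M\overline{M}^{T}=|v|^{2}I$ gives $\det M\neq 0$, the covector is a nondegenerate bivector, and by $SL(4,\mathbb{C})$-equivariance of $\sigma_l$ and of the subspaces $\mathscr{V}_l$ one may indeed assume $\xi$ is (a multiple of) the standard symplectic form $J$. The paper does not use this reduction; it keeps a general invertible antisymmetric $M$ and, instead of a homotopy identity on all of $\mathscr{V}_l$, constructs a preimage only for each kernel element, by contracting with $M^{-1}$ and symmetrizing. On the full spaces $V_l=\odot^{k-l}\mathbb{C}^4\otimes\wedge^l\mathbb{C}^4$ your Koszul-type homotopy does work; the problem is entirely in the passage to the trace-free subspaces $\mathscr{V}_l=\ker\mathscr{C}$.

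That passage is where your proposal has a genuine gap: the identity $\sigma_{l-1}h_{l-1}+h_l\sigma_l=c_l(k)\,\mathrm{id}$ on $\mathscr{V}_l$, with $h_l$ defined as ``contract with $J^{-1}$, symmetrize, then project onto $\ker\mathscr{C}$,'' is asserted but not established, and it is not automatic. Inserting the projection destroys the clean full-space identity: one picks up correction terms of the shape $\sigma_{l-1}\mathscr{P}_{l-1}h^{\mathrm{naive}}_{l-1}+\mathscr{P}_l h^{\mathrm{naive}}_l\sigma_l$, and no Schur-type argument forces the sum to be a scalar, because after fixing $J$ the symmetry group is only $Sp(4,\mathbb{C})$, under which $\mathscr{V}_l$ is in general reducible. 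The paper's own proof shows exactly where this bites: for $\xi\in\mathscr{V}_3$ the naive preimage $\sum_E M^{-1}_{E(B_1}\xi^{A_1A_2E}_{B_2\ldots B_{k-2})}$ need not lie in $\mathscr{V}_2$, and simply projecting it changes its image under $\sigma_2$ in a way that must be controlled; the paper instead lifts $\xi$ to $\widetilde{\xi}\in\odot^{k-2}\mathbb{C}^4\otimes\wedge^4\mathbb{C}^4$, uses the commutative diagram $-3\sigma_2\mathscr{C}=4\mathscr{C}\widetilde{\sigma}$, and only then obtains a trace-free preimage with constant $-(k+2)/(3(k-1))$; at the middle step it must separately verify, using $\sigma_2\xi=0$, that the naive preimage is already trace-free. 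So your plan becomes a proof only once the projected homotopy identity (or a case-by-case construction of trace-free preimages, as in the paper) is actually carried out; as written, the central step is missing. A minor point: $k\geq 4$ is not what keeps the constants nonzero (the paper's constants $1$, $-k/(2(k-1))$, $-(k+2)/(3(k-1))$ never vanish in the relevant range); it is needed so that all the spaces $\mathscr{V}_0,\ldots,\mathscr{V}_3$ make sense and $\mathscr{V}_4=\{0\}$.
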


Compared to the quaternionic case \cite{Wang2017}, the main difficulty comes from the algebraic complexity when  passing from vector space $V_l=\odot^{k-l}{\mathbb{C}^4}\otimes\wedge^l\mathbb{C}^4$ to its contraction subspace $\mathscr{V}_l$, although only linear algebra is used to overcome it.
In Section \ref{section2}, we give some basic propositions on   symmetrization and antisymmetrization to handle   functions valued in the subspace $\mathscr{V}_l $.
To write down the formal adjoint operator $\mathscr{D}_l^{\ast}$ explicitly, we introduce the orthogonal projection $\mathscr{P}_l$ from $V_l$ to $\mathscr{V}_l^\perp$.
In Section \ref{section3}, we give the   expression of operator $\mathcal{D}_l$ and formal adjoint operators $\mathcal{D}^\ast_l$. Then we prove the $L^2$ estimate \eqref{k-monogenic l2estimate}. In Section \ref{section4}, we give the canonical solution to the nonhomogeneous equations \eqref{equ1:Dg=f}-\eqref{equ2:Dg=f}  by the general framework to solve nonhomogeneous overdetermined  PDEs. As a corollary, we show the sequence \eqref{L2-complex} is exact. Since the $\mathscr{V}_{l}$-valued polynomials over $\Bbb R^6$ are $L^2_{\varphi}$ integrable,  we show that  \eqref{poly-sequence} is also exact. In Section \ref{section5}, we  establish   the ellipticity of the differential complex \eqref{elliptic complex}     by showing the exactness of its symbol sequence, based on which we show $\Box_l$ is an elliptic differential operator and then prove Theorem \ref{poly-exact}.

\section{Linear algebra for symmetric and exterior forms}\label{section2}

\subsection{Symmetrization and antisymmetrization.}
   An element $\xi\in\otimes^t\mathbb{C}^4$ is a tuple $(\xi_{A_1\ldots A_t})$ with $\xi_{A_1\ldots A_t} \in \Bbb C$, where $A_{1},\ldots,A_{k}=1,2,3,4$.
The {\it symmetric power} $\odot^{k} \mathbb C^{4}$ is a subspace of $\otimes^{k} \Bbb C^{4}$, whose element is
 a tuple $(\xi_{B_{1}\ldots B_{k}})$ such that $\xi_{B_{1}\ldots B_{k}}$ is invariant under permutations of subscripts.  An element of $ \odot^{k-l} \mathbb C^{4}\otimes\wedge^{l}\mathbb C^{4}$ is denoted by $(\eta^{A_1\ldots A_l}_{B_{1}\ldots B_{k-l}})$, where
 $\eta^{A_1\ldots A_l}_{B_{1}\ldots B_{k-l}}$ is symmetric in $B_{1},\ldots,B_{k-l}$ and antisymmetric in  $A_1,\ldots,A_{l}$. The norm $\| \xi\|$
 for $\xi \in  \odot^{k-l} \mathbb C^{4}\otimes\wedge^{l}\mathbb C^{4}$ is the norm of $\xi$ as an element of $\otimes^{k} \mathbb C^{4}$, i.e., $\| \xi\|=\sum_{A_1,\ldots, B_1,\ldots}\|\xi^{A_1\ldots A_l}_{B_{1}\ldots B_{k-l}}\|$.

Symmetrization and antisymmetrization of an element $\xi\in\otimes^t\mathbb{C}^4$ is given by
\begin{equation}\label{def:sym-anti}
\xi_{(A_1\ldots A_t)}:=\frac{1}{t!}\sum_{\sigma\in S_t}\xi_{A_{\sigma_1}\ldots A_{\sigma_t}},\,\,\,\,\,\,\,\xi_{[A_1\ldots A_t]}:=\frac{1}{t!}\sum_{\sigma\in S_t}\epsilon_{1\ldots t}^{\sigma_1\ldots\sigma_t}\xi_{A_{\sigma_1}\ldots A_{\sigma_t}},
\end{equation}
respectively, where $S_t$ denotes the permutation group of $t$ elements and $\epsilon_{1\ldots t}^{\sigma_1\ldots\sigma_t}$ is the sign of the permutation from $(1,\ldots,t)$ to $(\sigma_1,\ldots,\sigma_t)$.
The symmetrization or antisymmetrization  of $\xi\in \otimes^t \Bbb C^4$ is an element of $\odot^t\Bbb C^4$  or $\wedge^t\Bbb C^4$.

\begin{lem}[cf. \cite{Wang2017}]\label{lem:sym}
$(1)$ For any $\xi\in\odot^t\mathbb{C}^4$ and $\zeta\in\otimes^t\mathbb{C}^4$, we have
$$\sum_{B_1,\ldots ,B_t}\xi_{B_1\ldots B_t}\overline{\zeta_{B_1\ldots B_t}}=\sum_{B_1,\ldots ,B_t}\xi_{B_1\ldots B_t}\overline{\zeta_{(B_1\ldots
B_t)}}.$$
$(2)$ For any $\xi\in\wedge^t\mathbb{C}^4$ and $\eta\in\otimes^t\mathbb{C}^4$, we have
$$\sum_{B_1,\ldots ,B_t}\xi_{B_1\ldots B_t}\overline{\zeta_{B_1\ldots B_t}}=\sum_{B_1,\ldots ,B_t}\xi_{B_1\ldots B_t}\overline{\zeta_{[B_1\ldots
B_t]}}.$$
\end{lem}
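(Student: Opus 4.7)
The plan is to reduce both identities to the invariance of the sum $\sum_B \xi_{B_1\ldots B_t}\overline{\zeta_{B_1\ldots B_t}}$ under simultaneous relabeling of its dummy indices. For any fixed $\sigma\in S_t$, the substitution $B_i\mapsto B_{\sigma_i}$ in the summation variables gives
$$\sum_{B_1,\ldots,B_t}\xi_{B_1\ldots B_t}\overline{\zeta_{B_1\ldots B_t}}=\sum_{B_1,\ldots,B_t}\xi_{B_{\sigma_1}\ldots B_{\sigma_t}}\overline{\zeta_{B_{\sigma_1}\ldots B_{\sigma_t}}}.$$

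For part $(1)$, since $\xi\in\odot^t\Bbb C^4$ is invariant under index permutations, I can replace $\xi_{B_{\sigma_1}\ldots B_{\sigma_t}}$ by $\xi_{B_1\ldots B_t}$ on the right side. Thus for every $\sigma\in S_t$,
$$\sum_B \xi_{B_1\ldots B_t}\overline{\zeta_{B_1\ldots B_t}}=\sum_B \xi_{B_1\ldots B_t}\overline{\zeta_{B_{\sigma_1}\ldots B_{\sigma_t}}}.$$
Averaging this identity over $\sigma\in S_t$ and pulling the real-coefficient average inside the complex conjugate produces $\zeta_{(B_1\ldots B_t)}$ on the right, which is the stated equality.

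For part $(2)$, the antisymmetry of $\xi\in\wedge^t\Bbb C^4$ yields $\xi_{B_{\sigma_1}\ldots B_{\sigma_t}}=\epsilon_{1\ldots t}^{\sigma_1\ldots\sigma_t}\,\xi_{B_1\ldots B_t}$ in place of the symmetry relation. Multiplying both sides of the resulting identity by the same sign $\epsilon_{1\ldots t}^{\sigma_1\ldots\sigma_t}$ (which is real and squares to $1$) before averaging over $\sigma\in S_t$ yields the antisymmetrization $\zeta_{[B_1\ldots B_t]}$ on the right.

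No genuine obstacle is expected: the proof is a pure bookkeeping exercise in index notation, and the statement is already cited from \cite{Wang2017}. The only point that deserves explicit mention is that the signs $\epsilon_{1\ldots t}^{\sigma_1\ldots\sigma_t}$ are real and therefore pass through the complex conjugate unchanged, which is what makes the averaging argument for $(2)$ go through verbatim in parallel with $(1)$.
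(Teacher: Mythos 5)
Your proposal is correct and is essentially the paper's own proof run in the opposite direction: the paper expands $\zeta_{(B_1\ldots B_t)}$ (resp.\ $\zeta_{[B_1\ldots B_t]}$) by definition and then uses the (anti)symmetry of $\xi$ together with relabeling of the dummy indices, whereas you relabel and use the (anti)symmetry first and then average over $S_t$ — the same two ingredients in reverse order. In part $(2)$, just make sure the sign sits on the $\zeta$-factor before you average, i.e.\ average the identity $\sum_{B_1,\ldots,B_t}\xi_{B_1\ldots B_t}\overline{\zeta_{B_1\ldots B_t}}=\sum_{B_1,\ldots,B_t}\xi_{B_1\ldots B_t}\,\overline{\epsilon_{1\ldots t}^{\sigma_1\ldots\sigma_t}\,\zeta_{B_{\sigma_1}\ldots B_{\sigma_t}}}$ over $\sigma\in S_t$ (its left side is $\sigma$-independent), since averaging the variant whose left-hand side carries the sign $\epsilon_{1\ldots t}^{\sigma_1\ldots\sigma_t}$ would instead collapse to the trivial statement that an antisymmetric $\xi$ pairs to zero with the symmetrization of $\zeta$.
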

\begin{proof}
 By  definition \eqref{def:sym-anti}, we have
\begin{equation*}
\begin{aligned}
\sum_{B_1,\ldots ,B_t}\xi_{B_1\ldots B_t}\overline{\zeta_{(B_1\ldots B_t)}}=&\frac{1}{t!}\sum_{B_1,\ldots, B_t}\sum_{\sigma\in S_t}\xi_{B_1\ldots
B_t}\overline{\zeta_{B_{\sigma_1}\ldots B_{\sigma_t}}}\\
=&\frac{1}{t!}\sum_{\sigma\in S_t}\sum_{B_1,\ldots ,B_t}\xi_{B_{\sigma_1}\ldots B_{\sigma_t}}\overline{\zeta_{B_{\sigma_1}\ldots B_{\sigma_t}}}=\sum_{B_1,\ldots ,B_t}\xi_{B_1\ldots B_t}\overline{\zeta_{B_1\ldots B_t}},
\end{aligned}
\end{equation*}
by $\xi\in \odot^t\mathbb{C}^4$ and  relabeling   indices. The proof of $(2)$ is analogous to $(1)$.
\end{proof}
\begin{lem}\label{lem:symzhankai}
$(1)$ For $(\xi_{B_1\ldots B_t})\in\otimes^t\mathbb{C}^4$  symmetric in $B_2\ldots B_t$, we have
$$\xi_{(B_1\ldots B_t)}=\frac{1}{t}\sum_{s=1}^t\xi_{B_sB_1\ldots \widehat{B}_{s}\ldots},$$
where $\widehat{B}_{s}$ means omitting $B_{s}$.\\
$(2)$ For $(\xi_{B_1\ldots B_t})\in\otimes^t\mathbb{C}^4$  antisymmetric in $B_2\ldots B_t$, we have
$$\xi_{[B_1\ldots B_t]}=\frac{1}{t}\Big(\xi_{B_1\ldots B_t}-\sum_{s=2}^t\xi_{B_sB_2\ldots  B_1\ldots }\Big),$$
$(3)$ For $\xi\in\otimes^t\mathbb{C}^4$, we have
$$\xi_{[B_1\ldots B_t]}=\xi_{[[B_1\ldots B_{t_1}]B_{t_1+1}\ldots B_t]}.$$
\end{lem}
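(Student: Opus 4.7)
The strategy for all three identities is to expand the definition \eqref{def:sym-anti} and partition the sum over $S_t$ according to the value $s=\sigma_1$, collapsing the resulting inner sum by the partial (anti)symmetry of $\xi$.

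For $(1)$, there are $(t-1)!$ permutations with $\sigma_1=s$, and by the symmetry of $\xi$ in positions $2,\ldots,t$ all of their summands coincide with $\xi_{B_s B_1\ldots\widehat{B}_s\ldots B_t}$. Summing over $s$ and dividing by $t!$ produces the factor $1/t$ in the claimed formula.

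For $(2)$, the same partition is used, but signs must be tracked. Take as the canonical representative with $\sigma_1=s$ the $s$-cycle $\sigma^{(s)}$ defined by $\sigma^{(s)}(1)=s$, $\sigma^{(s)}(j)=j-1$ for $2\le j\le s$, and $\sigma^{(s)}(j)=j$ for $j>s$; its sign is $(-1)^{s-1}$. Each $\sigma$ with $\sigma_1=s$ factors uniquely as $\sigma=\sigma^{(s)}\circ\pi$ with $\pi$ fixing $1$, and antisymmetry of $\xi$ in positions $2,\ldots,t$ turns the inner sum into $\sum_\pi \mathrm{sgn}(\pi)^2=(t-1)!$ identical copies of $\xi_{B_s,B_1,B_2,\ldots,B_{s-1},B_{s+1},\ldots,B_t}$. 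A final application of antisymmetry moves $B_1$ from position $2$ to position $s$ via $s-2$ transpositions, contributing $(-1)^{s-2}$. Combined with $(-1)^{s-1}$ this yields an overall minus sign for $s\ge 2$ and $+1$ for $s=1$, matching the statement exactly.

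For $(3)$, I would simply expand the right-hand side. The inner bracket contributes a sum over $\tau\in S_{t_1}$, which I extend to $S_t$ by fixing $t_1+1,\ldots,t$ (preserving its sign); the outer bracket contributes a sum over $\sigma\in S_t$. Reparameterising by $\rho=\sigma\tau\in S_t$, each $\rho$ arises from exactly $t_1!$ pairs $(\sigma,\tau)$ and $\mathrm{sgn}(\sigma)\,\mathrm{sgn}(\tau)=\mathrm{sgn}(\rho)$, so the double sum collapses to $\frac{1}{t!}\sum_{\rho\in S_t}\mathrm{sgn}(\rho)\,\xi_{B_{\rho(1)}\ldots B_{\rho(t)}}=\xi_{[B_1\ldots B_t]}$. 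The only delicate step in the whole lemma is the sign bookkeeping in $(2)$; parts $(1)$ and $(3)$ amount to straightforward reindexing.
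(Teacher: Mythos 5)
Your proof is correct and follows essentially the same route as the paper: expand definition \eqref{def:sym-anti}, partition the sum over $S_t$ by the value of $\sigma_1$, collapse each block using the partial (anti)symmetry of $\xi$, and for $(3)$ compose the inner and outer permutations and relabel. Your factorization $\sigma=\sigma^{(s)}\circ\pi$ with the explicit cycle $\sigma^{(s)}$ is just a more explicit way of organizing the sign bookkeeping that the paper carries out via the product of the two $\epsilon$ factors in its display, and the resulting signs agree.
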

\begin{proof}
$(1)$ Its proof  is similar to $(2)$. \par
$(2)$ By the definition of antisymmetrization \eqref{def:sym-anti}, we have
\begin{equation}\label{le:duichenzhaikai_1}
\begin{aligned}
\xi_{[B_1\ldots B_t]}=&\frac{1}{t!}\sum_{\sigma\in S_t}\epsilon_{12\ldots t}^{\sigma_1\ldots\sigma_t}\xi_{B_{\sigma_1}\ldots B_{\sigma_t}} =
 \frac{1}{t!}\sum_{s=1}^t\sum_{\sigma\in S_{t},\sigma_1=s}\epsilon_{12\ldots t}^{s\sigma_2\ldots\sigma_t}
\xi_{B_sB_{\sigma_2}\ldots B_{\sigma_t}}.
\end{aligned}
\end{equation}
Since $f$ is antisymmetric in the last $t-1$ indices, we get
\begin{equation*}
\begin{aligned}
\xi_{[B_1\ldots B_t]}=&\frac{(t-1)!}{t!}\xi_{B_1\ldots B_t}+\frac{1}{t!}\sum_{s=2}^t\sum_{\sigma\in S_{t},\sigma_1=s}\epsilon_{12\ldots
t}^{s\sigma_2\ldots\sigma_t}\epsilon_{s\sigma_2\ldots\sigma_s\ldots\sigma_t}^{s2\ldots 1 \ldots t}
\xi_{B_sB_2\ldots   B_1 \ldots}\\
=&\frac{1}{t}\xi_{B_1\ldots B_t}-\frac{1}{t}\sum_{s=2}^t\xi_{B_sB_2\ldots   B_1 \ldots}.
\end{aligned}
\end{equation*}
 This completes the proof of $(2)$. \par
(3) Denote $\Xi_{B_1\ldots B_t}:=\xi_{[B_1\ldots B_{t_1}]B_{t_1+1}\ldots B_t}$. By definition of antisymmetrization, we have
\begin{equation}\label{lem2.2:anti-expan}
\begin{aligned}
\Xi_{[B_1\ldots B_t]}=&\frac{1}{t!}\sum_{\sigma\in S_{t}}\epsilon_{1\ldots t}^{\sigma_1\ldots\sigma_t}\Xi_{B_{\sigma_1}\ldots B_{\sigma_t}}
=\frac{1}{t!t_1!}\sum_{\tau\in S_{t_1}}\sum_{\sigma\in S_{t}}\epsilon_{1\ldots t}^{\tau_{\sigma_1}\ldots \tau_{\sigma_{t_1}}\ldots \sigma_t}\xi_{B_{\tau_{\sigma_1}}\ldots B_{\tau_{\sigma_{t_1}}\ldots B_{\sigma_t}}}\\
=&\frac{1}{t!t_1!}\sum_{\tau\in S_{t_1}}\sum_{\kappa\in S_t}\epsilon_{1\ldots t}^{\kappa_1\ldots\kappa_t}\xi_{B_{\kappa_1}\ldots B_{\kappa_t}}
=\xi_{[B_1\ldots B_t]},
\end{aligned}
\end{equation}
by relabeling indices and permutations in the third identity.
\end{proof}
\subsection{The orthogonal projection  $\mathscr{P_l}:V_l\rightarrow \mathscr{V}_{l}^\perp$}
The {\it contraction} $\mathscr{C} $ (\ref{def:contraction0})   given by
\begin{equation}\label{def:contraction}
\begin{aligned}
\mathscr{C}(f)_{B_1 \ldots B_{p-1}}^{A_1 \ldots A_{q-1}}=\sum_{C=1}^4 f_{B_1 \ldots B_{p-1}C}^{CA_1 \ldots A_{q-1}},
\end{aligned}
 \end{equation}  satisfies
\begin{equation}\label{pro:contraction}
\mathscr{C}\circ\mathscr{C}f=0.
 \end{equation} This is because
  for any fixed $A_1, \ldots, A_{q-2}, B_1, \ldots, B_{p-2}$, \begin{equation*}
(\mathscr{C}\circ\mathscr{C}f)_{B_1 \ldots B_{p-2}}^{A_1 \ldots A_{q-2}}=\sum_{A}\mathscr{C}(f)_{ B_1 \ldots B_{p-2}A}^{A A_1 \ldots A_{q-2}}=\sum_{A,C}f_{B_1 \ldots B_{p-2}AC}^{C A A_1 \ldots A_{q-2}}=0,
 \end{equation*}
by $f$ symmetric in subscripts $A, C$ and antisymmetric in  superscripts $A,C$.

Let $\mathscr{V}_{l}^{\perp}$ be the orthogonal complement   of $\mathscr{V}_{l}$ in $ V_{l}  $.
Now we construct  a linear transformation $\mathscr{P}_l$  from $V_l$ to $\mathscr{V}_{l}^{\perp}$ ($l=1,2$)
\begin{equation}\label{Pf}
\begin{split}
 \mathscr{P}_1 (f)^A_{B_1B_2\ldots B_{k-1}}:&=\frac{k-1}{k+2}\delta^{A}_{(B_{1}}\mathscr{C}(f)_{B_2\ldots B_{k-1})},\quad\qquad f\in V_{1},\\
 \mathscr{P}_2 (f)^{A_1A_{2}}_{B_{1}\ldots B_{k-2}}:&=\frac{2(k-2)}{k}\delta_{(B_{1}}^{[A_1}\mathscr{C}(f)^{ A_2]}_{B_{2}\ldots
B_{k-2})},\qquad f\in V_{2}.
\end{split}
\end{equation}

\begin{pro}\label{P-orthonogal}
 $\mathscr{P}_l$ is an orthogonal projection from $V_l$ to $\mathscr{V}_{l}^\perp$, $l=1,2$.
 \end{pro}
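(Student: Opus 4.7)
The plan is to verify the standard three-part characterization: $\mathscr{P}_l$ is the orthogonal projection onto $\mathscr{V}_l^\perp$ if and only if $\mathscr{P}_l$ is idempotent, self-adjoint, and has kernel exactly $\mathscr{V}_l$. The core of the argument is a single key identity, $\mathscr{C}(\mathscr{P}_l f) = \mathscr{C}(f)$ for all $f\in V_l$, from which both idempotency and the kernel characterization follow cleanly. The specific normalizing constants $\tfrac{k-1}{k+2}$ and $\tfrac{2(k-2)}{k}$ in (2.8) are forced precisely by this identity.

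For self-adjointness, I would compute the pairing $\langle \mathscr{P}_l f, g\rangle$ directly from (2.8). Since $g\in V_l$ has the sym/antisym properties of $V_l$, Lemma 2.1 allows dropping the sym/antisym brackets on the first factor; the Kronecker delta then pairs an upper index with a lower index, collapsing the right-hand side to a pure contraction. A routine expansion yields
\[
\langle \mathscr{P}_l f, g\rangle \;=\; c_l\, \langle \mathscr{C}(f), \mathscr{C}(g)\rangle,
\]
with $c_1 = \tfrac{k-1}{k+2}$ and $c_2 = \tfrac{2(k-2)}{k}$. This expression is Hermitian-symmetric in $f$ and $g$, so $\mathscr{P}_l$ is self-adjoint.

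For idempotency, once the key identity $\mathscr{C}(\mathscr{P}_l f) = \mathscr{C}(f)$ is known, the definition (2.8) gives $\mathscr{P}_l(\mathscr{P}_l f) = \mathscr{P}_l f$ immediately by substitution. To prove the identity, I would expand the symmetrization in $\mathscr{P}_l(f)$ using Lemma 2.2(1) as a sum over the position of the Kronecker delta, apply $\mathscr{C}$, and regroup terms. The contributions split into a diagonal piece $\delta^C_C = 4$ giving $4\,\mathscr{C}(f)$, off-diagonal pieces $\delta^C_{B_s}$ which by symmetry of $\mathscr{C}(f)$ in its subscripts each collapse to a single copy of $\mathscr{C}(f)$, and (in the $l=2$ case) cross terms of the form $\sum_C \mathscr{C}(f)^C_{\cdots C}$, which vanish by (2.7) since $\mathscr{C}\circ\mathscr{C}=0$. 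Summing the surviving contributions against the normalization reproduces exactly $\mathscr{C}(f)$, which is precisely why the constants in (2.8) must take the values stated.

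Finally, the kernel identification is immediate: $\mathscr{V}_l \subseteq \ker\mathscr{P}_l$ because $\mathscr{P}_l f$ is built entirely from $\mathscr{C}(f)$, and conversely if $\mathscr{P}_l f = 0$ then the key identity yields $\mathscr{C}(f) = \mathscr{C}(\mathscr{P}_l f) = 0$. A self-adjoint idempotent whose kernel is $\mathscr{V}_l$ is, by the orthogonal decomposition $V_l = \mathscr{V}_l \oplus \mathscr{V}_l^\perp$, exactly the orthogonal projection onto $\mathscr{V}_l^\perp$. The principal obstacle is the index bookkeeping in the $l=2$ case, where one must track the symmetrization over $k-2$ subscripts and the antisymmetrization over the two superscripts simultaneously and verify that every $\mathscr{C}^2$ term that appears indeed vanishes; the $l=1$ case is a simpler prototype for the same mechanism.
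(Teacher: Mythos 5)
Your proposal is correct and takes essentially the same route as the paper: the identity $\mathscr{C}(\mathscr{P}_l f)=\mathscr{C}(f)$ yields idempotency and $\ker\mathscr{P}_l=\mathscr{V}_l$, and the pairing computation via Lemma \ref{lem:sym} yields orthogonality --- the paper pairs $\mathscr{P}_l f$ only against $h\in\mathscr{V}_l$ to place the image in $\mathscr{V}_l^{\perp}$, while you pair against a general $g\in V_l$ to get $\langle\mathscr{P}_l f,g\rangle=c_l\langle\mathscr{C}(f),\mathscr{C}(g)\rangle$ and hence self-adjointness (which also recovers Proposition \ref{prop:pf=some-cf} at $g=f$). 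One small bookkeeping correction in your $l=2$ verification of the key identity: besides the diagonal term $\delta^{A_1}_{A_1}=4$, the $k-3$ off-diagonal terms, and the cross terms killed by $\mathscr{C}\circ\mathscr{C}=0$, the antisymmetrization also produces the term $-\tfrac{1}{k}\sum_{A_1}\delta^{A_2}_{A_1}\mathscr{C}(f)^{A_1}_{B_1\ldots B_{k-3}}=-\tfrac{1}{k}\mathscr{C}(f)^{A_2}_{B_1\ldots B_{k-3}}$, which does not vanish; it is exactly what makes the count $(k-3)+4-1=k$ match the normalization $\tfrac{2(k-2)}{k}$.
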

 \begin{proof}
We only prove the case $l=2$ since it is similar for the case $l=1$. Note that
\begin{equation}\label{CP2f=Cf}
\mathscr{C}(\mathscr{P}_{2}f)=\mathscr{C}(f)
\end{equation}
for any $f\in V_2=\odot^{k-2}{\mathbb{C}^4}\otimes\wedge^2\mathbb{C}^4$, since $\mathscr{C}(\mathscr{P}_{2}f)_{B_1\ldots B_{k-3}}^{A_2}= \sum_{A_1}\mathscr{P}_{2}(f)^{A_1A_2}_{B_1\ldots B_{k-3}A_1} $ equals to
\begin{equation*}
\begin{aligned}
 &\frac{1}{k}\sum_{A_1}\bigg(\delta^{A_1}_{B_{1}}\mathscr{C}(f)^{A_2}_{B_2\ldots B_{k-3}A_1}+\ldots+\delta^{A_1}_{B_{k-3}}\mathscr{C}(f)^{A_2}_{B_1\ldots B_{k-4}A_1}+\delta^{A_1}_{A_1}\mathscr{C}(f)^{A_2}_{B_1\ldots B_{k-3}}\\
&\qquad\quad-\delta^{A_2}_{B_{1}}\mathscr{C}(f)^{A_1}_{B_2\ldots B_{k-3}A_1}-\ldots-\delta^{A_2}_{B_{k-3}}\mathscr{C}(f)^{A_1}_{B_1\ldots B_{k-4}A_1}-\delta^{A_2}_{A_1}\mathscr{C}(f)^{A_1}_{B_1\ldots B_{k-3}}\bigg)\\
=&\mathscr{C}(f)^{A_2}_{B_1\ldots B_{k-3}},
\end{aligned}
\end{equation*}
 by \eqref{def:contraction}, \eqref{pro:contraction}, definition \eqref{Pf} and Lemma \ref{lem:symzhankai} $(1)$. Then $\mathscr{P}_{2}$ is a projection since
 \begin{equation*}
\begin{aligned}
\mathscr{P}_{2}(\mathscr{P}_{2}f)_{B_1\ldots B_{k-2}}^{A_1A_2}
=&\frac{2(k-2)}{k}\delta^{[A_1}_{(B_{1}}(\mathscr{C}\mathscr{P}_{2}f)^{ A_2]}_{B_2\ldots B_{k-2})}
= \frac{2(k-2)}{k}\delta^{[A_1}_{(B_{1}}\mathscr{C}(f)^{A_2]}_{B_2\ldots B_{k-2})}
=\mathscr{P}_{2}(f)^{A_1A_2}_{B_1\ldots B_{k-2}},
\end{aligned}
\end{equation*}
by \eqref{CP2f=Cf}. For any $h \in \mathscr{V}_{2}$, we have
\begin{equation*}
\begin{aligned}
k\langle \mathscr{P}_2f, h\rangle
=&\sum_{A_1, A_2,B_1,\ldots,B_{k-2}}\left\langle\sum_{s=1}^{k-2}\bigg(\delta^{A_1}_{B_{s}}\mathscr{C}(f)^{A_2}_{B_1\ldots \widehat{B}_{s}\ldots B_{k-2}}-\delta^{A_2}_{B_{s}}\mathscr{C}(f)^{A_1}_{B_1\ldots \widehat{B}_{s}\ldots B_{k-2}}\bigg), h^{A_1A_2}_{B_1\ldots B_{k-2}}\right\rangle\\
=&\sum_{s=1}^{k-2}\bigg\{\sum_{A_2,B_{1},\ldots,\widehat{B}_{s},\ldots}\bigg\langle \mathscr{C}(f)^{A_2}_{B_1\ldots \widehat{B}_{s}\ldots B_{k-2}}, \sum_{B_s}h^{B_sA_2}_{B_1\ldots B_{k-2}}\bigg\rangle\\
&\qquad-\sum_{A_1,B_{1},\ldots,\widehat{B}_{s},\ldots}\bigg\langle \mathscr{C}(f)^{A_1}_{B_1\ldots \widehat{B}_{s}\ldots B_{k-2}}, \sum_{B_s}h^{A_1B_s}_{B_1\ldots B_{k-2}}\bigg\rangle\bigg\}=0,
\end{aligned}
\end{equation*}
 by definition \eqref{Pf} and $\mathscr{C}h=0$.   Hence, $\mathscr{P}_{2}f \in \mathscr{V}_{2}^{\perp}$, and so it
 is a projection from $V_2$ to $\mathscr{V}_{2}^\perp$.

For any $f\in \ker \mathscr{P}_{2}$, we know $f\in \mathscr{V}_{2}$ since we have $\mathscr{C}(f)=\mathscr{C}(\mathscr{P}_{2}f)=0$.  On the other hand, for any $f\in \mathscr{V}_{2}$, we know
$\mathscr{P}_{2}f=0$ by definition \eqref{Pf}. Then $f \in \ker\mathscr{P}_{2} $ if and only if $f\in \mathscr{V}_{2}$. Hence $\mathscr{P}_{2}$ is an orthogonal projection from $V_2$ to $\mathscr{V}_{2}^\perp$.
\end{proof}

We also need to know the norm of $\mathscr{P}_l(\xi)$ for $\xi \in V_l$ in the proof of the $L^2$ estimate. We have

\begin{pro}\label{prop:pf=some-cf}
 \begin{equation}\label{pf=some-cf}
 \begin{aligned}
 &\left\|\mathscr{P}_1(\xi)\right\|^2=\frac{k-1}{k+2}\left\|\mathscr{C}(\xi)\right\|^2,\qquad \quad\text{for}\,\,\xi \in V_1,\\
 &\left\|\mathscr{P}_2(\xi)\right\|^2=\frac{2(k-2)}{k}\left\|\mathscr{C}(\xi)\right\|^2,\qquad\text{for}\,\,\xi \in V_2.
  \end{aligned}
  \end{equation}
\end{pro}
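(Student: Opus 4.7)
The plan is to exploit the orthogonal projection property of $\mathscr{P}_l$ established in Proposition \ref{P-orthonogal}. Because $\mathscr{P}_l$ is idempotent and self-adjoint with range $\mathscr{V}_l^\perp$, one has the identity
\begin{equation*}
\left\|\mathscr{P}_l(\xi)\right\|^2=\langle \mathscr{P}_l(\xi),\xi\rangle,
\end{equation*}
which reduces the desired norm computation to a bilinear pairing against $\xi$ itself. This is the central simplification: after substituting the explicit formula \eqref{Pf} for $\mathscr{P}_l(\xi)$, the symmetrization and antisymmetrization brackets on the first factor can be stripped off by pairing them against $\xi$, which already carries the matching symmetries.

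For the case $l=1$, I would insert $\mathscr{P}_{1}(\xi)^{A}_{B_1\ldots B_{k-1}}=\frac{k-1}{k+2}\delta^{A}_{(B_1}\mathscr{C}(\xi)_{B_2\ldots B_{k-1})}$ into the pairing with $\xi$. Since $\xi\in V_1$ is symmetric in $B_1,\ldots,B_{k-1}$, Lemma \ref{lem:sym} $(1)$ allows the round bracket on the first factor to be dropped without changing the value of the sum. The Kronecker delta then forces $A=B_1$, and the remaining sum $\sum_{A}\overline{\xi^{A}_{A B_2\ldots B_{k-1}}}=\overline{\mathscr{C}(\xi)_{B_2\ldots B_{k-1}}}$, using the symmetry of $\xi$ in its subscripts to move the contracted index into the last slot. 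This yields the claimed factor $\frac{k-1}{k+2}\left\|\mathscr{C}(\xi)\right\|^2$.

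For the case $l=2$ the same strategy applies, but now one must simultaneously strip a symmetrization over the $k-2$ subscripts and an antisymmetrization over the two superscripts in $\mathscr{P}_2(\xi)^{A_1A_2}_{B_1\ldots B_{k-2}}=\frac{2(k-2)}{k}\delta_{(B_1}^{[A_1}\mathscr{C}(\xi)^{A_2]}_{B_2\ldots B_{k-2})}$. This is legitimate because $\xi\in V_2$ is symmetric in $B_1,\ldots,B_{k-2}$ and antisymmetric in $A_1,A_2$, so applying parts $(1)$ and $(2)$ of Lemma \ref{lem:sym} in succession to the first factor leaves the pairing unchanged. What remains is $\frac{2(k-2)}{k}\sum\delta^{A_1}_{B_1}\mathscr{C}(\xi)^{A_2}_{B_2\ldots B_{k-2}}\overline{\xi^{A_1 A_2}_{B_1\ldots B_{k-2}}}$, and contracting the delta and recognizing $\sum_{A_1}\xi^{A_1 A_2}_{A_1 B_2\ldots B_{k-2}}=\mathscr{C}(\xi)^{A_2}_{B_2\ldots B_{k-2}}$ gives the claimed $\frac{2(k-2)}{k}\left\|\mathscr{C}(\xi)\right\|^2$.

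The main point requiring care is the simultaneous removal of both the symmetrization and the antisymmetrization brackets in the $l=2$ case, but this is straightforward once one observes that, for disjoint index groups, these operators commute and each is self-adjoint with respect to the inner product on $\otimes^{k}\mathbb{C}^4$; pairing $\mathscr{P}_2(\xi)$ with $\xi$ (which is fixed by both projectors) therefore absorbs them automatically. No serious obstacle is expected beyond this bookkeeping.
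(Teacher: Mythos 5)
Your proposal is correct, but it takes a genuinely different route from the paper. The paper computes $\left\|\mathscr{P}_l(\xi)\right\|^2$ head-on: it substitutes \eqref{Pf} into the squared norm, expands the (anti)symmetrization into a double sum over positions, and evaluates the diagonal and off-diagonal contributions $S_1$ and $S_2$ separately by relabeling indices (using $\mathscr{C}\circ\mathscr{C}=0$ to kill certain cross terms in the $l=2$ case), so that, e.g., $(k+2)^2\left\|\mathscr{P}_1(\xi)\right\|^2=4(k-1)\left\|\mathscr{C}(\xi)\right\|^2+(k-1)(k-2)\left\|\mathscr{C}(\xi)\right\|^2$. You instead invoke Proposition \ref{P-orthonogal} to write $\left\|\mathscr{P}_l(\xi)\right\|^2=\langle\mathscr{P}_l(\xi),\xi\rangle$ (legitimate: $\mathscr{P}_l\xi\in\mathscr{V}_l^\perp$ while $\xi-\mathscr{P}_l\xi\in\ker\mathscr{P}_l=\mathscr{V}_l$ by idempotency, and these are orthogonal), which makes the projection formula appear only linearly; Lemma \ref{lem:sym} then strips the brackets against the symmetries of $\xi$, and a single delta contraction together with $\sum_{B_1}\xi^{B_1A_2}_{B_1B_2\ldots}=\mathscr{C}(\xi)^{A_2}_{B_2\ldots}$ (using symmetry of the subscripts, since the contraction in \eqref{def:contraction} pairs the first superscript with the last subscript) gives the coefficients directly, with no cross terms to organize and no appeal to $\mathscr{C}\circ\mathscr{C}=0$. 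What your argument buys is brevity and less index bookkeeping; what the paper's direct expansion buys is independence from Proposition \ref{P-orthonogal}, so the two statements can be read as mutually confirming computations. There is no circularity in your route, since the paper's proof of Proposition \ref{P-orthonogal} does not use Proposition \ref{prop:pf=some-cf}; the only point to state carefully is the one you already flag, namely that the symmetrization and antisymmetrization act on disjoint index groups, commute, and are self-adjoint on $\otimes^k\mathbb{C}^4$, so they may be absorbed by $\xi$.
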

\begin{proof} For $\xi \in V_1$, we have
\begin{equation}\label{pf=s1+s2}
\begin{aligned}
(k+2)^2\left\|\mathscr{P}_1(\xi)\right\|^2=&(k+2)^2\sum_{A,B_{1},\ldots ,B_{k-1}}\left\|\mathscr{P}_1(\xi)^{A}_{B_{1}\ldots B_{k-1}}\right\|^2\\
=&\sum_{A,B_{1},\ldots,B_{k-1}}\sum_{j,l=1}^{k-1}\left\langle\delta^{A}_{B_{j}}(\mathscr{C}\xi)_{B_1\ldots \widehat{B}_j\ldots
B_{k-1}}, \delta^{A}_{B_{l}}(\mathscr{C}\xi)_{B_1\ldots \widehat{B}_l\ldots B_{k-1}}\right\rangle\\
=&\sum_{A,B_{1},\ldots,B_{k-1}}\left(\sum_{j=1}^{k-1}\left\langle\delta^{A}_{B_{j}}(\mathscr{C}\xi)_{B_1\ldots \widehat{B}_j\ldots
B_{k-1}}, \delta^{A}_{B_{j}}(\mathscr{C}\xi)_{B_1\ldots \widehat{B}_j\ldots B_{k-1}}\right\rangle\right. \\
&\,\,\,\, +\left.\sum_{j\neq l}\left\langle\delta^{A}_{B_{j}}(\mathscr{C}\xi)_{B_1\ldots \widehat{B}_j\ldots
B_{k-1}}, \delta^{A}_{B_{l}}(\mathscr{C}\xi)_{B_1\ldots \widehat{B}_l\ldots B_{k-1}}\right\rangle\right) :=S_1+S_2,
\end{aligned}
\end{equation} by \eqref{Pf}. But we have
\begin{equation}\label{s1-1}
S_1=4(k-1)\sum_{B_{1},\ldots ,B_{k-2}}\left\|(\mathscr{C}\xi)_{B_1\ldots B_{k-2}}\right\|^2=4(k-1)\left\|\mathscr{C}(\xi)\right\|^2,
\end{equation}
by relabeling indices and
\begin{equation}\label{s2-1}
\begin{aligned}
S_2&=\sum_{j\neq l}\sum_{\ldots, \widehat{B_j},\ldots, \widehat{B_l},\ldots}\sum_{A}\left\langle(\mathscr{C}\xi)_{B_1\ldots \widehat{B}_j\ldots A \ldots
B_{k-1}}, (\mathscr{C}\xi)_{B_1\ldots A \ldots \widehat{B}_l\ldots B_{k-1}}\right\rangle\\
&=(k-1)(k-2)\sum_{B_{1},\ldots ,B_{k-2}}\left\|(\mathscr{C}\xi)_{B_1\ldots B_{k-2}}\right\|^2=(k-1)(k-2)\left\|\mathscr{C}(\xi)\right\|^2,
\end{aligned}
\end{equation}
by relabeling indices.
Then the sum of \eqref{s1-1} and \eqref{s2-1} gives us  the first identity of  \eqref{pf=some-cf}.

For $\xi\in V_2$, we rewrite $k^2\left\|\mathscr{P}_2(\xi)\right\|$ as in \eqref{pf=s1+s2}
\begin{equation}\label{CDf-2=s1+s2}
\begin{aligned}
k^2\left\|\mathscr{P}_2(\xi)\right\|^2&=4\sum_{\substack {A_{1},A_2,\\B_{1},\ldots ,B_{k-2}}}\left\langle
\sum_{j}\delta^{[A_{1}}_{B_{j}}(\mathscr{C}\xi)^{ A_2]}_{B_1\ldots \widehat{B}_j\ldots B_{k-2}},
\sum_{l}\delta^{[A_{1}}_{B_{l}}(\mathscr{C}\xi)^{A_2]}_{B_1\ldots \widehat{B}_l\ldots B_{k-2}}\right\rangle\\
&=4\sum_{j}\sum_{\substack {A_{1},A_2,\\B_{1},\ldots ,B_{k-2}}}\left\langle\delta^{[A_{1}}_{B_{j}}(\mathscr{C}\xi)^{A_2]}_{B_1\ldots \widehat{B}_j\ldots B_{k-2}},\delta^{[A_{1}}_{B_{j}}(\mathscr{C}\xi)^{A_2]}_{B_1\ldots \widehat{B}_j\ldots B_{k-2}}\right\rangle\\
&+4\sum_{j\neq l}\sum_{\substack {A_{1},A_2,\\B_{1},\ldots ,B_{k-2}}}\left\langle\delta^{[A_{1}}_{B_{j}}(\mathscr{C}\xi)^{A_2]}_{B_1\ldots \widehat{B}_j\ldots B_{k-2}},
\delta^{[A_{1}}_{B_{l}}(\mathscr{C}\xi)^{A_2]}_{B_1\ldots \widehat{B}_l\ldots B_{k-2}}\right\rangle:=S_1+S_2,
\end{aligned}
\end{equation}
by \eqref{Pf}.  Since $S_1=S_2=0$ if $A_1=A_2$, we only need to consider the summation over $A_1\neq A_2$.  We have
\begin{equation}\label{S1-l=2}
\begin{aligned}
 S_1&=\sum_{j}\sum_{B_{1},\ldots ,B_{k-2}}\sum_{A_1\neq A_2}\left|
\delta_{B_j}^{A_1}(\mathscr{C}\xi)^{ A_2}_{B_1\ldots \widehat{B}_j\ldots B_{k-2}}-\delta_{B_j}^{A_2}(\mathscr{C}\xi)^{A_1}_{B_1\ldots \widehat{B}_j\ldots B_{k-2}}\right|^2\\
&=\sum_{j}\sum_{B_{1},\ldots, \widehat{B}_j,\ldots, B_{k-2}}\sum_{A_1\neq A_2}\bigg(\left|
(\mathscr{C}\xi)^{A_2}_{B_1\ldots \widehat{B}_j\ldots B_{k-2}}\right|^2+\left|
(\mathscr{C}\xi)^{ A_1}_{B_1\ldots \widehat{B}_j\ldots B_{k-2}}\right|^2\bigg)\\
&=6(k-2)\sum_{A ,B_{1},\ldots ,B_{k-3}}\left|(\mathscr{C}\xi)^{A }_{B_1\ldots  B_{k-3}}\right|^2=6(k-2)\left\|\mathscr{C}(\xi)\right\|^2.
\end{aligned}
\end{equation}
By taking summation over $B_j$ at first and then $B_l$, we see that $S_2$ equals to
\begin{equation}\label{S2-l=2}
\begin{aligned}
&\sum_{j\neq l}\sum_{\substack {A_{1},A_2,\\B_{1},\ldots ,B_{k-2}}}\left\langle
\delta^{A_{1}}_{B_{j}}(\mathscr{C}\xi)^{ A_2}_{ \ldots \widehat{B}_j\ldots B_l\ldots }
-\delta^{A_{2}}_{B_{j}}(\mathscr{C}\xi)^{ A_1}_{ \ldots \widehat{B}_j\ldots B_l\ldots },\delta^{A_{1}}_{B_{l}}(\mathscr{C}\xi)^{A_2}_{ \ldots B_j\ldots\widehat{B}_l\ldots }-\delta^{A_{2}}_{B_{l}}(\mathscr{C}\xi)^{ A_1}_{ \ldots B_j\ldots\widehat{B}_l\ldots }\right\rangle\\
=&\sum_{j\neq l}\sum_{\substack {A_{1},A_2,\\ \ldots ,\widehat{B}_j,\ldots }}\bigg\{\left\langle
(\mathscr{C}\xi)^{A_2}_{B_1\ldots \widehat{B}_j\ldots B_l\ldots },\delta^{A_{1}}_{B_{l}}(\mathscr{C}\xi)^{A_2}_{B_1\ldots A_1\ldots\widehat{B}_l\ldots }\right\rangle
+\left\langle(\mathscr{C}\xi)^{A_1}_{B_1\ldots \widehat{B}_j\ldots B_l\ldots},\delta^{A_{2}}_{B_{l}}(\mathscr{C}\xi)^{A_1}_{B_1\ldots A_2\ldots\widehat{B}_l\ldots}\right\rangle\\
&\qquad\qquad -\left\langle(\mathscr{C}\xi)^{A_2}_{B_1\ldots \widehat{B}_j\ldots B_l\ldots},\delta^{A_{2}}_{B_{l}}(\mathscr{C}\xi)^{A_1}_{B_1\ldots A_1\ldots\widehat{B}_l\ldots}\right\rangle
-\left\langle(\mathscr{C}\xi)^{A_1}_{B_1\ldots \widehat{B}_j\ldots B_l\ldots},\delta^{A_{1}}_{B_{l}}(\mathscr{C}\xi)^{A_2}_{B_1\ldots A_2\ldots\widehat{B}_l\ldots}\right\rangle\bigg\}\\
=&2\sum_{j\neq l}\sum_{\substack {A_1,A_2,\\ \ldots, \widehat{B}_j,\ldots, \widehat{B}_l,\ldots  }}\left\langle
(\mathscr{C}\xi)^{ A_2}_{B_1\ldots \widehat{B}_j\ldots A_1\ldots },(\mathscr{C}\xi)^{A_2}_{B_1\ldots A_1\ldots\widehat{B}_l\ldots }\right\rangle\\
=&2\sum_{j\neq l}\sum_{A_2,B_{1},\ldots ,B_{k-3}}\left\|(\mathscr{C}\xi)^{ A_2}_{B_1\ldots  B_{k-3}}\right\|^2=2(k-3)(k-2)\left\|\mathscr{C}(\xi)\right\|^2.
\end{aligned}
\end{equation}
Here   last two terms in the right hand side of the first identity vanish by  $\mathscr{C}\circ\mathscr{C}f=0$ in \eqref{pro:contraction}. We relabel indices in the forth identity. Apply \eqref{S1-l=2}-\eqref{S2-l=2} to \eqref{CDf-2=s1+s2} to get the second identity of \eqref{pf=some-cf}.
\end{proof}

\section{The $L^2$ estimate }\label{section3}
\subsection{The Euclidean massless field operator}
S$\ddot{a}$mann, Wolf \cite{Samann} and Mason et al. \cite{Mason} used the embedding  $ \Bbb R^{5,1}\hookrightarrow \Bbb M\subseteq\Bbb C^{4\times 4}$:
  \begin{equation}\label{6-dim-emd}
  \begin{aligned}
 (x^{0},x^{1},\ldots,x^{5}) &\longmapsto
 \left(\begin{array}{cccc}
 0 &x^{0}+x^{5} &-x^{3}-ix^{4}&-x^{1}+ix^{2}\\
 -x^{0}-x^{5} &0 &-x^{1}-ix^{2}&x^{3}-ix^{4}\\
 x^{3}+ix^{4} &x^{1}+ix^{2}&0 &-x^{0}+x^{5}\\
 x^{1}-ix^{2} &-x^{3}+ix^{4}&x^{0}-x^{5}&0\\
 \end{array}\right),
 \end{aligned}
 \end{equation}
 to study massless field equation, where $\Bbb {CM}=\wedge^{2}\Bbb C^{4}$ is the space of complex antisymmetric $4\times 4$ matrices of dimension $6$. This embedding is the generalization of the embedding of the  Minkowski space  into $2\times 2$-Hermitian matrix space: $\Bbb R^{3,1} \hookrightarrow  \Bbb C^{2\times 2}$,
\begin{equation*}\label{4-embed}
 \begin{aligned}
 (x^{0},x^{1},x^{2},x^{3}) &\longmapsto
 \left(\begin{array}{cc}
 x^{0}+x^{1} &x^{2}+ix^{3}\\
 x^{2}-ix^{3} &x^{0}-x^{1}
 \end{array}\right).
 \end{aligned}
 \end{equation*}
  The advantage of this embedding is that ones can use two-component notation generalizing Penrose's two-spinor notation \cite{twistor-penrose1,twistor-penrose2} and apply the twistor method to study these operators.
On the other hand, we can  embed    $4$-dimensional Euclidean space, the quaternionic space $\Bbb H$, into a real subspace of $\Bbb C^{4}$ by $\Bbb H\hookrightarrow \Bbb C^{2\times 2}$,
  \begin{equation*}
  \begin{aligned}
 x^{0}+ix^{1}+jx^{2}+kx^{3} &\longmapsto
 \left(\begin{array}{cc}
 x^{0}+ix^{1} &-x^{2}-ix^{3}\\
 x^{2}-ix^{3} &x^{0}-ix^{1}
 \end{array}\right),
 \end{aligned}
 \end{equation*}
 and obtain the elliptic version of the differential operators  corresponding to  massless field equations of higher spins on $\Bbb R^{4}$, which  are called $k$-Cauchy-Fueter operators in \cite{Wang2010}. For the higher-dimensional case, we use the embedding $\Bbb H^{n} \hookrightarrow \Bbb C^{2n\times2}$, and also apply the twistor method to study $k$-Cauchy-Fueter equations,  e.g. to find series expansion of $k$-regular functions on $\Bbb H^{n}$ by Penrose integral formula (cf. \cite{wangkang1}\cite{wangkang2}). Motivated by the quaternionic case, we introduce the embedding of   $6$-dimensional Euclidean space into $\Bbb C^{4\times4}$ in \cite{wangkang3} by $\iota: \Bbb R^{6}\hookrightarrow \wedge^{2}\Bbb C^{4} \subseteq\Bbb C^{4\times4}$ given by
  \begin{equation}\label{embed-1}
  \begin{aligned}
 \Bbb R^6\ni x=(x^{0},x^{1},\ldots,x^{5}) &\longmapsto
 \iota(x)=\left(\begin{array}{cccc}
 0 &ix^{0}+x^{5} &x^{3}+ix^{4}&x^{1}+ix^{2}\\
 -ix^{0}-x^{5} &0 &x^{1}-ix^{2}&-x^{3}+ix^{4}\\
 -x^{3}-ix^{4} &-x^{1}+ix^{2}&0 &-ix^{0}+x^{5}\\
 -x^{1}-ix^{2} &x^{3}-ix^{4}&ix^{0}-x^{5}&0\\
 \end{array}\right).
 \end{aligned}
 \end{equation}
 This is essentially the embedding \eqref{6-dim-emd} with $x^{0}$ replaced by $ix^{0}$, up to conjugate and sign of some terms.
The Euclidean version $\mathcal{D}_0$ of these massless field operators are
 \begin{equation*}\label{k-mono-0}
 \begin{aligned}
 \mathcal{D}_0: C^{\infty}(\mathbb R^{6},\odot^{k} \mathbb C^{4}) &\longrightarrow
 C^{\infty}(\mathbb R^{6}, \mathbb C^{4}\otimes \odot^{k-1} \mathbb C^{4})
 \end{aligned}
 \end{equation*}
with
\begin{equation}\label{expre-D0}
   \mathcal{D}_0(f)^{A}_{B_{2}\ldots B_{k}} :=\sum_{B_1}\nabla^{B_1A}f_{B_{1}B_{2}\ldots B_{k}},
 \end{equation}
  where $\nabla^{AB}$ are complex vector fields and the matrix  $(\nabla^{AB})$ is just the embedding matrix \eqref{embed-1} with the coordinate
  $x^{j}$ replaced by $\partial_{x^{j}}$, i.e.,
   \begin{equation} \label{k-operator2}
 \left(\nabla^{AB}\right):=\left(\begin{array}{cccc}
 0 &i\partial_{x^{0}}+\partial_{x^{5}} &\partial_{x^{3}}+i\partial_{x^{4}} &\partial_{x^{1}}+i\partial_{x^{2}}\\
 -i\partial_{x^{0}}-\partial_{x^{5}} &0 &\partial_{x^{1}}-i\partial_{x^{2}} &-\partial_{x^{3}}+i\partial_{x^{4}}\\
 -\partial_{x^{3}}-i\partial_{x^{4}} &-\partial_{x^{1}}+i\partial_{x^{2}} &0 &-i\partial_{x^{0}}+\partial_{x^{5}}\\
 -\partial_{x^{1}}-i\partial_{x^{2}} &\partial_{x^{3}}-i\partial_{x^{4}} &i\partial_{x^{0}}-\partial_{x^{5}} &0
 \end{array}\right).
  \end{equation}
   Define the differential operator $\mathcal{D}_l:C^\infty (\mathbb{R}^6,V_l)\longrightarrow C^\infty (\mathbb{R}^6,V_{l+1})$   by
 \begin{equation}\label{ope-Dl}
 \begin{aligned}
&(\mathcal{D}_lf)^{A_1 \ldots A_{l+1}}_{B_{2} \ldots B_{k-l}}:=\sum_{B_1}\nabla^{B_1[A_1}f_{B_1B_{2}
\ldots B_{k-l}}^{A_2 \ldots A_{l+1}]}.
\end{aligned}
\end{equation}

\begin{pro}\label{prop:S-Prime-com}
The sequence
\begin{equation}\label{diff-complex}
0\longrightarrow C^\infty (\mathbb{R}^6,V_0)\overset{\mathcal{D}_0}\longrightarrow C^\infty (\mathbb{R}^6,V_1)\overset{\mathcal{D}_1}\longrightarrow
C^\infty (\mathbb{R}^6,V_2)\overset{\mathcal{D}_2}\longrightarrow C^\infty (\mathbb{R}^6,V_3)\overset{\mathcal{D}_3}\longrightarrow C^\infty
(\mathbb{R}^6,V_4)\longrightarrow
0
\end{equation}
is a differential complex, i.e., $\mathcal{D}_{l+1}\mathcal{D}_l=0$.
\end{pro}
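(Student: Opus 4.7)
The plan is to verify $\mathcal{D}_{l+1}\mathcal{D}_l=0$ by direct index computation, in the same spirit as the classical proof that the exterior derivative satisfies $d^2=0$. First, I would substitute definition \eqref{ope-Dl} into itself to obtain
\begin{equation*}
(\mathcal{D}_{l+1}\mathcal{D}_l f)^{A_1 \ldots A_{l+2}}_{B_2 \ldots B_{k-l-1}} = \sum_{B_0,B_1} \nabla^{B_1[A_1}\nabla^{|B_0|[A_2} f^{A_3 \ldots A_{l+2}]]}_{B_0 B_1 B_2 \ldots B_{k-l-1}},
\end{equation*}
where the inner bracket antisymmetrizes $A_2,\ldots,A_{l+2}$ and the outer one adjoins $A_1$. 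By (a trivial adaptation of) Lemma \ref{lem:symzhankai}(3), iterated antisymmetrization in a nested pair of index subsets collapses to a single antisymmetrization over all of $A_1, A_2, \ldots, A_{l+2}$.

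Next, I would set
\begin{equation*}
T^{A_1 \ldots A_{l+2}}_{B_2 \ldots B_{k-l-1}} := \sum_{B_0,B_1} \nabla^{B_1 A_1}\nabla^{B_0 A_2} f^{A_3 \ldots A_{l+2}}_{B_0 B_1 B_2 \ldots B_{k-l-1}}
\end{equation*}
and show that $T$ is symmetric under $A_1\leftrightarrow A_2$. The argument runs through three steps: (i) relabel the dummy summation indices $B_0\leftrightarrow B_1$; (ii) exchange the order of the two first-order constant-coefficient operators $\nabla^{B_1 A_1}$ and $\nabla^{B_0 A_2}$, which commute by inspection of the explicit matrix \eqref{k-operator2}; and (iii) apply the symmetry of $f\in\odot^{k-l}\mathbb{C}^4\otimes\wedge^l\mathbb{C}^4$ in its lower indices $B_0,B_1$. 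These three steps together transform $T^{A_2 A_1 A_3 \ldots A_{l+2}}$ back into $T^{A_1 A_2 A_3 \ldots A_{l+2}}$.

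Finally, since $T$ is symmetric in the pair $(A_1,A_2)$ while the outer operation antisymmetrizes over $[A_1 A_2 \ldots A_{l+2}]$, pairing each permutation $\sigma$ in \eqref{def:sym-anti} with $\sigma\circ(1\,2)$ groups the sum into pairs of equal value and opposite sign, so the total vanishes. Hence $(\mathcal{D}_{l+1}\mathcal{D}_l f)^{A_1 \ldots A_{l+2}}_{B_2 \ldots B_{k-l-1}}=0$ as required. I do not anticipate a genuine obstacle: this is pure index bookkeeping, and the only care required is to collapse the nested antisymmetrizations via Lemma \ref{lem:symzhankai}(3) before reducing matters to a pairwise-symmetry check on $T$.
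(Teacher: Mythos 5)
Your proposal is correct and follows essentially the same route as the paper: substitute \eqref{ope-Dl} into itself, collapse the nested antisymmetrizations via Lemma \ref{lem:symzhankai}(3), and then kill the result by the standard symmetric-antisymmetric cancellation, using the relabeling of the dummy lower indices, the commutativity of the constant-coefficient operators $\nabla^{AB}$, and the symmetry of $f$ in its lower indices. The only cosmetic difference is that you antisymmetrize over all upper indices and check pair symmetry of the unantisymmetrized tensor $T$, while the paper isolates the antisymmetrization over the pair $(A_1,A_2)$ and shows that piece vanishes directly; the underlying computation is identical.
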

\begin{proof}
By  definition, we have
\begin{equation}\label{composition}
\begin{aligned}
(\mathcal{D}_{l+1}\mathcal{D}_l f)^{A_1 \ldots A_{l+2}}_{B_3 \ldots B_{k-l}} =&
\sum_{B_{2}}\nabla^{B_{2}[A_1}(\mathcal{D}_l f)_{B_2 \ldots B_{k-l}}^{A_2 \ldots
A_{l+2}]}
=\sum_{B_{1},B_{2}}\nabla^{B_{2}[A_1}\nabla^{|B_{1}|[A_2}f_{B_1 \ldots B_{k-l}}^{A_3 \ldots
A_{l+2}]]}\\
=&\sum_{B_{1},B_{2}}\nabla^{B_{2}[[A_1}\nabla^{|B_{1}|A_2]}f_{B_1 \ldots B_{k-l}}^{A_3 \ldots
A_{l+2}]},
\end{aligned}
\end{equation}
by using Lemma \ref{lem:symzhankai} (3). Here $[\ldots |\ldots|\ldots]$ means we do not antisymmetrize indices inside $ | \,\,\,\,  | $.
Note that $\nabla^{BC}$ commutates with $\nabla^{DA}$ since they are differential operators with constant coefficients. So
\begin{equation}\label{antisym-AB}
\begin{aligned}
2\sum_{B,D}\nabla^{B[A}\nabla^{|D|C]}f_{ \ldots BD}^{ \ldots }
=&\sum_{B,D}\big(\nabla^{BA}\nabla^{DC}-\nabla^{BC}\nabla^{DA}\big)f_{ \ldots BD}^{ \ldots }\\
=&\sum_{B,D}\nabla^{BA}\nabla^{DC}f_{ \ldots BD}^{ \ldots }-\sum_{B,D}\nabla^{BA}\nabla^{DC}f_{ \ldots DB}^{ \ldots }=0,
\end{aligned}
\end{equation}
by relabeling $B$ and $D$ in the second identity and $f$ symmetric in $B$ and $D$.  \eqref{composition} vanishes by \eqref{antisym-AB}.
\end{proof}

Recall that $\mathscr C (\mathcal{D}_0 f)=0$ for any $f\in C^1(U,\mathscr{V}_0)$ (cf. \cite[Introduction]{wangkang3}). This fact is true in general.

\begin{pro}\label{prop:S-complex}
 $(1)$ For $f\in C^\infty (\mathbb{R}^6,\mathscr{V}_l)$, we have  $\mathcal{D}_l f\in C^\infty(\mathbb{R}^6,\mathscr{V}_{l+1})$, $l=0,1,2$;\\
 $(2)$ $\mathscr{V}_4=\{0\}$.
\end{pro}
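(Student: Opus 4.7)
The plan is to handle (1) and (2) separately, with part (1) the substantive step and (2) reducing to linear algebra on $\wedge^4\mathbb{C}^4$. For (1), the goal is to verify $\mathscr{C}(\mathcal{D}_l f)=0$ whenever $\mathscr{C}(f)=0$. Using the symmetry of $\mathcal{D}_l f$ in its subscripts to move the contracted subscript to the first position, I would write
$$\mathscr{C}(\mathcal{D}_l f)^{A_2\ldots A_{l+1}}_{B_3\ldots B_{k-l}}=\sum_{C}(\mathcal{D}_l f)^{CA_2\ldots A_{l+1}}_{CB_3\ldots B_{k-l}}=\sum_{B_1,C}\nabla^{B_1[C}f^{A_2\ldots A_{l+1}]}_{B_1CB_3\ldots B_{k-l}}.$$
Since the summand is already antisymmetric in $A_2,\ldots,A_{l+1}$, Lemma \ref{lem:symzhankai}(2) expands the antisymmetrization over $C,A_2,\ldots,A_{l+1}$ into a diagonal piece plus $l$ off-diagonal contributions. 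The diagonal piece $\sum_{B_1,C}\nabla^{B_1C}f^{A_2\ldots A_{l+1}}_{B_1CB_3\ldots B_{k-l}}$ vanishes because the matrix \eqref{k-operator2} takes values in $\wedge^2\mathbb{C}^4$, so $\nabla^{B_1C}$ is antisymmetric in $B_1,C$ while $f$ is symmetric in those two subscripts. Each off-diagonal term, for $s=2,\ldots,l+1$, has the form $\pm\sum_{B_1,C}\nabla^{B_1A_s}f^{A_2\ldots A_{s-1}CA_{s+1}\ldots A_{l+1}}_{B_1CB_3\ldots B_{k-l}}$; exploiting antisymmetry of $f$ in its superscripts to shift $C$ to the first superscript slot (introducing a sign $(-1)^{s-2}$) and symmetry of $f$ in its subscripts to shift $C$ to the last subscript slot, the inner sum $\sum_C$ collapses to $\pm\mathscr{C}(f)^{A_2\ldots\widehat{A_s}\ldots A_{l+1}}_{B_1B_3\ldots B_{k-l}}$, which is zero by the hypothesis $f\in\mathscr{V}_l$. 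This yields (1).

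For (2), I would use that $\dim\wedge^4\mathbb{C}^4=1$, spanned by the Levi-Civita symbol $\epsilon^{A_1A_2A_3A_4}$. Thus any $f\in V_4$ factors as $f^{A_1A_2A_3A_4}_{B_1\ldots B_{k-4}}=\epsilon^{A_1A_2A_3A_4}\phi_{B_1\ldots B_{k-4}}$ for some $\phi\in\odot^{k-4}\mathbb{C}^4$. Substituting into \eqref{def:contraction} gives $\mathscr{C}(f)^{A_2A_3A_4}_{B_1\ldots B_{k-5}}=\sum_C\epsilon^{CA_2A_3A_4}\phi_{B_1\ldots B_{k-5}C}$, and fixing $(A_2,A_3,A_4)$ to be three distinct values in $\{1,2,3,4\}$ leaves a unique complementary $C$ with $\epsilon^{CA_2A_3A_4}\neq 0$, so $\mathscr{C}(f)=0$ forces $\phi_{B_1\ldots B_{k-5}C}=0$. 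Letting $C$ and the $B$-indices vary over all choices yields $\phi\equiv 0$, whence $f=0$ and $\mathscr{V}_4=\{0\}$.

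The main technical obstacle is the sign bookkeeping in (1) when the contracted index $C$ is transported across the antisymmetric superscripts of $f$; once that is under control, everything reduces to Lemma \ref{lem:symzhankai}(2), the $\wedge^2$-valuedness of the matrix $(\nabla^{AB})$, and the hypothesis $\mathscr{C}(f)=0$.
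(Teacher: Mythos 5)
Your proof is correct and follows essentially the same route as the paper: for (1) you contract $\mathcal{D}_l f$, expand the antisymmetrization with Lemma \ref{lem:symzhankai}(2), kill the diagonal term by the antisymmetry of $\nabla^{B_1C}$ against the symmetry of $f$ in those two subscripts, and kill the off-diagonal terms using $\mathscr{C}f=0$, exactly as in \eqref{3.2-2}. For (2) your factorization $f=\epsilon\otimes\phi$ through the one-dimensional space $\wedge^4\mathbb{C}^4$ is only a cosmetic repackaging of the paper's component argument (both rest on the observation that for three fixed distinct superscripts only one value of the contracted index survives), so the two proofs are essentially identical.
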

\begin{proof}
$(1)$ This is because
\begin{equation}\label{3.2-2}
\begin{aligned}
\mathscr C (\mathcal{D}_l f)_{B_1 \ldots B_{k-l-2}}^{A_1 \ldots A_{l}}=&\sum_{C}\mathcal{D}_l (f)_{B_1 \ldots B_{k-l-2}C}^{CA_1 \ldots A_{l}}=\sum_{C,B}\nabla^{B[C}f_{B_1\ldots B_{k-l-2}CB}^{A_1 \ldots A_{l}]} \\
=&\frac{1}{l+1}\sum_{C,B}\Big(\nabla^{BC}f_{B_1 \ldots B_{k-l-2}CB}^{A_1 \ldots A_{l}}-\sum_{s=1}^{l}\nabla^{B A_s}f_{B_1 \ldots B_{k-l-2}CB}^{A_1 \ldots C\ldots  A_{l}}\Big)=0,
\end{aligned}
\end{equation}
by using \eqref{def:contraction}, Lemma \ref{lem:symzhankai} (2), $\mathscr C f=0$ and $\nabla^{BC}$ antisymmetric in $B$ and $C$ while $f$ symmetric in $B$ and $C$.

$(2)$ For $f=(f^{A_1A_2A_3A_4}_{B_1\ldots B_{k-4}})\in \mathscr{V}_4$, it is obvious that  $f^{A_1A_2A_3A_4}_{B_1\ldots B_{k-4}}\neq 0$ only if $\{A_1, A_2, A_3, A_4\}=\{1,2,3,4\}$ and so $B_j$ must equal to one of $A_1,\ldots,A_4$. Without loss of generality, we assume $A_1=B_{k-4}$. It follows from $\mathscr C (f)=0$ that
$$f^{A_1A_2A_3A_4}_{B_1\ldots B_{k-5}A_1}=\sum_{C}f^{CA_2A_3A_4}_{B_1\ldots B_{k-5}C}=0,$$
for any fixed $A_2, \ldots, A_{4},B_1, \ldots, B_{k-5}$.
 So $f=0$.
\end{proof}
Since $\mathscr{D}_l$ is the restriction of $\mathcal{D}_l$ on $C^\infty (\mathbb{R}^6,\mathscr{V}_l)$, we have the following corollary.
\begin{cor}
\eqref{elliptic complex} is a differential complex.
\end{cor}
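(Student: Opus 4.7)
The plan is to derive this Corollary immediately from the two preceding propositions; no fresh computation is needed. Since by definition $\mathscr{D}_l$ is the restriction of $\mathcal{D}_l$ to $\mathscr{V}_l$-valued smooth functions, only two points must be checked for \eqref{elliptic complex} to be a differential complex: first, that each restricted arrow genuinely lands in $C^{\infty}(\mathbb{R}^6,\mathscr{V}_{l+1})$ (so the arrows in the sequence are well-defined), and second, that consecutive arrows satisfy $\mathscr{D}_{l+1}\mathscr{D}_l = 0$.

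For well-definedness, I fix $f \in C^\infty(\mathbb{R}^6, \mathscr{V}_l)$ and regard it as a smooth section of $V_l$ through the inclusion $\mathscr{V}_l = \ker \mathscr{C}\vert_{V_l} \hookrightarrow V_l$. Applying $\mathcal{D}_l$ as defined in \eqref{ope-Dl} produces a priori a section of $V_{l+1}$, and Proposition \ref{prop:S-complex}(1) asserts precisely $\mathscr{C}(\mathcal{D}_l f) = 0$, which is the statement that $\mathcal{D}_l f$ actually takes values in $\mathscr{V}_{l+1}$. This covers $l = 0, 1, 2$. The truncation at $\mathscr{V}_3$ in \eqref{elliptic complex} is legitimate by Proposition \ref{prop:S-complex}(2): since $\mathscr{V}_4 = \{0\}$, the operator $\mathscr{D}_3$ would automatically have trivial target and is therefore omitted. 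The complex condition is then read off at once from Proposition \ref{prop:S-Prime-com}: for any $f \in C^\infty(\mathbb{R}^6, \mathscr{V}_l)$ one has $\mathscr{D}_{l+1}\mathscr{D}_l f = \mathcal{D}_{l+1}\mathcal{D}_l f = 0$, since Proposition \ref{prop:S-Prime-com} already establishes this identity on the larger $V_l$-valued spaces.

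There is no genuine obstacle in this step; the analytic and combinatorial content, both the commutator vanishing $\sum_{B,D}\nabla^{B[A}\nabla^{|D|C]}\cdot = 0$ that drives Proposition \ref{prop:S-Prime-com} and the contraction-kernel computation behind Proposition \ref{prop:S-complex}(1), has already been discharged. The Corollary is the packaging statement that these two facts together make the contracted sequence \eqref{elliptic complex} into a well-defined differential complex, and its proof consists only of assembling the citations above.
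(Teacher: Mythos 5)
Your proposal is correct and matches the paper's reasoning: the paper derives this corollary precisely by noting that $\mathscr{D}_l$ is the restriction of $\mathcal{D}_l$, so Proposition \ref{prop:S-complex} gives that $\mathcal{D}_l$ preserves $\mathscr{V}$-valued functions (and $\mathscr{V}_4=\{0\}$), while Proposition \ref{prop:S-Prime-com} gives $\mathcal{D}_{l+1}\mathcal{D}_l=0$, hence $\mathscr{D}_{l+1}\mathscr{D}_l=0$. No gap; your assembly of the two propositions is exactly the intended argument.
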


\subsection{The formal adjoint operators}

Let $\epsilon_{ABCD}=\epsilon^{ABCD}$ be the sign of the permutation from $(1,2,3,4)$ to $(A, B,C, D)$. Then $\epsilon_{ABCD}$ vanishes if $\{A,B,C,D\}\neq \{1,2,3,4\}$. We  use $\epsilon^{ABCD}$ and $\epsilon_{ABCD}$ to raise and low indices respectively. For example,
 \begin{equation*}\label{lashang}
 \nabla_{AB}:=\frac{1}{2}\sum_{C,D}\epsilon_{ABCD}\nabla^{CD}.
 \end{equation*}
 Then we have
\begin{equation*}\label{lashang-2}
\nabla^{AB}=\frac{1}{2}\sum_{C,D=1}^{4}\epsilon^{ABCD}\nabla_{CD},
 \end{equation*}
 since $\sum_{C,D}\epsilon_{ABCD}\epsilon^{CDEF}=2(\delta_{A}^{E}\delta_{B}^{F}-\delta_{A}^{F}\delta_{B}^{E})$ by definition (cf. \cite[P.6]{Mason}).
 We know that \cite [Proposition 2.1]{wangkang3} the operators $\nabla^{AB}$ and $\nabla_{AB}$ defined above satisfy
\begin{equation}\label{d1-delta1}
\overline{\nabla_{AB}}=\nabla^{AB}\,\,\,\,\,\,\text{and}\,\,\,\,\,\,\,\sum_{A}\nabla_{AB_{1}}\nabla^{AB_{2}}=\delta_{B_{1}}^{B_{2}}\Delta.
\end{equation}

Let $\Theta_{AB}$ be a scalar  differential operator defined by
\begin{equation}\label{Theta-AB}
\Theta_{AB}f:=-e^\varphi\nabla_{AB}(e^{-\varphi}f)=-\nabla_{AB}f+\big(\nabla_{AB}\varphi\big)f.
\end{equation}

\begin{lem}\label{fenbujifen}
The formal adjoint of the scalar differential operator $\nabla^{AB}$ is $\Theta_{AB}$.
\end{lem}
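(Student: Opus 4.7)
The plan is a direct integration-by-parts computation, using two crucial facts that are already in hand: first, the conjugation identity $\overline{\nabla_{AB}}=\nabla^{AB}$ from \eqref{d1-delta1}, and second, that $\nabla^{AB}$ is a differential operator with \emph{constant} complex coefficients (read off from the matrix \eqref{k-operator2}), so it moves under integration by parts without generating an extra coefficient derivative, while $\varphi=|x|^2$ is real.

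The natural starting point is the right-hand side of the desired identity, because the weight $e^{-\varphi}$ in the measure will cancel against the $e^{\varphi}$ in the definition \eqref{Theta-AB} of $\Theta_{AB}$. Concretely, for $f,g\in C_0^\infty(\mathbb R^6)$ I would write
\begin{equation*}
\langle f,\Theta_{AB}g\rangle_\varphi=\int_{\Bbb R^6} f\,\overline{\Theta_{AB}g}\,e^{-\varphi}\,dx
=-\int_{\Bbb R^6} f\,\overline{\,e^{\varphi}\nabla_{AB}(e^{-\varphi}g)\,}\,e^{-\varphi}\,dx.
\end{equation*}
Since $\varphi$ is real and $\overline{\nabla_{AB}}=\nabla^{AB}$ by \eqref{d1-delta1}, pushing the conjugate inside yields $\overline{\nabla_{AB}(e^{-\varphi}g)}=\nabla^{AB}(e^{-\varphi}\bar g)$, so the two exponentials cancel and the expression reduces to $-\int_{\Bbb R^6} f\,\nabla^{AB}(e^{-\varphi}\bar g)\,dx$.

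At this point one applies ordinary integration by parts. Because $\nabla^{AB}$ is a first-order scalar operator with constant coefficients and both $f$ and $e^{-\varphi}\bar g$ decay (in fact $f$ has compact support, and a standard density argument extends the statement to the class actually needed), no boundary terms appear and one picks up a single minus sign:
\begin{equation*}
-\int_{\Bbb R^6} f\,\nabla^{AB}(e^{-\varphi}\bar g)\,dx
=\int_{\Bbb R^6} (\nabla^{AB}f)\,e^{-\varphi}\bar g\,dx
=\langle \nabla^{AB}f,g\rangle_\varphi.
\end{equation*}
This chain of equalities is exactly the defining identity of the formal adjoint, giving $(\nabla^{AB})^\ast=\Theta_{AB}$.

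There is no serious obstacle here; the only points that need a moment of care are (i) making sure the conjugate of $\nabla_{AB}$ is correctly identified with $\nabla^{AB}$ via \eqref{d1-delta1}, rather than with $-\nabla^{AB}$ or $\nabla_{AB}$ itself, and (ii) keeping track of the single sign produced by the integration by parts, which is precisely what is needed to match the sign convention in \eqref{Theta-AB}. The compact-support assumption on $f$ and $g$ for the formal adjoint statement kills all boundary contributions, so the argument is purely algebraic once the exponential weights are matched.
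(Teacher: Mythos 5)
Your argument is correct and is essentially the paper's own proof read in the reverse direction: both rest on the identity $\overline{\nabla_{AB}}=\nabla^{AB}$ from \eqref{d1-delta1}, the reality of $\varphi$, and a single integration by parts (Stokes) with no boundary terms thanks to compact support, the only cosmetic difference being that you unwind $\Theta_{AB}$ and cancel the exponentials before integrating by parts, whereas the paper writes $(\nabla^{AB}u)\bar v e^{-\varphi}$ as a total derivative minus $u\,\overline{\nabla_{AB}(v e^{-\varphi})}$ and then applies Stokes. No gap.
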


\begin{proof}
For any $u,v\in C_0^1(\mathbb{R}^6,\mathbb{C})$, we have
\begin{eqnarray*}
\langle\nabla^{AB}u,v\rangle_\varphi&=&\int\nolimits_{\mathbb{R}^6}(\nabla^{AB}u)\bar{v}e^{-\varphi}dV
=\int\nolimits_{\mathbb{R}^6}\bigg\{\nabla^{AB}(u\bar{v}e^{-\varphi})-u\cdot\overline{\nabla_{AB}(v e^{-\varphi})}\bigg\}dV\\
&=& \int\nolimits_{\mathbb{R}^6}u\cdot\overline{\Theta_{AB}v}e^{-\varphi}dV=\langle u,\Theta_{AB}v\rangle_\varphi,
\end{eqnarray*}
by using  Stocks' formula.
\end{proof}

\begin{pro}\label{prop:ad-Dl}
 The formal adjoint $\mathcal{D}_l^{\ast}$ of $\mathcal{D}_l$ is given by
\begin{equation}\label{ad-Dl}
(\mathcal{D}_l^{\ast} f)_{B_{1} \ldots B_{k-l}}^{A_{1}\ldots A_l}=-\sum_{E}\Theta _{E(B_{1}} f_{B_{2} \ldots B_{k-l})}^{E A_{1}\ldots A_{l}},
\end{equation}
for $f\in C_0^{\infty}(\Bbb R^6, V_{l+1})$, $l=0,1,2.$
\end{pro}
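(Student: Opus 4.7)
The plan is to compute $\langle \mathcal{D}_l u, f\rangle_\varphi$ for arbitrary $u\in C_0^\infty(\Bbb R^6, V_l)$ and $f\in C_0^\infty(\Bbb R^6, V_{l+1})$, and read off $\mathcal{D}_l^*f$ from the result. The starting point is the coordinate expression
$$\langle \mathcal{D}_l u, f\rangle_\varphi = \sum \int_{\Bbb R^6} \nabla^{B_1[A_1} u^{A_2 \ldots A_{l+1}]}_{B_1 B_2 \ldots B_{k-l}}\, \overline{f^{A_1 \ldots A_{l+1}}_{B_2 \ldots B_{k-l}}}\,e^{-\varphi}\,dx,$$
where the sum runs over all free indices.

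First I would use Lemma \ref{lem:sym}(2): because $f$ is already antisymmetric in the superscripts $A_1,\ldots,A_{l+1}$, the antisymmetrization bracket on the left factor can be dropped, reducing the integrand to $\nabla^{B_1 A_1} u^{A_2 \ldots A_{l+1}}_{B_1 \ldots B_{k-l}}\,\overline{f^{A_1 \ldots A_{l+1}}_{B_2 \ldots B_{k-l}}}$. Next, Lemma \ref{fenbujifen} lets me integrate the scalar operator $\nabla^{B_1 A_1}$ by parts, transferring it to the $f$-factor as $\Theta_{B_1 A_1}$:
$$\langle \mathcal{D}_l u, f\rangle_\varphi = \sum \int_{\Bbb R^6} u^{A_2 \ldots A_{l+1}}_{B_1 \ldots B_{k-l}}\, \overline{\Theta_{B_1 A_1} f^{A_1 \ldots A_{l+1}}_{B_2 \ldots B_{k-l}}}\,e^{-\varphi}\,dx.$$

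Then I would use the antisymmetry of $\Theta_{AB}$ in its two indices, which follows from the antisymmetry of the matrix $(\nabla^{AB})$ in \eqref{k-operator2} together with the definition $\nabla_{AB}=\frac12\sum_{CD}\epsilon_{ABCD}\nabla^{CD}$ and $\Theta_{AB}=-\nabla_{AB}+(\nabla_{AB}\varphi)$; this produces the sign $\Theta_{B_1 A_1}=-\Theta_{A_1 B_1}$ in the statement. Relabeling $A_1\to E$ and shifting $(A_2,\ldots,A_{l+1})\to(A_1,\ldots,A_l)$ then writes the pairing as
$$\langle \mathcal{D}_l u, f\rangle_\varphi = -\sum \int_{\Bbb R^6} u^{A_1 \ldots A_l}_{B_1 \ldots B_{k-l}}\, \overline{\sum_E \Theta_{E B_1} f^{E A_1 \ldots A_l}_{B_2 \ldots B_{k-l}}}\,e^{-\varphi}\,dx.$$

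Finally, since $u$ is symmetric in $B_1,\ldots,B_{k-l}$, Lemma \ref{lem:sym}(1) allows me to symmetrize the right-hand factor over these indices without changing the value of the integral, yielding
$$\langle \mathcal{D}_l u, f\rangle_\varphi = \bigl\langle u,\, -\sum_E \Theta_{E(B_1} f^{E A_1 \ldots A_l}_{B_2 \ldots B_{k-l})}\bigr\rangle_\varphi,$$
which is exactly the claimed formula \eqref{ad-Dl}. There is no real obstacle here: the proof is a direct computation, and the only point needing care is the bookkeeping of the antisymmetry of $\Theta$ (producing the overall minus sign) and the correct use of Lemma \ref{lem:sym} on each side to eliminate, respectively, the $[\cdot]$ on the $A$-indices and install the $(\cdot)$ on the $B$-indices.
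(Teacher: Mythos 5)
Your proposal is correct and follows essentially the same route as the paper: pair $\mathcal{D}_l u$ against $f$, drop the antisymmetrization bracket via Lemma \ref{lem:sym}(2), integrate by parts with Lemma \ref{fenbujifen} to convert $\nabla^{B_1A_1}$ into $\Theta_{B_1A_1}$, use $\Theta_{B_1A_1}=-\Theta_{A_1B_1}$ for the sign, and install the symmetrization over the subscripts via Lemma \ref{lem:sym}(1). The only (immaterial) difference is notational, as the paper writes the computation as $\langle f,\mathcal{D}_l^{\ast}h\rangle_\varphi=\langle\mathcal{D}_l f,h\rangle_\varphi$ with the roles of your $u$ and $f$ swapped.
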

\begin{proof}
By definition of the formal adjoint operator, for any $h\in C_0^\infty(\mathbb{R}^6,V_{l+1})$, we have
\begin{equation*}
\begin{aligned}\langle f,\mathcal{D}_{l}^{\ast}h\rangle_\varphi=
&\langle \mathcal{D}_{l} f,h\rangle_{\varphi}
=\sum_{\substack {A_{1},\ldots ,A_{l+1}, \\B_{2},\ldots ,B_{k-l}}}
\left\langle\sum_{B_{1}}\nabla^{B_{1}[A_{1}} f_{B_{1} \ldots B_{k-l}}^{A_{2}\ldots A_{l+1}]},h_{B_2 \ldots B_{k-l}}^{A_1 \ldots A_{l+1}}\right\rangle_\varphi\\
=& \sum_{A_{1},\ldots,B_{1},\ldots }
\left\langle \nabla^{ B_{1}A_{1}} f_{B_{1} \ldots B_{k-l}}^{A_{2}\ldots A_{l+1}},h_{B_2 \ldots B_{k-l}}^{A_1 \ldots A_{l+1}}\right\rangle_\varphi
=\sum_{A_{1},\ldots,B_{1},\ldots }\left\langle f_{B_{1} \ldots B_{k-l}}^{A_{2}\ldots A_{l+1}},
\Theta_{B_{1}A_{1}}h_{B_2 \ldots B_{k-l}}^{A_1 \ldots A_{l+1}}\right\rangle_\varphi\\
=&\sum_{A_{1},\ldots,B_{1},\ldots }\left\langle f_{B_{1} \ldots B_{k-l}}^{A_{2}\ldots A_{l+1}},
     -\sum_{A_{1}}\Theta_{A_{1}(B_{1}}h_{B_2 \ldots B_{k-l})}^{A_1 \ldots A_{l+1}}\right\rangle_\varphi
\end{aligned}
\end{equation*}
by using Lemma \ref{lem:sym} twice and  Lemma \ref{fenbujifen}. The result follows.
\end{proof}

\begin{lem}\label{lem:jiaohuanzi}
For $\varphi=\left|x\right|^2$, we have
$$[\nabla^{AB},\Theta_{CD}]=8\delta^A_{[C}\delta^B_{D]}.$$
\end{lem}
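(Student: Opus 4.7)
The plan is to apply the commutator $[\nabla^{AB}, \Theta_{CD}]$ to an arbitrary smooth test function $f$, expand $\Theta_{CD}$ via its definition \eqref{Theta-AB}, and reduce the statement to a purely algebraic identity about the embedding matrix. Writing $\Theta_{CD}f = -\nabla_{CD}f + (\nabla_{CD}\varphi)f$, and observing that $\nabla^{AB}$ and $\nabla_{CD}$ are constant-coefficient first-order operators that therefore commute as operators, the two second-order terms $-\nabla^{AB}\nabla_{CD}f$ and $\nabla_{CD}\nabla^{AB}f$ cancel in the commutator. The Leibniz rule applied to $\nabla^{AB}[(\nabla_{CD}\varphi)f]$ produces $(\nabla_{CD}\varphi)\nabla^{AB}f$, which cancels with the corresponding term from $\Theta_{CD}(\nabla^{AB}f)$. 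What survives is precisely $(\nabla^{AB}\nabla_{CD}\varphi)\,f$, so the lemma reduces to the scalar identity
\begin{equation*}
\nabla^{AB}\nabla_{CD}\varphi \;=\; 8\,\delta^A_{[C}\delta^B_{D]}
\;=\; 4\bigl(\delta^A_C\delta^B_D - \delta^A_D\delta^B_C\bigr),
\end{equation*}
for $\varphi=|x|^2$.

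To prove this reduced identity, I first compute $\nabla_{CD}\varphi$. Since $\nabla_{CD}=\tfrac12\sum_{E,F}\epsilon_{CDEF}\nabla^{EF}$ is obtained from the lowered-index embedding entry $\iota(x)_{CD}:=\tfrac12\sum_{E,F}\epsilon_{CDEF}\iota(x)^{EF}$ by the substitution $x^j\mapsto\partial_{x^j}$, and since any constant-coefficient real-linear combination $\sum_j a_j\partial_{x^j}$ applied to $|x|^2$ produces $2\sum_j a_j x^j$, one gets $\nabla_{CD}\varphi = 2\,\iota(x)_{CD}$ as a linear function of $x$. Applying $\nabla^{AB}$ to this linear function then yields a constant equal to twice the ``dot product'' of the coefficient vectors of $\nabla^{AB}$ and $\nabla_{CD}$ read off from \eqref{k-operator2}.

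It remains to verify that this constant equals $4(\delta^A_C\delta^B_D-\delta^A_D\delta^B_C)$. This is a direct calculation from \eqref{k-operator2} and \eqref{embed-1}: for $\{A,B\}=\{C,D\}$ as unordered pairs one reads off $\pm 4$, e.g.\ $\nabla^{12}\nabla_{12}|x|^2 = (i\partial_{x^0}+\partial_{x^5})(-2ix^0+2x^5)=4$ and $\nabla^{13}\nabla_{13}|x|^2=(\partial_{x^3}+i\partial_{x^4})(2x^3-2ix^4)=4$; for $\{A,B\}\neq\{C,D\}$ as unordered pairs of distinct indices, the coefficient vectors of $\nabla^{AB}$ and $\nabla_{CD}$ lie in orthogonal coordinate planes, so the result is $0$, e.g.\ $\nabla^{12}\nabla_{34}|x|^2=(i\partial_{x^0}+\partial_{x^5})(2ix^0+2x^5)=0$. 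The antisymmetry of both sides in $(A,B)$ and in $(C,D)$ then forces the stated tensorial form globally.

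The only real difficulty here is bookkeeping: keeping consistent sign conventions for raising and lowering indices via $\epsilon_{ABCD}$ and tracking the factor of $2$ introduced by $\varphi=|x|^2$. There is no conceptual obstacle beyond this, and the check is finite by antisymmetry, reducing to the six independent index pairs $\{1,2\},\ldots,\{3,4\}$.
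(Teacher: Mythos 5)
Your reduction is exactly the paper's: since $\nabla^{AB}$ and $\nabla_{CD}$ are constant-coefficient operators, $[\nabla^{AB},\Theta_{CD}]=[\nabla^{AB},\nabla_{CD}\varphi]=\nabla^{AB}\nabla_{CD}\varphi$ acting by multiplication, so the lemma reduces to evaluating this constant. The only difference is the last step: the paper writes $\varphi=\sum_E z^{ED}\overline{z^{ED}}$ for fixed $D$ and invokes the identities $\nabla_{CD}z^{ED}=2\delta_C^E$, $\nabla_{CD}\overline{z^{ED}}=0$, $\nabla_{AB}z^{CD}=2(\delta^C_A\delta^D_B-\delta^D_A\delta^C_B)$ from \cite{wangkang3}, together with $\overline{\nabla_{AB}}=\nabla^{AB}$ in \eqref{d1-delta1}, to get $\nabla_{CD}\varphi=2\overline{z^{CD}}$ and then the constant $4(\delta^A_C\delta^B_D-\delta^A_D\delta^B_C)=8\delta^A_{[C}\delta^B_{D]}$; you instead obtain $\nabla_{CD}\varphi=2\iota(x)_{CD}$ (which indeed coincides with $2\overline{z^{CD}}$ for the embedding \eqref{embed-1}) and verify the constant by a finite direct check against \eqref{k-operator2}. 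Both routes are fine; the paper's version simply packages your coordinate bookkeeping into the already-established derivative identities of \cite{wangkang3}.

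One correction to your case analysis: when $\{C,D\}$ is the complementary pair of $\{A,B\}$, the coefficient vectors of $\nabla^{AB}$ and $\nabla_{CD}$ do \emph{not} lie in orthogonal coordinate planes. For instance $\nabla_{34}=\nabla^{12}=i\partial_{x^0}+\partial_{x^5}$ lives in the same $(x^0,x^5)$-plane as $\nabla^{12}$; the vanishing there comes from the isotropy of the complex coefficient vector under the unconjugated pairing, which is exactly what your own computation $(i\partial_{x^0}+\partial_{x^5})(2ix^0+2x^5)=0$ exhibits. The ``orthogonal planes'' argument is valid only when $\{A,B\}$ and $\{C,D\}$ share exactly one index; the three complementary configurations must be handled by the nullity observation (or by direct computation, as you in fact did for one of them). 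Since the check is finite and the displayed computations are correct, this is a flaw in the stated justification rather than in the result.
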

\begin{proof}
 Since $\nabla^{AB},\nabla_{CD}$ are scalar differential operators with constant coefficients,  we have
\begin{equation}\label{delt-theta}
\begin{aligned}
&[\nabla^{AB},\Theta_{CD}]
=[\nabla^{AB},-\nabla_{CD}+\nabla_{CD}\varphi]
= [\nabla^{AB},\nabla_{CD}\varphi]=\nabla^{AB}\nabla_{CD}\varphi,
\end{aligned}
\end{equation}
by \eqref{Theta-AB}. Denote
\begin{equation*}
\begin{aligned}
 z^{AB}:=\left(
 \begin{matrix}
   0 & ix_0+x_5 & x_3+ix_4 & x_1+ix_2 \\
   -ix_0-x_5 & 0 & x_1-ix_2 & -x_3+ix_4 \\
   -x_3-ix_4 & -x_1+ix_2 & 0 & -ix_0+x_5 \\
   -x_1-ix_2 & x_3-ix_4 &ix_0-x_5 & 0
  \end{matrix}
  \right).
\end{aligned}
\end{equation*}Note that for any fixed $D\in\{1,2,3,4\}$, we have
$\varphi=\sum_{E}z^{ED}\overline{z^{ED}} $, and
\begin{equation}\label{lem4.2-1}
\nabla_{CD}z^{ED}=2\delta_C^E,\quad \nabla_{CD}\overline{z^{ED}}=0,\quad
\nabla_{AB}z^{CD}=2(\delta^C_A\delta^D_B-\delta^D_A\delta^C_B),
\end{equation}by \cite [Lemma 2.1]{wangkang3},
from which  we see that
\begin{equation}\label{eq:delta-phi}
\begin{aligned}
\nabla_{CD}\varphi=\sum_{E}(\nabla_{CD}z^{ED}\cdot\overline{z^{ED}}+z^{ED}\cdot\nabla_{CD}\overline{z^{ED}})
=\sum_{E}2\delta_C^E\cdot\overline{z^{ED}}=2\overline{z^{CD}}.
\end{aligned}
\end{equation}
 Apply (\ref{eq:delta-phi}) to (\ref{delt-theta}) to gfet
\begin{eqnarray*}
[\nabla^{AB},\Theta_{CD}]= &2\nabla^{AB}\overline{z^{CD}}
=2\overline{\nabla_{AB}z^{CD}}
=4(\delta^C_A\delta^D_B-\delta^D_A\delta^C_B).
\end{eqnarray*}
The lemma is proved.
\end{proof}

By  Proposition \ref{prop:ad-Dl}, we know $\Theta_l f$ belongs to $C_0^{\infty}(\Bbb R^6, V_l)$ for $f\in C_0^{\infty}(\Bbb R^6, \mathscr{V}_{l+1})$,
and so $\Theta_lf-\mathscr{P}(\Theta_lf) \in C_0^{\infty}(\Bbb R^6, \mathscr{V}_l)$.
 Then the formal adjoint $\Theta_l:C_0^{\infty}(\Bbb{R}^6,\mathscr{V}_{l+1})\rightarrow C_0^{\infty}(\Bbb{R}^6,\mathscr{V}_l)$ of $\mathscr{D}_l$    satisfies
\begin{equation}\label{expre:ad-D1}
\begin{aligned}
\Theta_l f=\mathcal{D}_l^{\ast}f-\mathscr{P}(\mathcal{D}_l^{\ast} f).
\end{aligned}
\end{equation}
This is because for $f\in C_0^{\infty}(\Bbb R^6, \mathscr{V}_{l+1})$,
\begin{equation*}
\begin{aligned}
\left\langle h,\Theta_l f\right\rangle_{ \varphi }
=\left\langle\mathscr{D}_l h,f\right\rangle_{\varphi }
=\left\langle\mathcal{D}_l h,f\right\rangle_{\varphi }
=\left\langle h,\mathcal{D}_l^{\ast} f\right\rangle_{\varphi }
=\left\langle h,\mathcal{D}_l^{\ast}f-\mathscr{P}(\mathcal{D}_l^{\ast} f)\right\rangle_{\varphi },
\end{aligned}
\end{equation*}
for any $h\in C_0^{\infty}(\Bbb R^6, \mathscr{V}_l)$,   by using Proposition \ref{prop:S-complex} (1) and Proposition \ref{P-orthonogal}.
\subsection{Proof of the $L^2$ estimate }It is a well known fact that differential operator  $\mathscr {D}_l: L^2_\varphi(\Bbb R^6,\mathscr V_l)\rightarrow L^2_\varphi(\Bbb R^6,\mathscr V_{l+1})$ defines a linear, closed, densely defined operator.
\begin{lem}\label{lemm:cutoff}
 Suppose that $\eta_n\in C_0^\infty (\Bbb R^6, \Bbb R)$ with $\eta_n\equiv 1$ on $B(0,n)$, supp $\eta_n\subset B(0,n+2)$ and $|{\rm grad}\, \eta_n|\leq 1$. For $f\in  Dom(\mathscr{D}_{l}) \cap Dom(\mathscr{D}_{l-1}^\ast)$,  we have $ \eta_n f\in  Dom(\mathscr{D}_{l}) \cap Dom(\mathscr{D}_{l-1}^\ast)$ and
  \begin{equation*}
    \|  f - \eta_n f)\|_\varphi+ \|\mathscr{D}_{l}(  f)-\mathscr{D}_{l}( \eta_n f)\|_\varphi+ \|\mathscr{D}_{l-1}^\ast( f)-\eta_n\mathscr{D}_{l-1}^\ast(\eta_n f)\|_\varphi\rightarrow0, \qquad {\rm as}\quad n\rightarrow +\infty.
  \end{equation*}
\end{lem}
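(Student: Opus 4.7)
The plan is a standard cutoff approximation for unbounded operators in Hilbert space. The two main ingredients are (i) Leibniz-type identities expressing $\mathscr{D}_l(\eta_n f)$ and $\mathscr{D}_{l-1}^\ast(\eta_n f)$ as $\eta_n$ times the corresponding image of $f$ plus a remainder of order zero involving $\nabla\eta_n$, and (ii) the dominated convergence theorem together with the fact that $|f|$, $|\mathscr{D}_l f|$, $|\mathscr{D}_{l-1}^\ast f|$ all lie in $L^2_\varphi$.

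First I would verify $\eta_n f \in Dom(\mathscr{D}_l)\cap Dom(\mathscr{D}_{l-1}^\ast)$ and derive the product rules. For $\mathscr{D}_l$: testing against any $g\in C_0^\infty(\mathbb{R}^6,\mathscr{V}_{l+1})$ and using the defining identity \eqref{eq:Dom} together with $\eta_n g\in C_0^\infty$, a short manipulation yields
\[
\mathscr{D}_l(\eta_n f) = \eta_n\,\mathscr{D}_l f + A_n f,
\]
where $A_n$ is the zero-order operator whose coefficients come from pairing the principal symbol of $\mathscr{D}_l$ with $\nabla\eta_n$; in particular $|A_n f|\leq C|\nabla\eta_n|\cdot|f|$ and $A_n f$ is supported in the shell $\{n\leq|x|\leq n+2\}$. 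For $\mathscr{D}_{l-1}^\ast$: for any $g\in C_0^\infty(\mathbb{R}^6,\mathscr{V}_{l-1})$, rewrite $\eta_n\mathscr{D}_{l-1} g = \mathscr{D}_{l-1}(\eta_n g) - B_n g$ with $B_n$ of order zero, use the previous step applied to $g$ to guarantee $\eta_n g\in Dom(\mathscr{D}_{l-1})$, and deduce
\[
\mathscr{D}_{l-1}^\ast(\eta_n f) = \eta_n\,\mathscr{D}_{l-1}^\ast f - B_n^\dagger f,
\]
where $B_n^\dagger$ is again a bounded zero-order operator supported in the same shell with $|B_n^\dagger f|\leq C|\nabla\eta_n|\cdot|f|$.

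The three convergence claims then follow from dominated convergence. Writing $f-\eta_n f = (1-\eta_n)f$, $\mathscr{D}_l f-\mathscr{D}_l(\eta_n f)=(1-\eta_n)\mathscr{D}_l f - A_n f$, and $\mathscr{D}_{l-1}^\ast f-\eta_n\mathscr{D}_{l-1}^\ast(\eta_n f)=(1-\eta_n^2)\mathscr{D}_{l-1}^\ast f + \eta_n B_n^\dagger f$, each of the ``diagonal'' pieces is bounded by an $L^2_\varphi$ function and converges pointwise to zero, while each commutator piece is bounded by $C|f|$ and supported in a shell escaping to infinity; both vanish in $L^2_\varphi$ by dominated convergence. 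The expected main obstacle is bookkeeping in the presence of the projection $\mathscr{P}$ in the formal adjoint formula \eqref{expre:ad-D1} and of the unbounded multiplicative term $\nabla_{AB}\varphi$ in $\Theta_{AB}$ (see \eqref{Theta-AB}). However, both $\mathscr{P}$ and multiplication by $\nabla_{AB}\varphi$ commute with the scalar $\eta_n$, so only the first-order part $\nabla_{AB}$ contributes to $[\mathscr{D}_l,\eta_n]$ and $[\mathscr{D}_{l-1}^\ast,\eta_n]$, keeping these commutators zero-order and uniformly controlled by $\|\nabla\eta_n\|_\infty\leq 1$, which is exactly what is needed to close the argument.
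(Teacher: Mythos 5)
Your proposal is correct and follows essentially the same route as the paper: test against $C_0^\infty$ functions to obtain the Leibniz-type identities, observe that the commutators $[\mathscr{D}_l,\eta_n]$ and $[\mathscr{D}_{l-1}^\ast,\eta_n]$ are zero-order operators bounded by $|\mathrm{grad}\,\eta_n|\,|f|$ and supported in the shell $B(0,n+2)\setminus B(0,n)$, and conclude by dominated convergence. Your explicit remarks that $\mathscr{P}$ and the multiplication term $\nabla_{AB}\varphi$ commute with $\eta_n$, and the decomposition $(1-\eta_n^2)\mathscr{D}_{l-1}^\ast f+\eta_n B_n^\dagger f$ for the third term, are the same points the paper treats (somewhat more tersely) via its displayed index computations.
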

\begin{proof} Let $u=\mathscr{D}_{l}f$ in the weak sense.   In definition (\ref{eq:Dom}) the operator $\Theta_l$ can be replaced by the formal adjoint operator  $ \mathcal{D}_{l}^{\ast}$ by (\ref{expre:ad-D1}). Then
   \begin{equation*}\begin{split}  \langle \eta_nf,\Theta_l g\rangle_\varphi&=\langle f,\eta_n\mathcal{D}_{l}^{\ast} g\rangle_\varphi=\langle f,\mathcal{D}_{l}^{\ast} (\eta_n g)\rangle_\varphi- \sum\left\langle f_{B_{1} \ldots B_{k-l} }^{  A_{1}\ldots A_{l}},-\sum_{E}\nabla _{E B_{1}}\eta_n g_{B_{2} \ldots B_{k-l} }^{E A_{1}\ldots A_{l}}\right\rangle_\varphi\\&
   =\langle \eta_n u , g \rangle_\varphi+ \sum\left\langle \nabla^{E B_{1}}\eta_n f_{B_{1} \ldots B_{k-l} }^{  A_{1}\ldots A_{l}},  g_{B_{2} \ldots B_{k-l} }^{E A_{1}\ldots A_{l}}\right\rangle_\varphi
          \end{split} \end{equation*}for any $g\in C_0^\infty(\Bbb R^6, \mathscr{V}_{l+1})$. Consequently, we have
    \begin{equation*}\begin{split}
       \|\mathscr{D}_{l}(\eta_n f)-\eta_n\mathscr{D}_{l}( f)\|_\varphi&=\sum_{E,A_1,\ldots, B_2,\ldots}\int_{\Bbb R^6}\left |\sum_{B_1}\nabla^{B_{1}[E}\eta_n f_{B_{1} \ldots B_{k-l} }^{  A_{1}\ldots A_{l}]}\right|^2 e^{-\varphi}dV\\
       &\leq C\sum_{E,A_1,\ldots, B_2,\ldots}\int_{B(0,n+2)\setminus B(0,n )}\left | f_{B_{1} \ldots B_{k-l} }^{  A_{1}\ldots A_{l}}\right|^2 e^{-\varphi}dV\rightarrow 0,
    \end{split} \end{equation*}for some absolute constant $C>0$. Similarly, we have
    \begin{equation*}\begin{split}\left( \mathscr{D}_{l-1}^\ast(\eta_n f)-\eta_n\mathscr{D}_{l-1}^\ast( f)\right)_{B_{0} \ldots B_{k-l } }^{  A_{2}\ldots A_{l}}=\sum_E
    \nabla_{E (B_{0}}\eta_n f_{B_{1} \ldots B_{k-l }) }^{ E A_{2}\ldots A_{l}},
          \end{split} \end{equation*}
          and so $\|\mathscr{D}_{l-1}^\ast(\eta_n f)-\eta_n\mathscr{D}_{l-1}^\ast( f)\|_\varphi\rightarrow0$. The result  follows.
\end{proof}

{\it Proof of Theorem \ref{Thm:k-monogenic l2e}}. We only need to prove the estimate \eqref{k-monogenic l2estimate} for any $f\in C_0^\infty(\mathbb{R}^6,\mathscr{V}_{l})$. This is because we can assume $f\in L^2_{\varphi}(\Bbb R^6,\mathscr{V}_{l})$ is compactly supported by Lemma \ref{lemm:cutoff}, and can check by definition that  $\delta$ regularization $f_\delta=f*\psi_\delta$,  for nonnegative $\psi\in C_0^\infty (\Bbb R^6, \Bbb R)$ with supp $\psi\subset B(0,1)$ and $\int \psi= 1$, satisfies
 \begin{equation*}
    \|  f -  f_\delta  \|_\varphi+ \|\mathscr{D}_{l}(  f)-\mathscr{D}_{l}( f_\delta)\|_\varphi+ \|\mathscr{D}_{l-1}^\ast( f)- \mathscr{D}_{l-1}^\ast(f_\delta)\|_\varphi\rightarrow0, \qquad {\rm as}\quad \delta\rightarrow 0.
  \end{equation*}

$(1)$ For $l=1$, noting that $\mathscr{D}_0^\ast=\mathcal{D}_0^{\ast}$ for any $f\in C^{\infty}_{0}(\Bbb R^6, \mathscr{V}_1)$, we have
\begin{equation}\label{gonge_fanshu}
\begin{aligned}
k\left\| \mathscr{D}_{0}^\ast f \right\|_{\varphi}^2 =& k\sum_{B_{1}, \ldots ,B_{k}} \left\langle
\sum_{C}\Theta_{C (B_{1}}f_{B_2 \ldots B_{k})}^{C},\sum_{D}\Theta_{D (B_{1}}f_{B_2 \ldots B_{k})}^{D} \right\rangle_\varphi\\
=& k\sum_{C,D,B_{1}, \ldots ,B_{k}}\left\langle \Theta_{C (B_{1}}f_{B_2 \ldots B_{k})}^{C},\Theta_{D B_{1}}f_{B_2 \ldots B_{k}}^{D} \right\rangle_\varphi\\
=& \sum_{C,D,B_{1}, \ldots ,B_{k}} \left\langle \Theta_{CB_{1}} f_{B_{2} \ldots B_{k}}^{C},\Theta_{DB_{1}} f_{B_{2} \ldots B_{k}}^{D} \right\rangle_\varphi\\
&+\sum_{s=2}^{k}\sum_{C,D,B_{1},\ldots,B_{k}}\left\langle\Theta_{CB_{s}}f_{B_{1}\ldots\widehat{B_{s}}\ldots B_{k}}^{C},\Theta_{DB_{1}}f_{B_{2}\ldots B_{k}}^{D} \right\rangle_\varphi
:=\Sigma_1+\Sigma_2,
\end{aligned}
\end{equation}
 by using \eqref{ad-Dl} and Lemma \ref{lem:sym} twice. We find that
\begin{equation}\label{Sigma_1}
\Sigma_1=\sum_{B_{1}, \ldots ,B_{k}}\left\|\sum_{C}\Theta_{CB_{k}} f_{B_{1} \ldots B_{k-1}}^{C}\right\|_\varphi^2\geq0,
\end{equation}
To handle $\Sigma_2$,   take adjoint and use commutators to change the order of operators to get
\begin{eqnarray}\label{Sigma_2}
\Sigma_2
&=&\sum_{s=2}^{k}\sum_{C,D,B_{1},\ldots,B_{k}}\left\langle\nabla^{DB_{1}}\Theta_{CB_{s}}f_{B_{1}\ldots\widehat{B_{s}}\ldots B_{k}}^C,f_{B_{2}\ldots B_{k}}^D \right\rangle_\varphi\nonumber\\
&=&\sum_{s=2}^{k} \sum_{C,D,B_{1}, \ldots ,B_{k}} \left\langle
\Theta_{CB_{s}} \nabla^{DB_{1}} f_{B_{1} \ldots \widehat{B_{s}} \ldots B_{k}}^C,f_{B_{2} \ldots B_{k}}^D \right\rangle_\varphi\nonumber\\
&\ &+\sum_{s=2}^{k} \sum_{C,D,B_{1}, \ldots ,B_{k}} \left\langle
\bigg[\nabla^{DB_{1}},\Theta_{CB_{s}}\bigg]f_{B_{1} \ldots \widehat{B_{s}} \ldots B_{k}}^C,f_{B_{2} \ldots B_{k}}^D \right\rangle_\varphi\nonumber:=\Sigma_3+\Sigma_4,
\end{eqnarray}
and
 \begin{equation}\label{Sigma_4}
 \begin{aligned}
&\Sigma_4=\sum_{s=2}^{k} \sum_{C,D,B_{1}, \ldots ,B_{k}}\left\langle
8\delta^{D}_{[C}\delta^{B_{1}}_{B_{s}]}f_{B_{1} \ldots \widehat{B_{s}} \ldots B_{k}}^{C},
     f_{B_{2} \ldots B_{k}}^{D} \right\rangle_\varphi\\
&=4\sum_{s=2}^{k} \sum_{C,B_{2}, \ldots ,B_{k}}\left\langle f_{B_{2} \ldots B_{k}}^{C},f_{B_{2} \ldots B_{k}}^{C} \right\rangle_\varphi
-4\sum_{s=2}^{k} \sum_{B_{1}, \ldots ,B_{k}}\left\langle f_{B_{1} \ldots \widehat{B_{s}} \ldots B_{k}}^{B_{1}},f_{B_{2} \ldots B_{k}}^{B_s}
\right\rangle_\varphi=4(k-1)\left\| f \right\|_{\varphi}^{2},
\end{aligned}
\end{equation}
by using Lemma \ref{lem:jiaohuanzi} and $\mathscr{C}f=0$. This term is the main term that we need to control. To  control $\Sigma_3$,  let us isolate the term concerning $\mathscr{D}_1f$, note that
\begin{eqnarray*}
\Sigma_3&=&\sum_{s=2}^{k} \sum_{C,D,B_{1}, \ldots ,B_{k}} \left\langle
 \nabla^{DB_{1}} f_{B_{1} \ldots \widehat{B_{s}} \ldots B_{k}}^C,\nabla^{CB_{s}}f_{B_{2} \ldots B_{k}}^D \right\rangle_\varphi\\
&=& \sum_{s=2}^{k} \sum_{C,D,B_{1}, \ldots ,B_{k}} \left\langle
 \nabla^{DB_{1}} f_{B_{1} \ldots \widehat{B_{s}} \ldots B_{k}}^C,-2\nabla^{B_{s}[C}f_{B_{2} \ldots B_{k}}^{D]} \right\rangle_\varphi\\
&\ &+\sum_{s=2}^{k} \sum_{C,D,B_{1}, \ldots ,B_{k}} \left\langle
 \nabla^{DB_{1}} f_{B_{1} \ldots \widehat{B_{s}} \ldots B_{k}}^C,\nabla^{DB_{s}}f_{B_{2} \ldots B_{k}}^C \right\rangle_\varphi:=\Sigma_{31}+\Sigma_{32},
\end{eqnarray*}
 by using Lemma \ref{fenbujifen} and $\nabla^{AB}$ antisymmetric in $A, B$.   We see that
\begin{eqnarray*}
\Sigma_{32}=\sum_{s=2}^{k} \sum_{C,D,D_{1},\ldots,D_{k-2}}\left\|\sum_{E}\nabla^{DE} f_{ED_{1} \ldots \ldots
D_{k-2}}^C\right\|_\varphi^2\geq0,
\end{eqnarray*}
by relabeling indices and
\begin{eqnarray*}
\Sigma_{31}
&=&-2\sum_{s=2}^{k} \sum_{C,D,B_{1}, \ldots ,B_{k}} \left\langle \nabla^{B_{1}[C} f_{B_{1} \ldots \widehat{B_{s}} \ldots B_{k}}^{D]},
         \nabla^{B_{s}[C}f_{B_{2} \ldots B_{k}}^{D]}\right\rangle_\varphi\\
&=&-2\sum_{s=2}^{k} \sum_{C,D,B_{2}, \ldots \widehat{B_{s}},\ldots,B_{k}} \left\langle\mathscr{D}_1(f)^{CD}_{B_{2} \ldots \widehat{B_{s}} \ldots B_{k}},
         \mathscr{D}_1(f)^{CD}_{B_{2} \ldots \widehat{B_{s}} \ldots B_{k}}\right\rangle_\varphi=-2(k-1)\left\|\mathscr{D}_1f\right\|^2_\varphi,
\end{eqnarray*}
by using Lemma \ref{lem:sym}. So we get
\begin{equation}\label{sigma3}
\Sigma_3\geq-2(k-1)\left\|\mathscr{D}_1f\right\|^2_\varphi.
\end{equation}
By (\ref{gonge_fanshu})-(\ref{sigma3}), we get the $L^2$ estimate for $l=1$:
$
\left\| f \right\|_{\varphi}^2\leq \frac{k}{4(k-1)}\left\|\mathscr{D}_{0}^\ast f \right\|_{\varphi}^2  +\frac{1}{2}\left\| \mathscr{D}_{1}f \right\|_{\varphi}^2.
$

$(2)$ For $l=2$, the proof  is similar, but is  more complicated, because we have to use projection $\mathscr{P}_l$. Note that
\begin{equation}\label{depos-1}
\begin{aligned}
\left\|\mathcal{D}_{1}^{\ast} f\right\|^2_\varphi&=\left\langle\mathcal{D}_{1}^{\ast} f,\mathcal{D}_{1}^{\ast} f\right\rangle_\varphi
=\left\|\mathscr{D}_{1}^{\ast} f\right\|^2_\varphi+\left\|\mathscr{P}_1(\mathcal{D}_{1}^{\ast} f)\right\|^2_\varphi,
\end{aligned}
\end{equation}
by \eqref{expre:ad-D1}. Apply Proposition \ref{pf=some-cf} to
 $\left\|\mathscr{P}_1\mathcal{D}_{1}^{\ast} (f)\right\|_{\varphi}^2$ to get
\begin{equation}\label{CDf-1}
\begin{split}
\left\| \mathscr{P}_1(\mathcal{D}_{1}^{\ast} f)\right\|_{\varphi}^2
=&\frac{k-1}{k+2}\sum_{B_{1},\ldots ,B_{k-2}}\left\|\sum_{E}\mathcal{D}_{1}^{\ast} (f)^{E}_{B_1\ldots B_{k-2}E}\right\|^2_\varphi
=\frac{k-1}{k+2}\sum_{B_{1},\ldots ,B_{k-2}}\left\| \sum_{C,E}\Theta_{C(B_1}f^{CE}_{B_2\ldots B_{k-2}E)}\right\|_{\varphi}^2\\
=&\frac{1}{(k-1)(k+2)}\sum_{B_{1},\ldots ,B_{k-2}}\left\| \sum_{C,E}\Theta_{CE}f^{CE}_{B_1\ldots B_{k-2}}\right\|_{\varphi}^2\\
\leq& \frac{4}{(k-1)(k+2)}\sum_{B_{1},\ldots ,B_{k-2}}\sum_{E}\left\|\sum_{C}\Theta_{CE}f^{CE}_{B_1\ldots B_{k-2}}\right\|^2_{\varphi}\\
\leq &\frac{4}{(k-1)(k+2)}\sum_{A_1,B_{1},\ldots ,B_{k-1}}\left\|\sum_{C}\Theta_{CB_{1}}f^{CA_{1}}_{B_2\ldots B_{k-1}}\right\|^2_{\varphi},
\end{split}
\end{equation}
by \eqref{ad-Dl} and $\mathscr{C} f=0$.  We use the inequality
$|\sum_Ea_E|^{2}\leq 4\sum_E|a_E|^2 $
in the first inequality and add extra nonnegative terms in the second inequality.
Now we have
\begin{equation}\label{d1starf}
\begin{aligned}
(k-1)\left\| \mathscr{D}_{1}^\ast f \right\|^2_{\varphi}&=(k-1)\left\| \mathcal{D}_{1}^\ast f \right\|^2_{\varphi}-(k-1)\left\|\mathscr{P}_1\mathcal{D}_{1}^{\ast} (f)\right\|^2_{\varphi}\\
&=\bigg\{\sum_{\substack {C,D,A_{1},B_{1},\ldots ,B_{k-1}}} \left\langle \Theta_{CB_{1}} f_{B_{2} \ldots B_{k-1}}^{CA_1},\Theta_{DB_{1}} f_{B_{2} \ldots B_{k-1}}^{DA_1} \right\rangle_\varphi-(k-1)\left\|\mathscr{P}_1\mathcal{D}_{1}^{\ast} (f)\right\|^2_{\varphi}\bigg\}\\
&\qquad+\sum_{s=2}^{k-1} \sum_{\substack {C,D,A_{1},B_{1},\ldots ,B_{k-1}}} \left\langle
\Theta_{CB_{s}} f_{B_{1} \ldots \widehat{B_{s}} \ldots B_{k-1}}^{CA_1},\Theta_{DB_{1}} f_{B_{2} \ldots B_{k-1}}^{DA_1} \right\rangle_\varphi
:=\Sigma_1+\Sigma_2,
\end{aligned}
\end{equation}
by  \eqref{depos-1} and expanding symmetrization as  in \eqref{gonge_fanshu}. Apply \eqref{CDf-1} to $\Sigma_1$ in \eqref{d1starf} to get
\begin{equation}\label{1-sigma1}
\Sigma_1\geq\frac{k-2}{k+2}\sum_{D,A_{1},B_{1},\ldots ,B_{k-1}} \left\| \sum_{C}\Theta_{CB_{1}} f_{B_{2}
\ldots B_{k-1}}^{CA_1} \right\|^2_\varphi\geq0,
\end{equation}
 if $k\geq 2$. To control $\Sigma_2$, we use commutator to change order of differential operator again to get
\begin{eqnarray}\label{Sigma_2 l}
\Sigma_2&=& \sum_{s=2}^{k-1} \sum_{\substack {C,D,A_{1}, B_{1},\ldots ,B_{k-1}}} \left\langle
\nabla^{DB_{1}}\Theta_{CB_{s}} f_{B_{1} \ldots \widehat{B_{s}} \ldots B_{k-1}}^{CA_1}, f_{B_{2} \ldots B_{k-1}}^{DA_1} \right\rangle_\varphi\nonumber\\
&=& \sum_{s=2}^{k-1} \sum_{\substack {C,D,A_{1},B_{1},\ldots ,B_{k-1}}} \left\langle
\Theta_{CB_{s}}\nabla^{DB_{1}} f_{B_{1} \ldots \widehat{B_{s}} \ldots B_{k-1}}^{CA_1}, f_{B_{2} \ldots B_{k-1}}^{DA_1} \right\rangle_\varphi\nonumber\\
&\ &+\sum_{s=2}^{k-1} \sum_{\substack {C,D,A_{1},B_{1},\ldots ,B_{k-1}}}\left\langle
\bigg[\nabla^{DB_{1}},\Theta_{CB_{s}}\bigg] f_{B_{1} \ldots \widehat{B_{s}} \ldots B_{k-1}}^{CA_1}, f_{B_{2} \ldots B_{k-1}}^{DA_1} \right\rangle_\varphi
 := \Sigma_3+\Sigma_4,
\end{eqnarray}
by using Lemma \ref{fenbujifen}. As in the case $l=1$, we have
\begin{eqnarray}\label{Sigma_4 l}
\Sigma_4&=&4(k-2)\left\| f \right\|^2_{\varphi}.
\end{eqnarray}

To control $\Sigma_3$, let us isolate the term concerning $\mathscr{D}_2f$. Rewrite $\Sigma_3$ as
\begin{equation}\label{sigma3-2}
\begin{aligned}
\frac{1}{k-2}\Sigma_3=&\frac{1}{k-2}\sum_{s=2}^{k-1} \sum_{\substack {C,D,A_{1},B_{1},\ldots ,B_{k-1}}}\left\langle
\nabla^{DB_{1}} f_{B_{1} \ldots \widehat{B_{s}} \ldots B_{k-1}}^{CA_1}, \nabla^{CB_{s}}f_{B_{2} \ldots B_{k-1}}^{DA_1} \right\rangle_\varphi\\
=&\sum_{\substack{C,D,A_{1},E_1,E_2,\\B_{3}, \ldots ,B_{k-1}}}\left\langle
 \nabla^{DE_1} f_{B_{3} \ldots B_{k-1}E_1}^{CA_1},\nabla^{CE_2}f_{B_{3} \ldots B_{k-1}E_2}^{DA_1} \right\rangle_\varphi\\
 =&\sum_{\substack{C,D,A_{1},E_1,E_2,\\B_{3}, \ldots ,B_{k-1}}}\bigg\{-3\left\langle
 \nabla^{DE_1} f_{B_{3} \ldots B_{k-1}E_1}^{CA_1},\nabla^{E_2[C}f_{B_{3} \ldots B_{k-1}E_2}^{DA_1]} \right\rangle_\varphi\\
 &\quad\quad\quad\quad\quad\quad\,\, +\left\langle \nabla^{DE_1} f_{B_{3} \ldots B_{k-1}E_1}^{CA_1},\nabla^{A_1E_2}f_{B_{3} \ldots B_{k-1}E_2}^{DC}\right\rangle_\varphi\\
 &\quad\quad\quad\quad \quad\quad\,\,+ \left\langle \nabla^{DE_1} f_{B_{3} \ldots B_{k-1}E_1}^{CA_1},\nabla^{DE_2}f_{B_{3} \ldots B_{k-1}E_2}^{CA_1} \right\rangle_\varphi\bigg\}
 :=\Sigma_{31}+\Sigma_{32}+\Sigma_{33},
 \end{aligned}
\end{equation}
by using Lemma \ref{fenbujifen}, Lemma \ref{lem:symzhankai} (2) and relabeling indices.
It is easy to see $\Sigma_{33}$ is a squared sum, which is nonnegative and
 \begin{eqnarray*}
\Sigma_{31} &=&-3\sum_{\substack{C,D,A_1,E_1,E_2,\\B_{3}, \ldots ,B_{k-1}}} \left\langle
 \nabla^{E_1[C}f_{B_{3} \ldots B_{k-1}E_1}^{DA_1]},\nabla^{E_2[C}f_{B_{3} \ldots B_{k-1}E_2}^{DA_1]}\right\rangle_\varphi\\
 &=&-3 \sum_{\substack{C,D,A_1,\\B_{3}, \ldots ,B_{k-1}}} \left\|
\mathscr{D}_2(f)_{B_{3} \ldots B_{k-1}}^{CDA_1}\right\|_\varphi^2=-3 \left\|\mathscr{D}_2f\right\|^2_{\varphi}.
 \end{eqnarray*}
It follows from the expression of $\frac{1}{k-2}\Sigma_3$ in the second identity in \eqref{sigma3-2} that
  \begin{eqnarray*}
 \Sigma_{32}=-\sum_{\substack{D,F,G,E_1,E_2,\\B_{3}, \ldots ,B_{k-1}}} \left\langle
 \nabla^{DE_1} f_{B_{3} \ldots B_{k-1}E_1}^{GF},\nabla^{GE_2}f_{B_{3} \ldots B_{k-1}E_2}^{DF}\right\rangle_\varphi=\frac{-1}{k-2}\Sigma_{3},
 \end{eqnarray*}
 by relabeling indices $A_1$ as $F$ and $C$ as $G$ and using $f^{AB}_{\ldots}$ antisymmetric in $A, B$.
 Hence, we get
\begin{equation}\label{1-sigma3}
\Sigma_3\geq-\frac{3(k-2)}{2}\left\|\mathscr{D}_2f\right\|^2_{\varphi},
\end{equation}
when $k>2$. By \eqref{d1starf}-(\ref{1-sigma3}), we get
\begin{eqnarray*}
\left\| f \right\|_{\varphi}^2\leq \frac{k-1}{4(k-2)}\left\|\mathscr{D}_{1}^\ast f \right\|_{\varphi}^2  +\frac{3}{8}\left\| \mathscr{D}_{2}f \right\|_{\varphi}^2.
\end{eqnarray*}

$(3)$ For $l=3$, since $\mathscr{D}_{3}f=0$, we need to prove $\left\| f\right\|^2\leq C \left\| \mathscr{D}_{2}^{\ast}f\right\|^2$.
Similar to \eqref{depos-1}, we have
\begin{equation}
\left\|\mathcal{D}_{2}^{\ast} f\right\|_\varphi^2=\left\|\mathscr{D}_{2}^{\ast} f\right\|_\varphi^2+\left\|\mathscr{P}_2\mathcal{D}_{2}^{\ast}(f)\right\|_\varphi^2.
\end{equation}
Apply Proposition \ref{prop:pf=some-cf} to $\left\|\mathscr{P}_2\mathcal{D}_{2}^{\ast}(f)\right\|_\varphi^2$ to get
\begin{equation}\label{CDf-2-l=2}
\begin{aligned}(k-2)
 \|\mathscr{P}_2&\mathcal{D}_{2}^{\ast} f \|^2_\varphi=\frac{2(k-2)^2 }{k}\sum_{A_1, B_{1},\ldots ,B_{k-3}}\left\|\sum_{C,E}\Theta_{C(E}f^{CA_1E}_{B_1\ldots B_{k-3})}\right\|^2_\varphi
\\=& \frac{2}{k }\sum_{A_1, B_{1},\ldots ,B_{k-3}}\left\|\sum_{C,E}\Theta_{CE}f^{CA_1E}_{B_1\ldots B_{k-3}}\right\|^2_\varphi
\leq\frac{6}{k }\sum_{A_1,  \ldots ,B_{k-3}}\sum_{E\neq A_1}\left\|\sum_{C}\Theta_{CE}f^{CA_1E}_{B_1\ldots B_{k-3}}\right\|_\varphi^2\\
\leq&\frac{6}{k }\sum_{A_1,A_2, \ldots ,B_{k-2}}\left\|\sum_{C}\Theta_{CB_{1}}f^{CA_1A_2}_{B_2\ldots B_{k-2}}\right\|_\varphi^2,
\end{aligned}
\end{equation}
by using $\mathscr{C}f=0$ again, where we use the inequality
 $|\sum_{j=1}^3a_j|^{2}\leq 3\sum_{j=1}^3|a_j|^2 $ in the first inequality and add some nonnegative terms in the second inequality.

As in the case $l=2$ in \eqref{d1starf}, we have
\begin{equation}\label{d1starf-l=2}
\begin{aligned}
(k-2)\left\| \mathscr{D}_{2}^\ast f \right\|_\varphi^2=&(k-2)\left\| \mathcal{D}_{2}^\ast f \right\|_\varphi^2
   -(k-2)\left\|\mathscr{P}\mathcal{D}_{2}^{\ast}(f)\right\|_\varphi^2\\
 =&\bigg\{\sum_{\substack {C,D,A_{1},A_2,\\B_{1},\ldots ,B_{k-2}}} \left\langle \Theta_{CB_{1}} f_{B_{2} \ldots B_{k-2}}^{CA_1A_2},
\Theta_{DB_{1}} f_{B_{2} \ldots B_{k-2}}^{DA_1A_2} \right\rangle_\varphi-(k-2)\left\|\mathscr{P}\mathcal{D}_{2}^{\ast}(f)\right\|_\varphi^2\bigg\}\\
   &\,\,\,\,+\sum_{s=2}^{k-2} \sum_{\substack {C,D,A_{1},A_2,\\B_{1},\ldots ,B_{k-2}}} \left\langle \Theta_{CB_{s}} f_{B_{1} \ldots \widehat{B_{s}} \ldots B_{k-2}}^{CA_1A_2},
\Theta_{DB_{1}} f_{B_{2} \ldots B_{k-2}}^{DA_1A_2} \right\rangle_\varphi:=\Sigma_1+\Sigma_2.
\end{aligned}
\end{equation}
Apply  \eqref{CDf-2-l=2} to $\Sigma_1$ in \eqref{d1starf-l=2} to get
\begin{equation}\label{sigma2-l=2}
\Sigma_1\geq\bigg(1-\frac{6}{k}\bigg)\sum_{\substack { D,A_{1},A_2,B_{1},\ldots }} \left\|\sum_{C}\Theta_{CB_{1}}f^{CA_1A_2}_{B_2\ldots B_{k-2}}\right\|_\varphi^2\geq0,
\end{equation}
if $k\geq6$. For $\Sigma_2$, we can rewrite it as
\begin{eqnarray}
\Sigma_2&=& \sum_{s=2}^{k-2} \sum_{\substack {C,D,A_{1},A_2, B_{1},\ldots  }}\left\langle
\nabla^{DB_{1}}\Theta_{CB_{s}} f_{B_{1} \ldots \widehat{B_{s}} \ldots B_{k-2}}^{CA_1A_2}, f_{B_{2} \ldots B_{k-2}}^{DA_1A_2} \right\rangle_\varphi\nonumber\\
&=& \sum_{s=2}^{k-2} \sum_{\substack {C,D,A_{1},A_2,B_{1},\ldots  }} \left\langle
\Theta_{CB_{s}}\nabla^{DB_{1}} f_{B_{1} \ldots \widehat{B_{s}} \ldots B_{k-2}}^{CA_1A_2}, f_{B_{2} \ldots B_{k-2}}^{DA_1A_2} \right\rangle_\varphi\nonumber\\
&\ &+\sum_{s=2}^{k-2} \sum_{\substack {C,D,A_{1},A_2,B_{1},\ldots  }} \left\langle
[\nabla^{DB_{1}},\Theta_{CB_{s}}] f_{B_{1} \ldots \widehat{B_{s}} \ldots B_{k-2}}^{CA_1A_2}, f_{B_{2} \ldots B_{k-2}}^{DA_1A_2} \right\rangle_\varphi\nonumber
 := \Sigma_3+\Sigma_4,\label{sigma1-l=2}
\end{eqnarray}
by Lemma \ref{fenbujifen}. Similarly to the case $l=1$, we have
\begin{eqnarray}\label{sigma4-l=2}
\Sigma_4&=&4(k-3)\left\| f \right\|_{\varphi}^{2}.
\end{eqnarray}
To control $\Sigma_3$, we write
\begin{equation*}\label{sigma3=1+2+3+4}
\begin{aligned}
\frac{1}{k-3}\Sigma_3=&\frac{1}{k-3}\sum_{s=2}^{k-2} \sum_{\substack {C,D,A_{1},A_2,B_{1},\ldots ,B_{k-2}}} \left\langle
\nabla^{DB_{1}} f_{B_{1} \ldots \widehat{B_{s}} \ldots B_{k-2}}^{CA_1A_2}, \nabla^{CB_{s}}f_{B_{2} \ldots B_{k-2}}^{DA_1A_2} \right\rangle_\varphi\\
=&\sum_{\substack{C,D,A_{1},A_2,E_1,E_2,\\B_{1}, \ldots ,B_{k-4}}} \left\langle \nabla^{DE_1} f_{B_{1} \ldots B_{k-4}E_1}^{CA_1A_2},
 \nabla^{CE_2}f_{B_{1} \ldots B_{k-4}E_2}^{DA_1A_2} \right\rangle_\varphi\\
 =&\sum_{\substack{C,D,A_{1},A_2,E_1,E_2,\\B_{1}, \ldots ,B_{k-4}}} \bigg\{-4\left\langle
 \nabla^{DE_1} f_{B_{1} \ldots B_{k-4}E_1}^{CA_1A_2},\nabla^{E_2[C}f_{B_{1} \ldots B_{k-4}E_2}^{DA_1A_2]} \right\rangle_\varphi\\
  &\quad\quad\quad\quad\quad\quad\quad+\left\langle \nabla^{DE_1} f_{B_{1} \ldots B_{k-4}E_1}^{CA_1A_2},\nabla^{ A_1 E_2}f_{B_{1} \ldots B_{k-4}E_2}^{DCA_2}\right\rangle_\varphi\\
  &\quad\quad\quad\quad\quad\quad\quad+\left\langle \nabla^{DE_1} f_{B_{1} \ldots B_{k-4}E_1}^{CA_1A_2},\nabla^{ A_2 E_2}f_{B_{1} \ldots B_{k-4}E_2}^{DA_1C}\right\rangle_\varphi\\
  &\quad\quad\quad\quad\quad\quad\quad+\left\langle \nabla^{DE_1} f_{B_{1} \ldots B_{k-4}E_1}^{CA_1A_2},\nabla^{ D E_2}f_{B_{1} \ldots B_{k-4}E_2}^{CA_1A_2}\right\rangle_\varphi\bigg\}
 :=\Sigma_{31}+\Sigma_{32}+\Sigma_{33}+\Sigma_{34},
\end{aligned}
\end{equation*}
by Lemma \ref{fenbujifen} and  relabeling   indices.
It is easy to see that $\Sigma_{34}$ is a nonnegative squared norm, and
 \begin{equation*}\label{sigma31}
 \begin{aligned}
\Sigma_{31} &=&-4\sum_{\substack{C,D,A_1,A_2,E_1,E_2,\\B_{1}, \ldots ,B_{k-4}}} \left\langle \nabla^{E_1[C}f_{B_{1} \ldots B_{k-4}E_1}^{DA_1A_2]},
 \nabla^{E_2[C}f_{B_{1} \ldots B_{k-4}E_2}^{DA_1A_2]}\right\rangle_\varphi=-4\left\|\mathscr{D}_3f\right\|^2_\varphi=0,
 \end{aligned}
 \end{equation*}
by Lemma \ref{lem:sym}, while
  \begin{eqnarray*}\label{sigma32}
 \Sigma_{32}=\Sigma_{33}=\frac{-1}{k-3}\Sigma_3,
 \end{eqnarray*}
by relabeling indices again.  Hence,
\begin{equation}\label{sigma3-l=2}
\Sigma_3\geq0.
\end{equation}
Apply \eqref{sigma2-l=2}-\eqref{sigma3-l=2} to \eqref{d1starf-l=2} to get
\begin{eqnarray*}
\left\| f \right\|_{\varphi}^2\leq \frac{k-2}{4(k-3)}\left\|\mathscr{D}_{2}^\ast f \right\|_{\varphi}^2.
\end{eqnarray*}
The estimate \eqref{k-monogenic l2estimate} is proved.
\qed

\section{Proof  of main theorems}\label{section4}
We use a general machine to deduce the existence  of solution from the $L^2$-estimate (cf. e.g. \cite{ChenShaw}).

\begin{pro}\label{box-dense}
The $\Box_l$ is a densely defined, closed, self-adjoint and non-negative operator with  domain
\begin{equation*}
 Dom(\Box_l)=\{f\in L^2_\varphi(\Bbb R^6,\mathscr{V}_l)|f\in Dom(\mathscr{D}_{l}),f\in Dom(\mathscr{D}_{l-1}^\ast),
\mathscr{D}_{l-1}^\ast f\in Dom(\mathscr{D}_{l-1}),\mathscr{D}_{l} f\in Dom(\mathscr{D}_{l}^\ast)\}.
 \end{equation*}
\end{pro}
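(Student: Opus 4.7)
The plan is to follow the standard Hilbert space framework for Hodge-type Laplacians built from a complex of closed densely defined operators (cf. Chen--Shaw, H\"ormander). First I would verify density: any $f \in C_0^\infty(\mathbb{R}^6,\mathscr{V}_l)$ lies automatically in $Dom(\mathscr{D}_l) \cap Dom(\mathscr{D}_{l-1}^\ast)$, and by the explicit formulas for $\mathscr{D}_l$ in \eqref{ope-Dl} and for $\Theta_l$ given by \eqref{ad-Dl}--\eqref{expre:ad-D1}, both $\mathscr{D}_l f$ and $\mathscr{D}_{l-1}^\ast f$ remain compactly supported smooth sections, hence lie in $Dom(\mathscr{D}_l^\ast)$ and $Dom(\mathscr{D}_{l-1})$ respectively. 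Therefore $C_0^\infty(\mathbb{R}^6,\mathscr{V}_l) \subset Dom(\Box_l)$, and density in $L^2_\varphi(\mathbb{R}^6,\mathscr{V}_l)$ follows.

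Non-negativity is straightforward: for $f \in Dom(\Box_l)$ one moves adjoints to obtain
$$\langle \Box_l f, f\rangle_\varphi = \|\mathscr{D}_l f\|_\varphi^2 + \|\mathscr{D}_{l-1}^\ast f\|_\varphi^2 \geq 0.$$
For the self-adjoint and closed structure, the core tool is von Neumann's theorem: for any closed densely defined operator $T$ between Hilbert spaces, $T^\ast T$ is self-adjoint and non-negative with domain $\{u \in Dom(T): Tu \in Dom(T^\ast)\}$. Since $\mathscr{D}_l$ and $\mathscr{D}_{l-1}^\ast$ are closed densely defined (as noted in the text following \eqref{eq:Dom}), applying this to $T=\mathscr{D}_l$ and $T=\mathscr{D}_{l-1}^\ast$ shows that $\mathscr{D}_l^\ast\mathscr{D}_l$ and $\mathscr{D}_{l-1}\mathscr{D}_{l-1}^\ast$ are each self-adjoint and non-negative with exactly the two pairs of domain conditions appearing in the statement.

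The main obstacle is that the sum of two unbounded self-adjoint operators is not automatically self-adjoint. I would circumvent this by passing to the associated sesquilinear form
$$Q(f,g) := \langle \mathscr{D}_l f, \mathscr{D}_l g\rangle_\varphi + \langle \mathscr{D}_{l-1}^\ast f, \mathscr{D}_{l-1}^\ast g\rangle_\varphi,$$
defined on $Dom(\mathscr{D}_l) \cap Dom(\mathscr{D}_{l-1}^\ast)$. Because both $\mathscr{D}_l$ and $\mathscr{D}_{l-1}^\ast$ are closed, $Q$ is closed and non-negative, so Friedrichs' representation theorem produces a unique self-adjoint non-negative operator $T$ whose domain consists of those $f$ in the form domain for which $g \mapsto Q(f,g)$ extends as a bounded conjugate linear functional on $L^2_\varphi$. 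Finally I would check that $Dom(T)$ coincides with the domain described in the proposition: if $f \in Dom(T)$, then writing $Q(f,g) = \langle \mathscr{D}_{l-1}^\ast f, \mathscr{D}_{l-1}^\ast g\rangle_\varphi + \langle \mathscr{D}_l f, \mathscr{D}_l g\rangle_\varphi$ and testing against $g \in Dom(\mathscr{D}_{l-1}^\ast) \cap \ker\mathscr{D}_l$ and then $g \in Dom(\mathscr{D}_l) \cap \ker\mathscr{D}_{l-1}^\ast$ (using the complex identities $\mathscr{D}_l\mathscr{D}_{l-1}=0$ and its dual $\mathscr{D}_{l-1}^\ast\mathscr{D}_l^\ast=0$ to decouple cross terms) forces $\mathscr{D}_{l-1}^\ast f \in Dom(\mathscr{D}_{l-1})$ and $\mathscr{D}_l f \in Dom(\mathscr{D}_l^\ast)$, with $Tf = \Box_l f$. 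The converse inclusion is immediate from the definition of $\Box_l$.
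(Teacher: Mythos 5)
Your proposal is correct, but note that the paper does not actually prove Proposition \ref{box-dense}: it invokes it as a general functional-analytic fact due to Gaffney, citing \cite{gaffney}, \cite[Proposition 4.2.3]{ChenShaw} and \cite[Proposition 3.1]{Wang2017}, and omits the proof. What you have written is essentially the standard argument behind that citation (density via $C_0^\infty$, the closed non-negative form $Q$ and the Friedrichs/Kato representation theorem, then identification of the form-operator domain), so in substance you are reconstructing the proof of the quoted result rather than diverging from the paper. Two points are worth tightening if you write it out in full. First, in the decoupling step it is not enough to test only against $g\in Dom(\mathscr{D}_{l-1}^\ast)\cap\ker\mathscr{D}_l$ and $g\in Dom(\mathscr{D}_l)\cap\ker\mathscr{D}_{l-1}^\ast$: to conclude $\mathscr{D}_{l-1}^\ast f\in Dom\big((\mathscr{D}_{l-1}^\ast)^\ast\big)=Dom(\mathscr{D}_{l-1})$ you need the bound $|\langle\mathscr{D}_{l-1}^\ast f,\mathscr{D}_{l-1}^\ast g\rangle_\varphi|\le \|h\|_\varphi\|g\|_\varphi$ for \emph{all} $g\in Dom(\mathscr{D}_{l-1}^\ast)$, which follows by splitting $g$ along $L^2_\varphi=\ker\mathscr{D}_{l-1}^\ast\oplus\overline{R(\mathscr{D}_{l-1})}$ and using $\overline{R(\mathscr{D}_{l-1})}\subset\ker\mathscr{D}_l$ (and dually $\overline{R(\mathscr{D}_l^\ast)}\subset\ker\mathscr{D}_{l-1}^\ast$ for the other term); this is exactly where the complex property enters, so make the reduction explicit. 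Second, the case $l=3$, where $\Box_3=\mathscr{D}_2\mathscr{D}_2^\ast$ has only one summand, is handled directly by von Neumann's theorem applied to $T=\mathscr{D}_2^\ast$, as you note in passing; it deserves a sentence so the statement is covered for all $l$. With these details your argument is complete and matches the cited Gaffney/Chen--Shaw proof.
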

This general fact from functional analysis  essentially dues to Gaffney \cite{gaffney} (See also \cite[Proposition 4.2.3]{ChenShaw}  \cite[Proposition 3.1]{Wang2017}). So we omit its proof here.
\vskip 2mm
\emph{Proof of Theorem \ref{Thm-solution}}.
$(1)$ Theorem \ref{Thm:k-monogenic l2e} implies that
$$\frac{1}{C}\left\|h\right\|^2_\varphi\leq\left\|\mathscr{D}_{l-1}^\ast h\right\|^2_\varphi+\left\|\mathscr{D}_{l}h\right\|^2_\varphi=\langle\Box_l h,h\rangle_\varphi\leq
\left\|\Box_l h\right\|_\varphi\left\|h\right\|_\varphi,$$
for $h\in Dom(\Box_l)$.
Thus $\Box_l$ is bounded from below and injective. Since $\Box_l$ is self-adjoint and closed, $Range\,\, \Box_l$ is a dense subset of $L^2_\varphi(\Bbb R^6, \mathscr{V}_{l})$ by Proposition \ref{box-dense}. For fixed $f\in L^2_\varphi(\Bbb R^6, \mathscr{V}_{l})$, we define the complex anti-linear functional
$$\lambda_f:\Box_l h\longrightarrow\langle f,h\rangle_\varphi,$$
which is well defined on the dense subset $Range \,\,\Box_l$ of $L^2_\varphi(\Bbb R^6, \mathscr{V}_{l})$, since
$$|\lambda_f(\Box_l h)|=|\langle f,h\rangle_\varphi|\leq\left\|f\right\|_\varphi\left\|h\right\|_\varphi\leq C\left\|f\right\|_\varphi\left\|\Box_l h\right\|_\varphi,$$
for  $h\in Dom \Box_l$. We  see that $\lambda_f$ is bounded on a dense subset and can be uniquely extended to the whole space $L^2_\varphi(\Bbb R^6, \mathscr{V}_{l})$. By the Riesz representation theorem, there exists a unique $F\in L^2_\varphi(\Bbb R^6, \mathscr{V}_{l})$ such that $\lambda_f(G)=\langle F,G\rangle_\varphi$ for any $G\in L^2_\varphi(\Bbb R^6, \mathscr{V}_{l})$ and $\left\|F\right\|_\varphi=|\lambda_f|\leq C\left\|f\right\|_\varphi$. So we have $\langle F,\Box_l h\rangle_\varphi=\langle f,h\rangle_\varphi$ for any $h\in Dom(\Box_l)$. This implies $F\in Dom(\Box_l^\ast)$ and $\Box_l^\ast F=f$. Since $\Box_l$ is self-adjoint, $F\in Dom(\Box_l)$ and $\Box_lF=f$.
We write $F= N_lf$. Then $\left\|N_lf\right\|_\varphi\leq C\left\|f\right\|_\varphi$.

$(2)$ Since $N_{l+1}f\in Dom(\Box_{l+1})$, we have $\mathscr{D}_{l}^\ast N_{l+1}f\in Dom(\mathscr{D}_{l}),\mathscr{D}_{l+1}N_{l+1}f\in Dom(\mathscr{D}_{l+1}^\ast)$ and
\begin{equation}\label{decom}
\mathscr{D}_{l}\mathscr{D}_{l}^\ast N_{l+1}f=f-\mathscr{D}_{l+1}^\ast\mathscr{D}_{l+1} N_{l+1}f,
\end{equation}
by $\Box_{l+1} N_{l+1} f=f$.
Because $\mathscr{D}_{l+1}f=0$ and $\mathscr{D}_{l+1}\mathscr{D}_{l}H=0$ for any $H\in Dom (\mathscr{D}_{l})$, the above identity implies $\mathscr{D}_{l+1}^\ast\mathscr{D}_{l+1} N_{l+1}f \in Dom(\mathscr{D}_{l+1})$ and
$$ \mathscr{D}_{l+1}\mathscr{D}_{l+1}^\ast\mathscr{D}_{l+1} N_{l+1}f=0,$$
 by $\mathscr{D}_{l+1}$ acting on both sides of \eqref{decom}. Then
$$0=\langle\mathscr{D}_{l+1}\mathscr{D}_{l+1}^\ast\mathscr{D}_{l+1} N_{l+1}f, \mathscr{D}_{l+1} N_{l+1}f\rangle_\varphi=\left\|\mathscr{D}_{l+1}^\ast\mathscr{D}_{l+1} N_{l+1}f\right\|^2_\varphi,$$
i.e., $\mathscr{D}_{l+1}^\ast\mathscr{D}_{l+1} N_{l+1}f=0$. Hence, by \eqref{decom}, we have
$$\mathscr{D}_{l}\mathscr{D}_{l}^\ast N_{l+1}f=f.$$
Moreover, we have $\mathscr{D}_{l}^\ast N_{l+1}f \perp \ker \mathscr{D}_{l}$ since
$\langle H,\mathscr{D}_{l}^\ast N_{l+1}f\rangle_\varphi=\langle\mathscr{D}_{l}H,N_{l+1}f\rangle_\varphi=0$ for any $H\in \ker\mathscr{D}_{l}$. The estimate \eqref{solution-estimate} follows from
$$\left\|\mathscr{D}_{l}^\ast N_{l+1}f\right\|_\varphi^2+\left\|\mathscr{D}_{l+1}N_{l+1}f\right\|_\varphi^2=\langle\Box_{l+1}N_{l+1}f,N_{l+1}f\rangle_\varphi\leq C\left\|f\right\|_\varphi^2.$$
The theorem is proved. \qed
 \vskip 2mm

\emph{Proof of Theorem \ref{poly-exact}}.
Note that $\left\|f\right\|_{\varphi}^2 < +\infty$ for $ f \in P(\Bbb R^{6},\mathscr{V}_{l+1})$, where  $\varphi=\left|x\right|^2$. So there exists $u \in L^2(\Bbb R^{6},\mathscr{V}_{l})$, such that
$\mathscr{D}_{l} u=f$ and $\mathscr{D}_{l-1}^\ast u=0$ by Theorem \ref{Thm-solution}. Consequently,
\begin{equation}\label{eq:box u}
\Box_{l}u=\Theta_{l-1} \mathscr{D}_l u+\mathscr{D}_{l-1}\Theta_{l-1}  u=\Theta_{l-1} f, \,\,\,\,\,\,\,\,l=1,2,3.
\end{equation}
in the sense of distributions, where $\mathscr{D}_{l+1}f=0$ and $\Theta_{l-1} f$ is a polynomial    by the expression of $ \mathcal{{D}}_{l}^*$ in \eqref{ad-Dl} and $\mathscr{P}$ in \eqref{Pf}.

On the other hand, $\Box_{l}$ is an elliptic differential operator of second order. This is because
 $$\langle \sigma(\Box_{l}) \xi,\xi\rangle=\langle\sigma_l \xi,\sigma_l \xi\rangle+\langle\sigma_{l-1}^\ast \xi,\sigma_{l-1}^\ast \xi\rangle,$$
 for  $\xi\in \mathscr{V}_{l}$,
where the inner product is the Euclidean inner product of $\mathscr{V}_{l}$ and $\sigma(\Box_{l})$ and $\sigma_l$ are symbols of   operators
 $\Box_{l}$ and $\mathscr{D}_l$ (cf. \eqref{def:symbol}), respectively. We see that
$$\ker\sigma(\Box_{l})=\ker\sigma_l\cap \ker\sigma_{l-1}^\ast={\rm Im}\sigma_{l-1}\cap \ker\sigma_{l-1}^\ast=\{0\},$$
 by Proposition \ref{pro:elliptic complex}.
 Thus  we know the solution $u$ of \eqref{eq:box u}   is real analytic by applying Theorem 6.6.1 in \cite{morrey} to elliptic differential operator $\Box_{l}$ of second order with real analytic coefficients. We write the Taylor expression of $u$
as $
u=\sum_{m=0}^{\infty}u_m,
$
 where $u_m$ is a polynomial of homogeneous degree $m$. Suppose $f$ is a polynomial of degree $L$. Since $\mathscr{D}_l$ is a first order differential operator with constant coefficients, then $\mathscr{D}_l u_m$ is a polynomial of degree $m-1$ or vanishes. Hence, $\mathscr{D}_l u=f$ implies that
  $$\mathscr{D}_l \bigg(\sum_{m=0}^{L+1}u_m\bigg)=f.$$
So we get a polynomial solution to $\mathscr{D}_l u=f$ if $\mathscr{D}_{l+1} f=0$. The result follows. \qed

\section{The ellipticity of $k$-monogenic-complex}\label{section5}
Recall that the symbol of the matrix differential operator $\mathcal{D}=\sum_{|\alpha|\leq m}A_{\alpha_1\ldots\alpha_N}(x)\partial_{x_1}^{\alpha_1}\ldots\partial_{x_N}^{\alpha_N}: C^{\infty}(\Omega, W)\longrightarrow C^{\infty}(\Omega, W^{\prime})$ at $(x,v)$
  is defined to be
\begin{equation}\label{def:symbol}
\sigma(\mathcal{D})_{(x,v)}:=\sum_{|\alpha|=m}A_{\alpha_1\ldots\alpha_N}(x)\bigg(\frac{v_1}{i}\bigg)^{\alpha_1}
\ldots\bigg(\frac{v_N}{i}\bigg)^{\alpha_N}:W\longrightarrow W^{\prime},
\end{equation}
where $\Omega$ is a domain in  $\mathbb{R}^N$ and $A_{\alpha_1\ldots\alpha_N}$ is a linear transformation from vector space $W$ to $W^{\prime}$, $v\in \Bbb R^{N}$.
A differential complex
$$C^\infty(\Omega,W_0)\overset{\mathcal{D}_0}\longrightarrow\ldots\overset{\mathcal{D}_{n-1}}\longrightarrow C^\infty(\Omega,W_n)$$
 is called \emph{elliptic} if its symbol sequence
$$W_0\overset{\sigma(\mathcal{D}_0)_{(x,v)}}\longrightarrow\ldots\overset{\sigma(\mathcal{D}_{n-1})_{(x,v)}}\longrightarrow W_n$$
is exact for any $x\in\Omega$, $v\in\mathbb{R}^N\backslash\{0\}$, that is
 $\ker\sigma(\mathcal{D}_l)_{(x,v)}={\rm Im}\, \sigma(\mathcal{D}_{l-1})_{(x,v)}.$

\emph{Proof of Proposition \ref{pro:elliptic complex}}.\,\,Let us prove the symbol sequence
\begin{equation}\label{sigma-exact}
0\longrightarrow
\mathscr{V}_0\overset{\sigma_0}\longrightarrow\ldots\overset{\sigma_{2}}\longrightarrow
\mathscr{V}_3\longrightarrow 0,
\end{equation}
is exact  for fixed $x\in\Bbb{R}^6$ and $v\in\Bbb{R}^6\backslash\{0\}$,  where $\sigma_l:=\sigma(\mathscr{D}_l)_{(x,v)}$. Note that
\begin{equation}\label{sigmaD}
(\sigma_lf)^{A_1 \ldots A_{l+1}}_{B_2 \ldots B_{k-l}}=\sum_{B_{1}=1}^{4}M^{B_{1}[A_1}f^{A_2 \ldots
A_{l+1}]}_{B_1 \ldots B_{k-l}},
\end{equation}
with $$
 M^{AB}:=\frac{1}{i}\left(
 \begin{matrix}
   0 & iv_0+v_5 & v_3+iv_4 & v_1+iv_2 \\
   -iv_0-v_5 & 0 & v_1-iv_2 & -v_3+iv_4 \\
   -v_3-iv_4 & -v_1+iv_2 & 0 & -iv_0+v_5 \\
   -v_1-iv_2 & v_3-iv_4 &iv_0-v_5 & 0
  \end{matrix}
  \right)
$$
a antisymmetric matrix. Since $\sigma_{l+1}\circ\sigma_l=0$ follows from $\mathscr{D}_{l+1}\circ \mathscr{D}_l=0$,  we only need to prove
$\sigma_0$ is injective, $\ker\sigma_l\subseteq Im \sigma_{l-1}$, $l=1,2$, and $\sigma_2$ is surjective. \par

$(1)$ For any $\xi\in \ker\sigma_0$, we have
 \begin{equation*}
 \sigma_0(\xi)^{A_1}_{B_2 \ldots B_{k}}=\sum_{B_{1} } M^{B_1 A_1} \xi_{B_1 B_2\ldots B_k}=0,
 \end{equation*}
for any fixed $A_1,B_2, \ldots, B_{k}$. It is known that the determinant of $M$ is nonvanishing for any $v\neq 0$ since $M\overline{M}^T=|v|^2 I_{4\times 4}$ which can be deduced from  \cite[(2.5) and Proposition 2.1]{wangkang3}. This essentially comes from the fact that $\mathscr{D}_0$ is the Dirac operator. So we have $\xi_{B_1 \ldots B_{k}}=0$ for any $B_1,\ldots, B_k$.
 Hence, $\sigma_0$ is injective. \par

 $(2)$ For any $\xi\in \ker\sigma_1$, let $\Xi\in \mathscr{V}_0$ be given by
$$\Xi_{B_1\ldots B_{k}}:=\sum_{E } M^{-1}_{E(B_{1}}\xi^{E}_{B_2\ldots B_{k})},$$
where $M^{-1}$ is the inverse of $M$. Then
\begin{equation}\label{sigma0F}
\begin{aligned}
\sigma_{0}(\Xi)^{A_1}_{B_2\ldots B_{k}}=&\sum_{B_{1}}M^{B_{1}A_1}\Xi_{B_1B_2\ldots B_{k}}\\
=&\frac{1}{k}\sum_{E,B_{1}}\bigg[ M^{-1}_{EB_{1}}M^{B_{1}A_1}\xi^{E}_{B_2\ldots B_{k}}
+\sum_{s=2}^{k}M^{-1}_{EB_{s}}M^{B_{1}A_1}\xi^{E}_{B_1\ldots \widehat{B_s}\ldots B_{k}}\bigg],
\end{aligned}
\end{equation}
by using Lemma \ref{lem:symzhankai} (1). Since $\xi\in \ker\sigma_1$,   we have
\begin{equation}\label{l=1-antisym}
\begin{aligned}
0=2\sigma_{1}(\xi)^{A_1E}_{B_2\ldots \widehat{B_s}\ldots B_{k}}
=\sum_{B_{1}}M^{B_{1}A_1}\xi^{E}_{B_1\ldots \widehat{B_s}\ldots B_{k}}-\sum_{B_{1}}M^{B_{1}E}\xi^{A_1}_{B_1\ldots \widehat{B_s}\ldots B_{k}},
\end{aligned}
\end{equation}
by Lemma \ref{lem:symzhankai} (2). Apply \eqref{l=1-antisym} to \eqref{sigma0F} to get
\begin{eqnarray*}
\sigma_{0}(\Xi)^{A_1}_{B_2\ldots B_{k}}&=&\frac{1}{k}\bigg[\sum_{E}\delta_{E}^{A_1}\xi^{E}_{B_2\ldots
B_{k}}+\sum_{s=2}^{k}\sum_{B_{1}}\delta_{B_s}^{B_{1}}\xi^{A_1}_{B_1\ldots \widehat{B_s}\ldots B_{k}}\bigg] =\xi^{A_1}_{B_2\ldots B_{k}},
\end{eqnarray*}
since $M^{-1}$ is the inverse of $M$. Thus $\sigma_{0}\Xi=\xi$ and so $\ker \sigma_1\subseteq {\rm Im}\sigma_0$.\par

$(3)$ For any $\xi\in \ker\sigma_2$, set
$$\Xi^{A}_{B_1\ldots B_{k-1}}:=\sum_{E } M^{-1}_{E(B_{1}}\xi^{A E}_{B_2\ldots B_{k-1})}.$$
We claim $\Xi\in \mathscr{V}_{1}$. Then
\begin{equation}\label{sigma1-F}
\begin{aligned}
\sigma_{1}(\Xi)^{A_1A_2}_{B_2\ldots B_{k-1}}=&\sum_{B_{1}}M^{B_{1}[A_1}\Xi^{A_2]}_{B_1B_2\ldots B_{k-1}}\\
=&\frac{1}{k-1}\sum_{E,B_{1}}\bigg[M^{-1}_{EB_{1}}M^{B_{1}[A_1}\xi^{A_2]E}_{B_2\ldots B_{k-1}}+\sum_{s=2}^{k-1}M^{-1}_{EB_{s}}M^{B_{1}[A_1}\xi^{A_2]E}_{B_1\ldots \widehat{B_s}\ldots B_{k-1}}\bigg],
 \end{aligned}
\end{equation}
by Lemma \ref{lem:symzhankai} (1). Since $\xi\in \ker\sigma_2$, then for fixed $s\in \{2,\ldots,k-1\}$, we have
\begin{equation}\label{antisym-2}
\begin{aligned}
0=&3\sigma_2 (\xi)^{A_1A_2E}_{B_2\ldots \widehat{B_s}\ldots B_{k-1}}=3\sum_{B_{1}}M^{B_{1}[A_1}\xi^{A_2 E]}_{B_1\ldots \widehat{B_s}\ldots B_{k-1}}\\
=&2\sum_{B_{1}}M^{B_{1}[A_1}\xi^{A_2]E}_{B_1\ldots \widehat{B_s}\ldots B_{k-1}}+\sum_{B_{1}}M^{B_{1}E}\xi^{A_1A_2}_{B_1\ldots \widehat{B_s}\ldots B_{k-1}},
\end{aligned}
\end{equation}
by \eqref{sigmaD} and Lemma \ref{lem:symzhankai} (2). Apply \eqref{antisym-2} to \eqref{sigma1-F} to get
\begin{eqnarray*}
\sigma_{1}(\Xi)^{A_1 A_2}_{B_2\ldots B_{k-1}}&=&\frac{1}{k-1}\bigg(\sum_{E}\delta_{E}^{[A_1}\xi^{A_2]E}_{B_2\ldots
B_{k-1}}-\frac{1}{2}\sum_{s=2}^{k-1}\sum_{B_{1}}\delta_{B_s}^{B_{1}}\xi^{A_1A_2}_{B_1\ldots \widehat{B_s}\ldots B_{k-1}}\bigg)
=\frac{-k}{2(k-1)}\xi^{A_1A_2}_{B_2\ldots B_{k-1}},
\end{eqnarray*}
 by $M^{-1}$   inverse to $M$ again. Thus
  $\sigma_{1}\bigg(\frac{2(k-1)}{-k}\Xi\bigg)=\xi.$

  It remains to show the claim $\mathscr C (\Xi)=0$. Note that for any fixed $B_1,\ldots, B_{k-2}$,
\begin{equation}\label{g2inV1}
\begin{aligned}
(k-1)\mathscr C (\Xi)_{B_1\ldots B_{k-2}}&=(k-1)\sum_{A_1,A_2}M^{-1}_{A_2(B_1}\xi^{A_1A_2}_{B_2\ldots B_{k-2}A_1)}\\
=&\sum_{A_1,A_2}\bigg[M^{-1}_{A_2A_1}\xi^{A_1A_2}_{B_1\ldots B_{k-2}}+\sum_{s=1}^{k-2}M^{-1}_{A_{2}B_s}\xi^{A_1A_2}_{B_1\ldots A_{1} \ldots}\bigg]
=\sum_{A_1,A_2}M^{-1}_{A_2A_1}\xi^{A_1A_2}_{B_1\ldots B_{k-2}},
\end{aligned}
\end{equation}
by $\mathscr C \xi=0$. Since $\det M\neq 0$, $C (\Xi) =0$ follows from
\begin{equation*}\label{f-signa2}
\begin{aligned}(k-1)\sum_{B_{1}}M^{EB_{1}}\mathscr C (\Xi)_{B_1\ldots B_{k-2}}=&\sum_{A_1,A_2,B_{1}}M^{B_{1}E}M^{-1}_{A_2A_1}\xi^{A_1A_2}_{B_1\ldots B_{k-2}}\\
= &-\sum_{A_1,A_2}\sum_{B_{1}}\left(M^{B_{1}A_1}\xi^{A_2 E}_{B_1\ldots B_{k-2}}-M^{B_{1}A_2}\xi^{A_1E}_{B_1\ldots B_{k-2}}\right)M^{-1}_{A_2A_1}\\
=&\sum_{A_2,B_{1}}\delta^{B_{1}}_{A_2}\xi^{A_2 E}_{B_1\ldots B_{k-2}}+\sum_{A_1,B_{1}}\delta_{A_1}^{B_{1}}\xi^{A_1E}_{B_1\ldots B_{k-2}}
=2\sum_{B_{1}}\xi^{B_{1}E}_{B_1\ldots B_{k-2}}=0,
\end{aligned}
\end{equation*}for all indices $E,B_2,\ldots,B_{k-2}$,
by using \eqref{antisym-2},  $M$ antisymmetric and $\mathscr C \xi=0$. So $\Xi\in \mathscr{V}_1$.  $\ker \sigma_{2}\subseteq {\rm Im} \sigma_{1}$ is proved. \par

$(4)$  For any $\xi\in \ker\sigma_3=\mathscr{V}_{3}$, we do not know whether $\sum_{E}M^{-1}_{E(B_{1}}\xi^{A_1A_2 E}_{B_2\ldots B_{k-2})}$  belongs to $\mathscr{V}_2$ or not. But note that the diagram
\begin{equation*}
\begin{CD}\odot^{k-1}\Bbb{C}^4\otimes
\wedge^3\Bbb{C}^4  @>4\widetilde{\sigma}>>    \odot^{k-2}\Bbb{C}^4 \otimes\wedge^4\Bbb{C}^4 @>>>  0\\
@VV\mathscr{C}V                                                               @VV\mathscr{C}V\\
\mathscr{V}_2                                        @>-3\sigma_2>>   \mathscr{V}_3 @>>>  0
\end{CD}
\end{equation*}
is commutative, i.e.,  $-3\sigma_2\mathscr{C} = 4\mathscr{C}\widetilde{\sigma} $, where $\widetilde{\sigma}:\,\,\odot^{k-1}\Bbb{C}^4\otimes\wedge^3\Bbb{C}^4 \longrightarrow \odot^{k-2}\Bbb{C}^4 \otimes\wedge^4\Bbb{C}^4$ is given by
\begin{equation}\label{sigmaD-2}
 (\widetilde{\sigma} \widetilde{\Xi})^{A_1 \ldots A_{4}}_{B_2 \ldots B_{k-1}}=\sum_{B_{1}}M^{B_{1}[A_1}\widetilde{\Xi}^{A_2 \ldots A_{4}]}_{B_1 \ldots B_{k-1}}.
\end{equation} This is because
\begin{equation*}
\begin{aligned}
-3(\sigma_2\mathscr{C}\widetilde{\Xi})^{A_1A_2A_3}_{B_1\ldots B_{k-3}}=&-3\sum_{E,F}M^{E [A_1}\widetilde{\Xi}^{\mid F\mid A_2A_3]}_{B_1\ldots B_{k-3}EF}\\
=&-\sum_{E,F}\bigg(M^{E A_1}\widetilde{\Xi}^{ FA_2A_3}_{B_1\ldots B_{k-3}EF}-M^{E A_2}\widetilde{\Xi}^{F A_1 A_3}_{B_1\ldots B_{k-3}EF}-M^{EA_3}\widetilde{\Xi}^{F A_2 A_1}_{B_1\ldots B_{k-3}EF}\bigg)\\
=&=4\sum_{E,F}M^{E[F}\widetilde{\Xi}^{A_1A_2A_3]}_{B_1\ldots B_{k-3}FE}=4(\mathscr{C}\widetilde{\sigma}\widetilde{\Xi})^{A_1A_2A_3}_{B_1\ldots B_{k-3}}
\end{aligned}
\end{equation*}
by $\sum_{E,F}M^{EF}\widetilde{\Xi}^{A_1A_2A_3}_{B_1\ldots B_{k-3}FE}=0$ since $\xi$ is symmetric in $E,F$ while $M$ is antisymmetric in $E,F$.

Now we construct an inverse image of $\sigma_2$ by an inverse image of $\widetilde{\sigma} $. Suppose that $A_1,\ldots,A_4$ are different. There must be at least one of $A_1, A_2, A_3, A_4$ equal  to one of $B_1,\ldots, B_{k-2}$. Without loss of generality, we assume $A_1=B_{k-2}$.
For $\xi\in \mathscr{V}_{3}$, we construct a lifting $\widetilde{\xi}\in \odot^{k-2}\Bbb{C}^4 \otimes  \wedge^4\Bbb{C}^4$ as follows
\begin{equation}\label{pull up}
\widetilde{\xi}^{A_1A_2A_3A_4}_{B_1\ldots B_{k-2}}=\xi^{A_2A_3A_4}_{B_1\ldots B_{k-3}},
\end{equation}
when $A_1=B_{k-2}$. $\widetilde{\xi}$ is well defined because if there also exists $A_2=B_{k-3}$, we must have $\widetilde{\xi}^{A_1A_2A_3A_4}_{B_1\ldots B_{k-2}}=-\xi^{A_1A_3A_4}_{B_1\ldots B_{k-4}B_{k-2}}$ by $\xi^{A_2A_3A_4}_{B_1\ldots B_{k-3}}=-\xi^{A_1A_3A_4}_{B_1\ldots B_{k-4}B_{k-2}}$. The latter identity follows from
$$0=\sum_{E}\xi^{EA_3A_4}_{B_1\ldots B_{k-4}E}=\sum_{E=A_1,A_2}\xi^{EA_3A_4}_{B_1\ldots B_{k-4}E},$$
by $\mathscr{C} \xi=0$ for $\xi\in\mathscr{V}_{3}=\ker\sigma_3$. We have
\begin{equation}\label{F-f}
\mathscr C( \widetilde{\xi})^{A_1A_2A_3}_{B_1\ldots B_{k-3}}=\sum_{C}\widetilde{\xi}^{CA_1A_2A_3}_{B_1\ldots B_{k-3} C}=\xi^{A_1A_2A_3}_{B_1\ldots B_{k-3}}
\end{equation}
for any fixed $A_1,A_2,A_3,B_1,\ldots, B_{k-3}$. Now define $\widetilde{\Xi}\in\odot^{k-1}\Bbb{C}^4\otimes \wedge^3\Bbb{C}^4$ by
\begin{equation*}
\widetilde{\Xi}^{E_2A_1A_2}_{B_1\ldots B_{k-1}}:=\sum_{E_1}M^{-1}_{E_1(B_1}\widetilde{\xi}^{E_2 A_1A_2 E_1}_{B_2\ldots B_{k-1})},
\end{equation*}
and
$\Xi:=\mathscr C \widetilde{\Xi}$. Then
\begin{equation*}
\Xi^{A_1A_2}_{B_1\ldots B_{k-2}}=\sum_{E_1,E_2}M^{-1}_{E_1(B_1}\widetilde{\xi}^{E_2 A_1A_2 E_1}_{B_2\ldots B_{k-2}E_2)}.
\end{equation*}
and $\Xi\in \mathscr{V}_2$, since $\mathscr C \circ\mathscr C \widetilde{\Xi}=0$. Now we show $\sigma_{2}\Xi=C\xi$ for some constant $C\neq 0$.
\begin{equation}\label{S2g=f}
\begin{aligned}
(k-1)(\sigma_{2}\Xi)^{A_1A_2A_3}_{B_2\ldots B_{k-2}}=&(k-1)\sum_{B_{1}}M^{B_{1}[A_1}\Xi^{A_2A_3]}_{B_1\ldots B_{k-2}}\\
=&\sum_{B_{1},E_1,E_2}\bigg[M^{B_{1}[A_1}M^{-1}_{E_1E_2}\widetilde{\xi}^{A_2A_3]E_2 E_1}_{B_1\ldots B_{k-2}}
+\sum_{s=1}^{k-2}M^{B_{1}[A_1}M^{-1}_{E_1 B_s}\widetilde{\xi}^{A_2A_3]E_2 E_1}_{B_1\ldots \widehat{B}_s \ldots B_{k-2}E_2}\bigg]\\
=&\sum_{B_{1},E_1,E_2}M^{-1}_{E_1E_2}M^{B_{1}[A_1}\widetilde{\xi}^{A_2A_3]E_2 E_1}_{B_1\ldots B_{k-2}}
-\sum_{B_{1},E_1}M_{E_1B_{1}}^{-1}M^{B_{1}[A_1}\xi^{A_2 A_3]E_1}_{B_2\ldots B_{k-2}}\\
&-\sum_{s=2}^{k-2}\sum_{B_{1},E_1}M^{-1}_{E_1 B_s}M^{B_{1}[A_1}\xi^{A_2A_3]E_1}_{B_1\ldots \widehat{B}_s\ldots B_{k-2}}:=\Sigma_1+\Sigma_2+\Sigma_3,
\end{aligned}
\end{equation}
by expanding symmetrization and using \eqref{F-f}.
It is easy to see that
\begin{equation}\label{sur-sigma3}
\Sigma_2=-\sum_{E_1}\delta_{E_1}^{[A_1}\xi^{A_2 A_3]E_1}_{B_2\ldots B_{k-2}}=-\xi^{A_1A_2 A_3}_{B_2\ldots B_{k-2}}.
\end{equation}

On the other hand, it follows from  $\xi \in \mathscr{V}_{3}=ker\sigma_{3}$, i.e. $4(\sigma_{3}\xi)^{A_1E_1A_2 A_3}_{B_2\ldots\widehat{B}_s \ldots B_{k-2}}=0$, that
\begin{equation*}\label{simga3-f}
0= 4\sum_{B_{1}}M^{B_{1}[A_1}\xi^{E_1A_2A_3]}_{B_1\ldots \widehat{B}_s\ldots B_{k-2}}
= 3\sum_{B_{1}}M^{B_{1}[A_1}\xi^{A_2A_3]E_1}_{B_1\ldots \widehat{B}_s\ldots B_{k-2}}-\sum_{B_{1}}M^{B_{1}E_1}\xi^{A_1A_2A_3}_{B_1\ldots \widehat{B}_s\ldots B_{k-2}}.
\end{equation*}
Apply this identity to $\Sigma_3$ in \eqref{S2g=f} to get
\begin{equation}\label{sur-sigma2}
\begin{aligned}
\Sigma_3=&-\frac{1}{3}\sum_{s=2}^{k-2}\sum_{B_{1},E_1}M^{-1}_{E_1 B_s}M^{B_{1}E_1}\xi^{A_1A_2A_3}_{B_1\ldots \widehat{B}_s\ldots B_{k-2}}
=-\frac{1}{3}\sum_{s=2}^{k-2}\sum_{B_{1}}\delta_{B_s}^{B_{1}}\xi^{A_1A_2A_3}_{B_1\ldots \widehat{B}_s\ldots  B_{k-2}}\\
=&\frac{3-k}{3}\xi^{A_1A_2A_3}_{B_2\ldots B_{k-2}},
\end{aligned}
\end{equation} by $M^{-1}$   inverse to $M$ again.
Note that
 $M^{B_{1}[A_1}\widetilde{\xi}^{A_2A_3 E_1E_2]}_{B_1\ldots B_{k-2}}=0 $
by $\wedge^5\Bbb C^4=\{0\}$, which implies
\begin{equation}\label{M-F}
\begin{aligned}
&M^{B_{1}E_1}\widetilde{\xi}^{A_2A_3 A_1E_2}_{B_1\ldots B_{k-2}}+M^{B_{1}E_2}\widetilde{\xi}^{A_2A_3 E_1A_1}_{B_1\ldots B_{k-2}} =3M^{B_{1}[A_1}\widetilde{\xi}^{A_2A_3]E_1E_2}_{B_1\ldots B_{k-2}}.
\end{aligned}
\end{equation}
Then apply \eqref{M-F} to $\Sigma_1$ in \eqref{S2g=f} to get
\begin{equation}\label{sur-sigma1}
\begin{aligned}
\Sigma_1&=\frac{1}{3}\sum_{B_{1},E_1,E_2}M^{-1}_{E_1E_2}\left(M^{B_{1}E_1}\widetilde{\xi}^{A_2A_3 A_1E_2}_{B_1\ldots
B_{k-2}}+M^{B_{1}E_2}\widetilde{\xi}^{A_2A_3 E_1A_1}_{B_1\ldots B_{k-2}}\right)\\
&=\frac{1}{3}\sum_{B_{1},E_2}\delta^{B_{1}}_{E_2}\widetilde{\xi}^{A_2A_3 A_1E_2}_{B_1\ldots B_{k-2}}
-\frac{1}{3}\sum_{B_{1},E_1}\delta^{B_{1}}_{E_1}\widetilde{\xi}^{A_2A_3 E_1A_1}_{B_1\ldots B_{k-2}} =-\frac{2}{3}\xi^{A_1A_2A_3}_{B_2\ldots B_{k-2}},
\end{aligned}
\end{equation}
by \eqref{F-f}.
Now apply \eqref{sur-sigma3}, \eqref{sur-sigma2} and \eqref{sur-sigma1} to \eqref{S2g=f} to get
\begin{equation*}
\begin{aligned}
(k-1)(\sigma_{2}\Xi)^{A_1A_2A_3}_{B_2\ldots B_{k-2}}=-\frac{k+2}{3}\xi^{A_1A_2A_3}_{B_2\ldots B_{k-2}}.
\end{aligned}
\end{equation*}
Hence, $\sigma_2$ is surjective.   Proposition \ref{pro:elliptic complex} is proved.\qed

\vskip 3mm

{\bf Data availability statement:}
This manuscript has no associated data.

\end{document}